\newcommand{\F}{\mathcal{F}}
\newcommand{\bP}{\mathbf{b}\mathcal{P}}
\newsavebox\dotbox
\sbox{\dotbox}{\(\displaystyle\bigodot\)}
\newcommand{\beao}{\begin{eqnarray*}}
\newcommand{\eeao}{\end{eqnarray*}\noindent}
\newcommand{\beam}{\begin{eqnarray}}
\newcommand{\eeam}{\end{eqnarray}\noindent}
\def\bbr{{\Bbb R}}   
\def\bbn{{\Bbb N}}
\def\bbf{{\Bbb F}}
\def\bbq{{\Bbb Q}}
\def\bbs{{\Bbb S}}
\newcommand{\eps}{{\varepsilon}}
\newcommand{\vp}{{\varphi}}
\newcommand{\Var}{{\rm Var}}
\newcommand{\ov}{\overline}
\newcommand{\un}{\underline}
\newcommand{\wh}{\widehat}
\newcommand{\wt}{\widetilde}
\newcommand{\ph}{\varphi}
\newcommand{\mal}{\stackrel{\mbox{\tiny$\bullet$}}{}}
\newcommand{\auf}{[\![}
\newcommand{\zu}{]\!]}
\DeclareMathOperator*{\uplim}{up-lim}
\theoremstyle{theorem}
\newtheorem{theorem}{Theorem}[section]
\newtheorem{lemma}[theorem]{Lemma}
\newtheorem{proposition}[theorem]{Proposition}
\newtheorem{note}[theorem]{Note}
\newtheorem{remark}[theorem]{Remark}
\newtheorem{example}[theorem]{Example}
\newtheorem{assumption}[theorem]{Assumption}
\theoremstyle{definition}
\newtheorem{definition}[theorem]{Definition}
\numberwithin{equation}{section}
\title{The fundamental theorem of asset pricing with and without transaction costs}
\author{Christoph Kühn\thanks{Institute of Mathematics, Goethe University Frankfurt, D-60054 Frankfurt a.M., Germany, e-mail: ckuehn@math.uni-frankfurt.de\\
{\em Acknowledgments.} I would like to thank Christoph Czichowsky and Alexander Molitor for fruitful discussions and
two anonymous referees for their valuable comments and suggestions.}} 
\date{} 
\begin{document}
	\maketitle
	\begin{abstract}
	We prove a version of the fundamental theorem of asset pricing~(FTAP) in continuous time that is based on the strict no-arbitrage condition and that is applicable to both frictionless markets and markets with proportional transaction costs. We consider a market with a single risky asset whose ask price process is higher than or equal to its bid price process.
    Neither the concatenation property of the set of wealth processes, that is used in the proof of the frictionless FTAP,
    nor some boundedness property of the trading volume of admissible strategies usually argued with in models with a nonvanishing bid-ask spread need to be satisfied in our 
    model. 
    \end{abstract}

\begin{tabbing}
{\footnotesize Keywords:} fundamental theorem of asset pricing, proportional transaction costs,\\ strict no-arbitrage, no unbounded profit with bounded risk\\

{\footnotesize Mathematics Subject Classification (2010): 
91B60, 91G10, 60H05, 60G44
}\\


 

\end{tabbing}	
	
	\section{Introduction}
In frictionless financial market models with finitely many assets and a single probability measure, the arbitrage theory can be considered to be fully understood in principle. The fundamental theorem of asset pricing by Delbaen and Schachermayer~\cite{delbaen.schachermayer.1998} states 
that a market model satisfies no free lunch with vanishing risk~(NFLVR) iff there exists an equivalent probability measure under which discounted asset prices are  
$\sigma$-martingales. A variant of this theorem by Yan~\cite{yan.1998} provides even true martingales by considering a credit line that is a multiple of the sum of all asset prices in the market. For a detailed discussion of the arbitrage theory in frictionless markets we refer the reader to Delbaen and Schachermayer~\cite{delbaen.schachermayer.2006} and Eberlein and Kallsen~\cite[Subsection~11.7]{eberlein.kallsen}.

The picture for models with proportional 
transactions costs that generalize frictionless markets by allowing for bid-ask spreads is different. 
The picture is clear-cut in finite discrete time and for a finite probability space: 
in a general ``currency model'', Kabanov and Stricker~\cite{kabanov.stricker.2001} show that no-arbitrage~(NA) is equivalent to the existence of a so-called consistent price system~(CPS), which is a multidimensional martingale under the objective probability measure taking values within the dual of the cone of solvent portfolios at each point in time. In the special case of only one risky asset that we consider in the present paper,
a CPS is a pair of an equivalent probability measure and a martingale under this measure that lies between the (discounted) bid and the ask price of the risky asset. 
%
%
For infinite probability spaces, this equivalence
fails; Schachermayer~\cite[Example 3.1]{schachermayer2004fundamental} provides an example for an arbitrage-free market which allows an approximate arbitrage, i.e., a nonzero and nonnegative portfolio which is the
limit in probability of a sequence of portfolios attainable from zero endowment, and
consequently a CPS cannot exist.
 This raises the obvious question under which stronger no-arbitrage conditions the existence of a CPS can be
guaranteed. Schachermayer~\cite{schachermayer2004fundamental} introduces the concept of robust no-arbitrage~(${\rm NA}^r$)  
-- a no-arbitrage condition which is robust with respect to small changes in the bid-ask
spreads. Loosely speaking, if the bid-ask spread (of a pair of assets) does not vanish,
there have to exist more favorable bid–ask prices, leading to a smaller spread, such
that the modified market still satisfies NA. Schachermayer~\cite{schachermayer2004fundamental} shows that ${\rm NA}^r$ implies
that the set of terminal portfolios attainable from zero endowment is closed in probability, and that ${\rm NA}^r$ is equivalent to the existence of a strictly consistent price system~(SCPS), that is, a martingale taking values within the
relative interior of the dual of the cone of solvent portfolios at each point in time.
For general probability spaces but only one risky asset (in addition to a bank account)
it is shown by Grigoriev~\cite{grigoriev.2005} that NA already implies the existence
of a CPS, although the set of attainable portfolios need not be closed in probability;
see also Bayraktar and Zhang~\cite{bayraktar.zhang.2016} for a different proof that holds in the more general framework of model uncertainty. 

Most of the literature on continuous time models is based on the ${\rm NA}^r$ concept. 
Guasoni, R\'asonyi, and Schachermayer~\cite{guasoni.r.s.2010} derive a FTAP for a continuous mid-price process and small deterministic transaction costs. This means that the equivalence between no-arbitrage and the existence of a CPS holds asymptotically for small transaction costs.
Guasoni, L{\'e}pinette, R{\'a}sonyi~\cite{guasoni2012fundamental} prove a FTAP under a continuous time extension of ${\rm NA}^r$ called 
robust no free lunch with vanishing risk~(RNFLVR). 
The condition states that the bid-ask market has to satisfy NFLVR also for a slightly more favorable bid and ask prices. 
The reduction of the spread is uniformly in time but 
not uniformly in the scenario. The closedness required for portfolio optimization is shown under similar conditions, see 
Czichowsky and Schachermayer~\cite{CzichowskySchachermayer2016}.

Introduced by Guasoni~\cite{guasoni.2006}, another popular sufficient condition to obtain an arbitrage-free model is stickiness. It is formulated in the special case that the investor pays deterministic transaction costs when she buys or sells at a stochastic (mid-)price process. Roughly speaking, the condition states that there are positive probabilities that the mid-price stays in neighborhoods of its starting value. 
The appeal of the condition is that it is satisfied for many of the stochastic processes usually considered in stochastic modeling (e.g. for fractional Brownian motion). On the other hand, it is of course far away from being necessary.
  
An alternative to ${\rm NA}^r$ is the strict no-arbitrage~(${\rm NA}^s$) condition introduced by
Kabanov, R\'asonyi, and Stricker~\cite{kabanov.rasonyi.stricker.2002}. Loosely speaking, a market model satisfies ${\rm NA}^s$ if any claim
which is attainable from zero endowment up to some intermediate time~$t$ and can be
liquidated in $t$ for sure can also be attained from zero endowment by trading at time~$t$
only. Property~${\rm NA}^s$ alone does not imply the existence of a CPS, see \cite[Example 3.3]{schachermayer2004fundamental} for the existence of an approximate arbitrage under ${\rm NA}^s$. For a detailed discussion, we refer to the monograph of Kabanov and Safarian~\cite{kabanov.safarian.2009}. More recently, K\"uhn and Molitor~\cite{kuehn.molitor.2019} introduced (in discrete time) a variant of ${\rm NA}^s$
that is called prospective strict no-arbitrage~(${\rm NA}^{ps}$). A market model satisfies ${\rm NA}^{ps}$ if any claim which
is attainable from zero endowment by trading up to some intermediate time~$t$ and can subsequently
be liquidated for sure can also be attained from zero endowment in the subsequent
periods (here, ``subsequent'' is not understood in a strict sense). This means that in
contrast to the ${\rm NA}^s$ criterion, one does not distinguished between a trade that can be
realized at time~$t$ and a trade from which one knows at time~$t$ for sure that it can be
realized in the future. ${\rm NA}^{ps}$ is slightly weaker than ${\rm NA}^r$, but it
guarantees that the set of terminal portfolios attainable from zero endowment is closed in probability
(see \cite[Theorem 2.6]{kuehn.molitor.2019}).

The aim of the present paper is to extend K\"uhn and Molitor~\cite{kuehn.molitor.2019} to continuous time. 
We consider a single risky asset with c\`adl\`ag bid and ask price processes that may or may not coincide depending on the scenario and time.  
An essential preparatory work is the paper by K\"uhn and Molitor~\cite{kuehn.molitor.2022}. Under no unbounded profit with bounded risk~(NUPBR)
for simple strategies it is shown that there exists a semimartingale price system, that is a semimartingale lying between the bid and ask price processes. Then, \cite{kuehn.molitor.2022} show how these semimartingales can be used to construct gains of general trading strategies that are not necessarily of finite variation. 
For the continuous time extension of \cite{kuehn.molitor.2019} that we consider in the present paper it is very natural to merge the 
${\rm NA}^{ps}$ condition with no unbounded profit with bounded risk~(NUPBR). The latter is needed for the frictionless FTAP
(the combination of NA and NUPBR is equivalent to NFLVR). Very loosely speaking, a market model satisfies the new condition that we call
prospective strict no unbounded profit with bounded risk~(${\rm NUPBR}^{ps}$) if the set of {\em cost values} of the portfolios that can be liquidated at maximal loss of $1$ is bounded in probability.
The cost value was introduced by Bayraktar and Yu~\cite{bayraktar.yu.2018} as 
the cost to enter the portfolio position. It is the counterpart of the liquidation value. In the special cases of a discrete time model or a frictionless model, ${\rm NUPBR}^{ps}$ coincide with ${\rm NA}^{ps}$ and NUPBR, respectively. As in almost all FTAPs we prove that the set of attainable terminal portfolios is closed. This property is of independent use.

In plain English, the key idea of our proof is that if a stock position is built up at a time with positive spread (in terms of the ``actual'' bid and ask prices), there is a positive worst-case risk 
taken by the investor that cannot be eliminated by smart trading at a later stage. This restricts the amount of shares that can be hold at that 
time. Of course, the restriction also depends on the current wealth of the investor (whereby we value positions at their cost/purchase price).
Once the number of assets is bounded on an interval, a semimartingale price system can be used to show that trading costs cannot explode.
The same holds for the total variation of strategies as long as the spread is bounded away from zero.
Then, we can apply the stochastic version of Helly's theorem by Campi/Schachermayer, and after passing to forward convex combinations, any sequence of strategies converges pointwise to a finite limit.
By contrast, the worst-case risk described above disappears if the spread is zero since then a stock can be liquidated again immediately at the same price as it has been purchased. But, for frictionless intervals we can apply the (completely different) results used for the proof of the frictionless FTAP by Delbaen and Schachermayer~\cite{delbaen.schachermayer.1998}. Here, one directly analyzes the wealth processes (without having a good control over the size of the strategies) and use that the investor can always switch between strategies without transaction costs.
    
Unfortunately, although the basic intuition described above sounds not too complicated the details are extremely technical.
The main difficulties arises from the transition of frictionless periods and periods with friction. This can occur continuously or
by jumps that cause different mathematical difficulties. 
We nevertheless hope to make the main ideas accessible to a wider readership. The above mentioned worst-case risk is a new approach to the problem that is also rather different to the arguments used under the RNFLVR condition. The latter use that any transaction leads to costs that are added to the gains coming from a more favorable but still arbitrage-free price system. This means that the starting point is to control the trading volume of the strategies.\\  

The rest of the paper is organized as follows.
In Section~\ref{5.6.2023.1}, we introduce the notation and the financial model, discuss the assumptions, and state the main results of the 
paper (Theorems~\ref{25.7.2022.1} and \ref{7.5.2022.2}). 
Section~\ref{6.6.2023.01} is devoted to their proofs. Appendix~\ref{5.6.2023.3} consists of auxiliary statements and their proofs that
are not directly linked to a financial application. 

\section{Definitions and Main Theorems}\label{5.6.2023.1}

Throughout the paper, we fix a terminal time~$T\in\bbr_+$ and a filtered 
probability\linebreak space~$(\Omega,\F, (\F_t)_{t\in[0,T]},P)$ satisfying the usual conditions. 
The predictable $\sigma$-algebra on $\Omega\times[0,T]$ is denoted by $\mathcal{P}$, 
the set of bounded predictable processes starting at zero by $\bP$.
A stopping time~$\tau$ is allowed to take the value~$\infty$, 
but $\auf\tau\zu:=\{(\omega,t)\in\Omega\times[0,T]: t=\tau(\omega)\}$. Especially,
we use the notation $\tau_A$, $A\in\mathcal{F}_\tau$, for the stopping time that
coincides with $\tau$ on $A$ and is infinite otherwise.
(In)equalities between stochastic processes are understood ``up to evanescence'', i.e., up to a global $P$-null set not depending on time.
The term~$\Var_a^b(X)$ denotes the pathwise variation of a process~$X$ on the interval $[a,b]$.
In the case that $P(\Var_a^b(X)<\infty)=1$, $X=X_a+X^\uparrow - X^\downarrow$ is
its Jordan-Hahn decomposition into two nondecreasing processes on $[a,b]$ with $X^\uparrow_a = X^\downarrow_a =0$.
A real-valued process~$X$ is called l\`agl\`ad if and only if all paths possess finite left and right limits (but they can have double jumps).
We set $\Delta^+ X:=X_+-X$ and $\Delta X:=\Delta^- X:=X-X_-$, where $X_{t+}:=\lim_{s\downarrow t} X_s$ and $X_{t-}:=\lim_{s\uparrow t} X_s$
(with the convention $X_{T+}:=X_T$ and $X_{0-}:=X_0$). 
For a random variable~$Y$, we set $Y^+:=\max(Y,0)$ and $Y^-:=\max(-Y,0)$. 
The standard stochastic integral as defined in Jacod and Shiryaev~\cite[Definition~III.6.17]{jacod.shiryaev}
is denoted by $\vp\mal S$ for $\vp\in L(S)$ with the convention $\vp\wt{\vp}\mal S=(\vp\wt{\vp})\mal S$. It does not cause any ambiguity that by $\vp\mal S$ we also denote the integral 
of an almost simple integrand (cf., e.g., \cite[Definition~3.15]{kuehn.molitor.2022}) of the form, e.g., $\vp=1_{\zu \tau_1,\tau_2\auf}$ 
with respect to a l\`agl\`ad process~$S$ (not necessarily a semimartingale and not even right-continuous). The integral reads $\vp\mal S:=(S^{\tau_2-}-S^{\tau_1})1_{\{\tau_2>\tau_1\}}$ and analog definitions are canonical (the integral does not allow to ``invest'' separately 
in $\Delta^+ S$).   
For l\`agl\`ad processes $X$ and $Y$ we define the metric~$d_{up}(X,Y):=E(\sup_{t\in[0,T]}|X_t-Y_t|\wedge 1)$ that metrizes the convergence 
``uniformly in probability''. For semimartingales $X$ and $Y$ (that are by definition c\`adl\`ag) the \'Emery metric is defined by 
$d_{\bbs}(X,Y):=\sup_{H\in\bP, ||H||_\infty\le 1}E(\sup_{t\in[0,T]}|H\mal (X-Y)_t|\wedge 1)$ that metrizes convergence in the semimartingale topology.\\

The financial market consists of one risk-free asset or bank account that does not pay interest and one risky asset with bid price 
$\un{S}$ and ask price $\ov{S}$ expressed in units of the risk-free asset. We assume that $\un{S}$ and $\ov{S}$ are adapted c\`adl\`ag 
processes with 
\beam\label{11.6.2023.1}
0\le \un{S}\le \ov{S}\quad\mbox{and}\quad \ov{S}_T>0.
\eeam
The second condition is made to avoid confusion. Namely, a vanishing ask price would already lead to an arbitrage in the multidimensional sense
(see Definition~\ref{4.4.2023.1} below) but not necessarily in the one-dimensional sense, which is considered when the market is frictionless.
\begin{definition}\label{5.4.2023.1}
The {\em actual bid and ask price processes} are defined as the c\`adl\`ag versions of  	
$\un{X}_t={\rm ess inf}_{\mathcal{F}_t}\sup_{u\in[t,T]}\un{S}_u$ and  $\ov{X}_t={\rm ess sup}_{\mathcal{F}_t}\inf_{u\in[t,T]}\ov{S}_u$	(whose existence is shown in Proposition~\ref{23.7.2022.1} that also provides the precise definition of the conditional essential infimum/supremum). The {\em actual bid-ask spread} is denoted by $X:=\ov{X}-\un{X}$.
\end{definition}
In discrete time transaction costs models, these processes that take certain future trading opportunities into account
have already proven to be very useful. For the case of only one risky asset, which we consider in this paper, Sass and Smaga~\cite[equation before Lemma~4.1]{sass.smaga.2014} introduce them by a backward recursion. We leave it as an easy exercise to the reader to prove (by induction on the number of periods) that in (finite) discrete time the processes from Definition~\ref{5.4.2023.1} coincide with the processes in \cite{sass.smaga.2014}.   
In the general multidimensional Kabanov model, the (discrete time) actual bid and ask price processes correspond to the set-valued processes 
that are constructed in Rokhlin~\cite[page 95]{rokhlin.2008} by a more general backward recursion.

The random variable $\un{X}_t$ is the highest price at which the investor can liquidate the stock at present or in the future for sure -- with the information she has at time~$t$.
One has $\un{X}_t\ge \un{S}_t$. When the inequality is strict, it is silly to liquidate the position right now. The continuous time counterpart of the ${\rm NA}^{ps}$ 
condition from \cite{kuehn.molitor.2019} has to be expressed in terms of $(\un{X},\ov{X})$ since it captures trading opportunities in the future.	

To work with the processes $\un{X}$ and $\ov{X}$ is strongly related to freezing a portfolio position as
it is done in Guasoni, L{\'e}pinette, and R{\'a}sonyi~\cite{guasoni2012fundamental}. They start with almost simple strategies and consider associated portfolios that are frozen after a transaction of the original portfolio if a better liquidation price can be achieved for sure in the future. By introducing $\un{X}$ and $\ov{X}$ we can directly work with general strategies that shortens the proofs. The relation to \cite{guasoni2012fundamental} is discussed further after Definition~\ref{4.4.2023.1}.
%
%

{\em We decided to pass already now to the actual bid and ask price processes} because in the spirit of \cite{guasoni2012fundamental}, the admissibility condition
has anyhow to be expressed in terms of these processes, and also a meaningful definition of a ``frictionless interval'' is in terms of the actual bid and ask prices~$\un{X},\ov{X}$ rather than in terms of $\un{S},\ov{S}$ themselves. 
But, for the motivation of the admissibility condition, we keep the original processes $\un{S}$ and $\ov{S}$ in mind.
We note that the ``actual actual bid and ask prices'' are just the actual bid and ask prices. For the price processes that are commonly considered we have that $\un{X}=\un{S}$ and $\ov{X}=\ov{S}$. 


The arbitrage theory in continuous time frictionless markets is based on the set of general trading strategies, i.e., on the set of predictable processes~$L(S)$ which are integrable against the semimartingale~$S$ modeling the stock price.
This set was generalized to models with transaction costs beyond efficient friction by K\"uhn and Molitor~\cite{kuehn.molitor.2022}.
As in frictionless markets, but in contrast to models with efficient friction, the set contains strategies of infinite variation.
Accordingly, the present paper is based on this set and the corresponding self-financing condition.  We only outline the definitions that are needed to follow the present paper. 
{\bf For the rest of the paper, we assume that there exists a semimartingale price system}, i.e., a semimartingale~$S$ such that 
\beam\label{17.6.2023.3}
\un{X}\le S\le\ov{X}.
\eeam
\begin{remark}
Of course, (\ref{17.6.2023.3}) includes the condition that $\un{X}\le \ov{X}$. This condition is violated in a market with
$\un{S}=\ov{S}$ and $\un{S}$ generated by an unfavorable doubling strategy such that $\un{S}_0=2$ and $\un{S}_T=1$. Then, one has $\un{X}=\un{S}>\ov{X}=1$ on a set that is not evanescent. But, the model is only arbitrage-free with the bank account as num\'eraire. In the present paper, we work with a num\'eraire-free no-arbitrage condition that rules out such price processes and perfectly fits to actual bid and ask price processes.

Under a mild NUPBR-condition for simple strategies in the bid-ask model it is shown in K\"uhn and Molitor~\cite[Theorem~2.7]{kuehn.molitor.2022}  
that a semimartingale price system exists. To motivate (\ref{17.6.2023.3}) we can apply this theorem to $\un{X}$, $\ov{X}$. The assumptions we need 
in the present paper to establish a FTAP are stronger -- even if they were only required for simple strategies. Thus, (\ref{17.6.2023.3}) is no further restriction. 
\end{remark}
For the extension of the self-financing condition to general strategies, also the following assumption is needed.
\begin{assumption}\label{20.8.2022}
	For every $(\omega,t)\in \Omega\times [0,T)$ with $\ov{X}_t(\omega)=\un{X}_t(\omega)$
	there exists an $\eps>0$ such that
	$\ov{X}_s(\omega)=\un{X}_s(\omega)$ for all $s\in(t,(t+\eps)\wedge T)$ or 
	$\ov{X}_s(\omega)>\un{X}_s(\omega)$, $\ov{X}_{s-}(\omega)>\un{X}_{s-}(\omega)$  for all $s\in(t,(t+\eps)\wedge T)$.
\end{assumption}
This means that each zero of the path $t\mapsto \ov{X}_t(\omega)-\un{X}_t(\omega)$ is either an inner point from the right 
of the zero set or a starting point of an excursion away from zero. We note that there starts no excursion of the c\`adl\`ag function~$f(t):=\sum_{n=1}^\infty(2^{-n}-t)1_{(2^{-(n+1)}\le t<2^{-n})}$ at $t=0$ since there are zeros of the left limit. Assumption~\ref{20.8.2022} incorporating left limits is already needed in the proof of \cite[Lemma~5.1]{kuehn.molitor.2022} to ensure that the constructed excursion has a positive length.

Under Assumption~\ref{20.8.2022} it is shown in \cite[Lemma~5.1 and Lemma~5.2]{kuehn.molitor.2022} that 
there exist sequences of stopping times~$(\tau_1^i)_{i\in\bbn}$ and $(\sigma_1^i)_{i\in\bbn}$ that exhaust the set of starting times of the
excursions of the spread away from zero and the set of starting times of the frictionless intervals, respectively. Hence, there is a decomposition
\beam\label{24.3.2023.1}
& & \Omega\times[0,T] = \cup_{i\in\bbn} (\mathcal{I}^c_i\cup\mathcal{I}^{fc}_i)\quad \mbox{up to evanescence, where}\nonumber\\ 
& & \mathcal{I}^c_i:=\zu (\tau_1^i)_{\{X_{\tau_1^i}=0\}},\Gamma(\tau_1^i)\zu\setminus\auf (\Gamma(\tau^i_1))_{\{X_{\Gamma(\tau^i_1)-}=0\}}\zu\in\mathcal{P},\nonumber\\
& & \mathcal{I}^{fc}_i:=\auf (\sigma^i_1)_{\{X_{\sigma^i_1-}=0\}}\zu \cup \zu \sigma_1^i,\Lambda(\sigma_1^i)\zu\nonumber\\
& & \qquad\quad \cup(\zu (\Lambda(\sigma_1^i))_{\{X_{\Lambda(\sigma_1^i)}>0\}},\Gamma(\Lambda(\sigma_1^i))\zu\setminus
\auf(\Gamma(\Lambda(\sigma_1^i)))_{\{X_{\Gamma(\Lambda(\sigma_1^i))-}=0\}}\zu)\in\mathcal{P},\ i\in\bbn,
\eeam
and the stochastic intervals can be chosen such that they are disjoint. Here, $\Gamma(\tau_1^i):=\inf\{t>\tau_1^i : X_t =0\ \mbox{or}\ X_{t-}=0\}$ denotes the end time of the excursion starting in $\tau_1^i$, and $\Lambda(\sigma_1^i):=\inf\{t\ge \sigma_1^i : \exists \eps>0\ \forall s\in(t,(t+\eps)\wedge T)\ X_s>0\}$
%
%
is the starting time of the next excursion after $\sigma_1^i$.  
 The interval~$\mathcal{I}^c_i$ is with ``costs'', and the interval $\mathcal{I}^{fc}_i$ starts with  a ``frictionless'' period that is followed by a period with ``costs'' iff at the end of the frictionless interval the spread is already positive. 
By construction, the investor can rebalance her portfolio at the boundaries of $\mathcal{I}^c_i$ and $\mathcal{I}^{cf}_i$ without costs (since the spread can jump away from zero at an {\em unpredictable} stopping time, this would not be the case if a frictionless period were not sometimes followed by a period with costs). For $X_{\tau_1^i}>0$ there must exist a frictionless forerunner (possibly only consisting of the single point $\tau_1^i$). Then, the excursion is included in some $I^{fc}_j$.
{\em For notational convenience, we fix the sequences in (\ref{24.3.2023.1}) for the rest of the paper, but we stress that the definitions below do not depend on their choice.} 

For a $\vp\in\bP$, specifying the number of risky assets the investor holds in her portfolio, \cite[equation before (3.9)]{kuehn.molitor.2022} constructed the corresponding self-financing position in the bank account by the $[-\infty,\infty)$-valued predictable process
\beam\label{16.6.2023.1}
\Pi(\vp) := \vp^+\mal S -\vp^-\mal S' - C^S(\vp^+) - C^{S'}(-\vp^-) -\vp^+S +\vp^-S'.
\eeam
Here, $S$ and $S'$ are arbitrary semimartingale price systems. 
The nondecreasing processes $C^S$ and $C^{S'}$ model accumulated costs that occur when trades are executed at the less favorable bid and ask prices but positions are evaluated with $S$ and $S'$, respectively. The key discovery was that $\Pi(\vp)$ does not depend on the 
choice of the semimartingale price system (see \cite[Corollary~3.22]{kuehn.molitor.2022}). We consider the construction directly for the actual 
bid and ask prices~$\un{X},\ov{X}$ and not for $\un{S},\ov{S}$ as in \cite{kuehn.molitor.2022}. We apply the cost term~$C$ that is $[0,\infty]$-valued separately to $\vp^+$ and $-\vp^-$ and set $C^{S,S'}(\vp):=C^S(\vp^+)+C^{S'}(-\vp^-)$. This definition makes sense: if one applied $C$ to $\vp\in\bP$ with only one semimartingale, one would obtain $C(\vp^+)+C(-\vp^-)$ (see \cite[Step~3 in the proof of Theorem~4.5]{kuehn.molitor.2022}). The intuition for this is that $\vp^+$ and $-\vp^-$ never trade in the opposite direction. Thus, executing them through different trading accounts
does not yield higher costs. 
Although the choice of the semimartingale is irrelevant for bounded strategies, we provide more flexibility for the extension to unbounded strategies by allowing different semimartingales for long and short positions. Since in \cite{kuehn.molitor.2022} $S'=S$, one has to check that the proof of \cite[Corollary~3.22]{kuehn.molitor.2022} still holds for different semimartingale price systems for the positive and negative part. But this obviously follows by the above mentioned decomposition of the costs into $C(\vp^+)$ and $C(-\vp^-)$. The process
\beao
V^{S,S'}(\vp) :=  \Pi(\vp) + \vp^+ S - \vp^- S' = \vp^+\mal S -\vp^-\mal S' - C^S(\vp^+) - C^{S'}(-\vp^-)
\eeao
is the wealth process if long positions are evaluated by $S$ and short positions by $S'$. 
The wealth process~$V^{S,S'}$ is written as a function of $\vp$ only since the increments of the bank account~$\vp^0$ result from the self-financing condition.
This direct relation exists only for the {\em increments} of $\vp$ and $\vp^0$, which is why the initial values are set to zero. They would have to be modeled separately. With this background information, one can follow the present paper to a large extent without knowing the details of 
\cite{kuehn.molitor.2022}. For a complete understanding the reader is referred to the construction of the cost term 
in \cite[Subsections~3.1 and 3.2]{kuehn.molitor.2022}.

\begin{definition}\label{13.1.2023.3}
Let $S$ be a c\`adl\`ag process. The lower and upper predictable envelopes of $S$, denoted by ${\rm ess inf}_{\mathcal{F}_-}S$ 
and ${\rm ess sup}_{\mathcal{F}_-}S$, are defined as the unique predictable processes such that
$({\rm ess inf}_{\mathcal{F}_-}S)_\tau = {\rm ess inf}_{\mathcal{F}_{\tau-}}S_\tau$ and
$({\rm ess sup}_{\mathcal{F}_-}S)_\tau = {\rm ess sup}_{\mathcal{F}_{\tau-}}S_\tau$ a.s., respectively, for all predictable stopping times~$\tau$
(the definition makes sense by Proposition~\ref{13.1.2023.1}).
\end{definition}
Predictable envelopes are needed for a consistent valuation of portfolio positions {\em after} the portfolio is rebalanced at some time~$t$ but
{\em before} prices move at $t$. Looking at both a semimartingale price system~$S$ and its envelope can be seen as a splitting of time.
In frictionless markets these envelopes are not needed since the wealth only changes due to price movements but not 
due to portfolio rebalancing. 
The example below that is adapted from Larsson~\cite{larsson} shows that $S_-$ generally does not do the job
well enough since $S_t$ can exceed $S_{t-}$ for sure.
%
%
\begin{example}[Example~2.11 in Larsson~\cite{larsson}]\label{15.1.2023.3} 
Let $t_1\in(0,T)$ and $B$ be a standard Brownian motion.
Consider the bid price process $\un{S}_t= (|B_{t_1}|+B_t-B_{t_1}) 1_{[t_1,T]}(t)$ that coincides with its actual bid price process, i.e., 
$\un{X}=\un{S}$. Nevertheless, $\lim_{t\uparrow t_1}\un{X}_t$ is zero and differs from ${\rm ess inf}_{\mathcal{F}_{t_1-}}\un{X}_{t_1}=|B_{t_1}|$.

If $\un{X}$ were a frictionless price process, there would be an arbitrage, but as part of a bid-ask model it can make perfect sense.
Now, consider an investor who buys $\vp_{t_1}$ stocks at the ask price $\ov{X}_{t_1-}$ with the information~$\mathcal{F}_{t_1-}$.
The quantity $\ov{X}_{t_1-}-{\rm ess inf}_{\mathcal{F}_{t_1-}}\un{X}_{t_1}$ is the minimal worst-case loss per share she takes 
by building up this position. Neither $\un{X}_{t_1-}$ nor $\un{X}_{t_1}$ could provide this information 
\end{example} 
The set of unbounded strategies to which (\ref{16.6.2023.1}) is extended is slightly different to \cite{kuehn.molitor.2022}. The main difference is that we do not require that the up-convergence of wealth processes holds ``globally'' over all (countable many) excursions of the spread away from zero. 
%
%
In the special case of a frictionless market, the set of course coincides with the set of integrable processes in the semimartingale sense (see Proposition~\ref{15.1.2023.4}). 
\begin{definition}\label{def:DefinitionL}
	Let $L(\un{X},\ov{X})$ denote the subset of real-valued, predictable processes~$\varphi$ such that there exists a sequence $(\varphi^n)_{n\in\bbn}\subseteq (\bP)^\Pi:=\{\psi\in\bP\ :\ \Pi(\psi)>-\infty\}$ with 
	\begin{itemize}
		\item[(a)] $\vp^n\to\vp$ pointwise on $\Omega\times [0,T]$ and $(\vp^n)^+\leq \vp^+$, $(\vp^n)^-\leq \vp^-$ for all $n\in\bbn$,
		\item[(b)] there exist semimartingales $S$, $S'$ with $\un{X}\leq S\leq \ov{X}$, $\un{X}\leq S'\leq \ov{X}$, an optional l\`agl\`ad wealth process~$V$ satisfying
\beam\label{3.6.2023.2}           
(1_{J_n}\mal V - V) 1_{\{X=0\}},\ (1_{J_n}\mal V_- - V_-) 1_{\{X_-=0\}}\to 0\quad \mbox{in $d_{up}$ as 
$n\to\infty$ for all}\nonumber\\
\mbox{sequences}\ (J_n)_{n\in\bbn}\ \mbox{of finite unions of}\ (I^c_i)_{i\in\bbn}, (I^{fc}_i)_{i\in\bbn}\nonumber\\ 
\mbox{with}\ 1_{J_n}\to 1_{\Omega\times[0,T]}\ \mbox{up to evanescence as\ }n\to\infty,  
\eeam		
and predictable processes ${}^p\!S$ and ${}^p\!S'$ with ${\rm ess inf}_{\mathcal{F}_-}S \le {}^p\!S$, ${}^p\!S'\le {\rm ess sup}_{\mathcal{F}_-}S'$  such that
\beam\label{27.9.2022.1}
1_{\mathcal{I}^c_i}\mal V^{S,S'}(\vp^n) \to 1_{\mathcal{I}^c_i}\mal V,\quad
1_{\mathcal{I}^{fc}_i}\mal V^{S,S'}(\vp^n) \to 1_{\mathcal{I}^{fc}_i}\mal V\quad \mbox{in $d_{up}$ as}\ n\to\infty,
\eeam
\beam\label{27.9.2022.2}
& \mbox{and} & \big(1_{\mathcal{I}^c_i}\mal(\Pi(\vp^n) + (\vp^n)^+\, {}^p\!S -(\vp^n)^-\, {}^p\!S')\big)_{n\in\bbn},\nonumber\\ 
& & \big(1_{\mathcal{I}^{fc}_i}\mal(\Pi(\vp^n) + (\vp^n)^+\, {}^p\!S -(\vp^n)^-\, {}^p\!S')\big)_{n\in\bbn}\ 
\mbox{are $d_{up}$-Cauchy}\ \mbox{for all}\ i\in\bbn.
\eeam
Furthermore, for all competing sequences $(\widetilde{\vp}^n)_{n\in\bbn}\subseteq (\bP)^\Pi$ satisfying (a) and for all $i\in\bbn$, 
there exists a (deterministic) subsequence $(n_k)_{k\in\mathbb{N}}$ such that
		\begin{align}\label{23.8.2020.1}
			\left(1_{\mathcal{I}^c_i}\mal(V^{S,S'}(\widetilde{\vp}^{n_k})-V^{S,S'}(\vp^{n_k}))\right)^+,
			\left(1_{\mathcal{I}^{fc}_i}\mal(V^{S,S'}(\widetilde{\vp}^{n_k})-V^{S,S'}(\vp^{n_k}))\right)^+
				 \to 0,\quad k\to\infty,
		\end{align}
up to evanescence. 
	\end{itemize}
We extend the self-financing operator $\Pi$ to $L(\un{X},\ov{X})$ by setting
\begin{align}\label{25.3.2022}
\Pi(\vp):=V-\vp^+ S +\vp^- S',\quad \vp\in L(\un{X},\ov{X}).
\end{align}
A self-financing strategy is a pair~$(\vp^0,\vp)$ of predictable processes specifying the number of bonds and stocks
in the portfolio such that $\vp-\vp_0\in L(\un{X},\ov{X})$ and $\vp^0=\vp^0_0+\Pi(\vp-\vp_0)$.
When the term ``bounded strategy'' is used, it refers only to the position in the stock, unless otherwise stated.
\end{definition}
\begin{proposition}\label{19.6.2023.3}
$\Pi(\vp)$ is well-defined, i.e., it does not depend on the choice of $S$, $S'$, and $V$. In addition, one has that $(\bP)^\Pi\subseteq L(\un{X},\ov{X})$.
\end{proposition}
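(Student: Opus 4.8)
The plan is to treat the two assertions in turn --- first the inclusion $(\bP)^\Pi\subseteq L(\un X,\ov X)$, then the independence of $\Pi(\vp)$ of the data $\big((\vp^n),S,S',V,{}^p\!S,{}^p\!S'\big)$ witnessing that $\vp\in L(\un X,\ov X)$ --- and the maximality clause~(\ref{23.8.2020.1}) is the key tool for both.

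For $(\bP)^\Pi\subseteq L(\un X,\ov X)$ I would take $\psi\in(\bP)^\Pi$ and verify Definition~\ref{def:DefinitionL} with the \emph{constant} sequence $\vp^n:=\psi$. Fix any semimartingale price system $S$ (it exists by the standing assumption) and put $S':=S$, $V:=V^{S,S}(\psi)$, ${}^p\!S:=\essinf_{\F_-}S$, ${}^p\!S':=\esssup_{\F_-}S$; then $V$ is a well-defined optional l\`agl\`ad process vanishing at $0$ precisely because $\Pi(\psi)>-\infty$ keeps the cost processes a.s.\ finite, and conditions~(a), (\ref{27.9.2022.1}), (\ref{27.9.2022.2}) hold trivially since all sequences in them are constant. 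Condition~(\ref{3.6.2023.2}) follows from dominated convergence for stochastic integrals: on $\{X=0\}$ the cost terms contribute nothing, so there $V$ is (the relevant part of) a stochastic integral of a bounded integrand against $S$, for which $1_{J_n}\mal V\to V$ as the $J_n$ exhaust $\Omega\times[0,T]$. The only substantial point is~(\ref{23.8.2020.1}): for a competitor $(\wt\vp^n)\subseteq(\bP)^\Pi$ with $(\wt\vp^n)^\pm\le\psi^\pm$ and $\wt\vp^n\to\psi$ pointwise I must show $\big(1_{\mathcal I^c_i}\mal(V^{S,S}(\wt\vp^{n})-V^{S,S}(\psi))\big)^+\to0$ up to evanescence along a subsequence, and likewise on $\mathcal I^{fc}_i$. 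I would split $V^{S,S}$ into its gains and its cost part: the gains part of $V^{S,S}(\wt\vp^n)-V^{S,S}(\psi)$ tends to $0$ in the semimartingale topology by dominated convergence for stochastic integrals (the integrands $(\wt\vp^n)^\pm$ are dominated by the bounded $\psi^\pm$ and converge pointwise), hence to $0$ in $d_{up}$ after applying $1_{\mathcal I^c_i}\mal$; the cost part has nonpositive $\limsup$ because $\vp\mapsto C^S(\vp)$ is lower semicontinuous under dominated pointwise convergence (a Fatou-type property available from~\cite{kuehn.molitor.2022}), which gives $\liminf_n C^S((\wt\vp^n)^+)\ge C^S(\psi^+)$ and the analogue for the short part. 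Passing to a subsequence upgrades $d_{up}$-convergence to a.s.\ uniform convergence; obtaining the statement ``up to evanescence'' uniformly in time is where I would use that on an excursion interval $\mathcal I^c_i$ the spread is locally bounded away from zero.

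For well-definedness, let $\big((\vp^n),S,S',V,\dots\big)$ and $\big((\wt\vp^n),\wt S,\wt S',\wt V,\dots\big)$ be two sets of data for the same $\vp\in L(\un X,\ov X)$. I would first treat the case $S=\wt S$, $S'=\wt S'$: since each of the two sequences lies in $(\bP)^\Pi$ and satisfies~(a) with limit $\vp$, it is a legitimate competitor for the other set of data, so applying~(\ref{23.8.2020.1}) in both directions and combining with~(\ref{27.9.2022.1}) (which forces $1_{\mathcal I^c_i}\mal V^{S,S'}(\vp^n)\to1_{\mathcal I^c_i}\mal V$ and $1_{\mathcal I^c_i}\mal V^{S,S'}(\wt\vp^n)\to1_{\mathcal I^c_i}\mal\wt V$ in $d_{up}$, likewise on $\mathcal I^{fc}_i$) yields, along a common subsequence, $\big(1_{\mathcal I^c_i}\mal(\wt V-V)\big)^+=0=\big(1_{\mathcal I^c_i}\mal(V-\wt V)\big)^+$, hence $1_{\mathcal I^c_i}\mal V=1_{\mathcal I^c_i}\mal\wt V$ for all $i$, and the same with $\mathcal I^{fc}_i$. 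Since these restrictions, together with the $\{X=0\}$-recovery condition~(\ref{3.6.2023.2}) and the fact that $V$ and $\wt V$ vanish at $0$, determine the whole process, it follows that $V=\wt V$, so $\Pi(\vp)=V-\vp^+S+\vp^-S'$ is unchanged. To reduce the general case to this one, I would show that $\big((\vp^n),\wt S,\wt S',\wh V,\dots\big)$ with $\wh V:=V+\vp^+(\wt S-S)-\vp^-(\wt S'-S')$ is again valid data for $\vp$: the central fact is that on $(\bP)^\Pi$ the operator $\Pi$ is independent of the semimartingale price system --- \cite[Corollary~3.22]{kuehn.molitor.2022}, valid also with different semimartingales for long and short positions as noted in the text --- whence $V^{\wt S,\wt S'}(\vp^n)=V^{S,S'}(\vp^n)+(\vp^n)^+(\wt S-S)-(\vp^n)^-(\wt S'-S')$; applying $1_{\mathcal I^c_i}\mal$, the first summand converges to $1_{\mathcal I^c_i}\mal V$ by~(\ref{27.9.2022.1}) for the old data, and the correction converges to $1_{\mathcal I^c_i}\mal\big(\vp^+(\wt S-S)-\vp^-(\wt S'-S')\big)$ because it vanishes on $\{X=0\}$ (there $\un X=\ov X$ forces $S=\wt S$, $S'=\wt S'$) and on $\{X>0\}$ the position is controlled well enough for dominated convergence; the remaining conditions of Definition~\ref{def:DefinitionL} for the new data follow from those for the old data via the same linear identity (in particular~(\ref{23.8.2020.1}) is insensitive to adding the common correction). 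Since $\wh V-\vp^+\wt S+\vp^-\wt S'=V-\vp^+S+\vp^-S'$, the value $\Pi(\vp)$ is unchanged, and the first case completes the proof.

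The step I expect to be the main obstacle is the passage to the limit in the correction term $1_{\mathcal I^c_i}\mal\big((\vp^n)^\pm(\wt S-S)\big)$ in the second case (and, in a milder form, the same issue inside~(\ref{23.8.2020.1})): a priori an element $\vp\in L(\un X,\ov X)$ need not be integrable against a given semimartingale price system, so dominated convergence is not immediately available. The way out is the localization indicated above --- on $\{X=0\}$ the price systems coincide so the correction is identically zero, while on $\{X>0\}$ the spread is locally bounded away from zero, which on each excursion makes the position of locally finite variation and the relevant integrals convergent --- but making this precise, together with the bookkeeping over the countably many pieces $\mathcal I^c_i,\mathcal I^{fc}_i$ via~(\ref{3.6.2023.2}) and the l\`agl\`ad / ``almost simple'' integral at the (possibly unpredictable, possibly double) jumps where frictional and frictionless periods meet, is the technically heavy part.
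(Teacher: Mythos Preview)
Your treatment of the inclusion $(\bP)^\Pi\subseteq L(\un X,\ov X)$ matches the paper's: constant approximating sequence, dominated convergence for~(\ref{3.6.2023.2}), and the Fatou-type estimate for the cost term (which the paper simply cites as \cite[Corollary~3.24]{kuehn.molitor.2022}) for~(\ref{23.8.2020.1}).

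For well-definedness, however, the paper proceeds quite differently and thereby avoids exactly the obstacle you flag. Rather than re-verifying the conditions of Definition~\ref{def:DefinitionL} for a second choice of semimartingales, the paper works directly with the two limit wealth processes~$V^1,V^2$: it applies \cite[Proposition~4.2]{kuehn.molitor.2022} to the strategy~$\vp 1_{J_n}$ (which, being supported on finitely many intervals, falls under that reference) to obtain
\[
1_{J_n}\mal V^1 = 1_{J_n}\mal V^2 + \vp^+(S^1-S^2) - \vp^-((S')^1-(S')^2)\quad\text{on }J_n,
\]
then observes that the correction term vanishes on $\{X=0\}$ (there all semimartingale price systems coincide), so that (\ref{3.6.2023.2}) applied to both $V^1$ and $V^2$ lets $J_n\uparrow\Omega\times[0,T]$ and yields the identity globally. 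No limit in~$n$ of a correction term along an approximating sequence is ever taken.

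Your route, by contrast, requires showing that $1_{\mathcal I^c_i}\mal\big((\vp^n)^+(\wt S-S)\big)\to 1_{\mathcal I^c_i}\mal\big(\vp^+(\wt S-S)\big)$ in $d_{up}$, i.e.\ uniform convergence of $(\vp^n)^+(\wt S-S)$ on the excursion. You propose to localize: away from the endpoints $\vp$ is of finite variation, near the endpoints $|\wt S-S|\le X\to 0$. But this needs $\vp^+X\to 0$ at the endpoints, i.e.\ that $\vp^+$ does not blow up faster than $X$ vanishes---and nothing in Definition~\ref{def:DefinitionL} guarantees this a priori (the wealth process~$V$ is finite, but $\Pi(\vp)=V-\vp^+S+\vp^-S'$ is allowed to be $-\infty$). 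So the step you call ``technically heavy bookkeeping'' is in fact a genuine gap; the paper's argument sidesteps it entirely by never needing to pass a sequence of correction terms to the limit.
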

\begin{proof}
(i) Let $V^1$ and $V^2$ be wealth processes that result from different approximating strategies with regard to different semimartingales $S^1$, $(S')^1$ and $S^2$, $(S')^2$, respectively. From \cite[Proposition~4.2]{kuehn.molitor.2022} applied to the strategy $\vp 1_{J_n}$ it follows that 
\beao
1_{J_n}\mal V^1=1_{J_n}\mal V^2 + \vp^+(S^1-S^2) - \vp^-((S')^1-(S')^2)\quad\mbox{on}\ J_n\quad\mbox{for all}\ J_n\ \mbox{as in}\ (\ref{3.6.2023.2}).  
\eeao
For this, it is crucial that by construction, at the boundaries of $I^c_i$ and $I^{fc}_i$ the portfolio can be rebalanced without costs
(thus, the choice of the semimartingales does not matter there). 
By (\ref{3.6.2023.2}) and (\ref{27.9.2022.1}) (the latter is only needed for the last excursion) 
it follows that $V^1=V^2 + \vp^+(S^1-S^2) - \vp^-((S')^1-(S')^2)$ up to evanescence.

(ii) For the second assertion we take a $\vp\in(\bP)^\Pi$ and consider it as a constant sequence. The semimartingale price systems are arbitrarily
chosen and $V:=V^{S,S'}(\vp) =  \vp^+\mal S -\vp^-\mal S' - C^S(\vp^+) - C^{S'}(-\vp^-)$. Condition~(\ref{3.6.2023.2})  
holds since $1_{J_n}\mal (\vp^+\mal S)\to \vp^+\mal S$, $1_{J_n}\mal (\vp^-\mal S')\to \vp^-\mal S'$, $1_{J_n}\mal C^S(\vp^+)\to C^S(\vp^+)$, and $1_{J_n}\mal C^{S'}(-\vp^-)\to C^{S'}(-\vp^-)$ converge separately in $d_{up}$ as $n\to\infty$ by the dominated convergence theorem for stochastic integrals (cf., e.g. \cite[Theorem~12.4.10]{cohen.elliott.2015}). For the cost processes that are only l\`agl\`ad, we refer to the notation
and the fact that there are no costs at the boundaries of $J_n$.
Condition~(\ref{23.8.2020.1}) is already shown in \cite[Corollary~3.24]{kuehn.molitor.2022}.
\end{proof}
\begin{remark}
%
%
The combination of (\ref{3.6.2023.2}) and (\ref{27.9.2022.1}) does not imply up-convergence of $(V^{S,S'}(\vp^n))_{n\in\bbn}$ to $V$ on the whole time interval. Namely, we need not have control over the wealth processes during the excursions uniformly in time.  
\end{remark}

\begin{remark}
In frictionless markets, the approximating sequence~$(\vp^n)_{n\in\bbn}\subseteq\bP=(\bP)^\Pi$ from Definition~\ref{def:DefinitionL} can be chosen such that
$\vp^n\mal S\ge -1$ for all $n\in\bbn$ if $\vp\mal S\ge -1$. To see this, one considers for arbitrary $\eps>0$ the stopping times 
$\tau_n:=\inf\{t\ge 0 : \vp^n\mal S_t \le \vp\mal S_t - \eps\}$ and uses that 
$d_{up}(\vp^n\mal S,\vp\mal S)\to 0$, $(\vp^n)^+\le \vp^+$, and $(\vp^n)^-\le \vp^-$. 
Consequently, the approximating bounded (not elementary!) strategies can be chosen to satisfy the admissibility conditions of the 
unbounded strategy. By condition~(\ref{27.9.2022.2}) that allows for a consistent valuation of the portfolio after the trade but before the price movement (cf. Example~\ref{15.1.2023.3}), the same property can be shown in the bid-ask model (see Lemma~\ref{12.6.2022.1}). 
Condition~(\ref{27.9.2022.2}) is weak in the sense that it only requires minimal consistency for the intermediate valuation of long and short positions by ${}^p\!S$ and ${}^p\!S'$, respectively.
\end{remark}

\begin{proposition}\label{15.1.2023.4}
In the special case of a frictionless market, in the sense that $\un{X}=\ov{X}$, condition~(\ref{27.9.2022.1}) implies condition~(\ref{27.9.2022.2}) with
the choice ${}^p\!S:=S_-\vee{\rm ess inf}_{\mathcal{F}_-}S$ and ${}^p\!S':=S'_-\wedge{\rm ess sup}_{\mathcal{F}_-}S'$, and we have $L(\un{X},\un{X})=L(\un{X})$.
\end{proposition}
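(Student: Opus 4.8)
The plan is to exploit that a frictionless market degenerates the structural objects of Definition~\ref{def:DefinitionL}. First, I record the reductions forced by $\un X=\ov X$: by~(\ref{17.6.2023.3}) one necessarily has $S=S'=\un X=\ov X$, the spread $X$ vanishes identically, all cost terms are zero, and the decomposition~(\ref{24.3.2023.1}) collapses to $\mathcal{I}^c_i=\emptyset$ for all $i$, $\mathcal{I}^{fc}_1=\Omega\times[0,T]$, and $\mathcal{I}^{fc}_i=\emptyset$ for $i\ge2$ (since $\sigma_1^1=0$, $\Lambda(0)=\infty$, and the spread has no excursions away from zero). Hence for $\vp\in\bP=(\bP)^\Pi$ one has $V^{S,S}(\vp)=\vp\mal S$ and, by~(\ref{16.6.2023.1}), $\Pi(\vp)=\vp\mal S-\vp S$, so condition~(\ref{27.9.2022.1}) just reads $\vp^n\mal S\to V$ in $d_{up}$, while both processes appearing in~(\ref{27.9.2022.2}) equal $\Pi(\vp^n)+(\vp^n)^+\,{}^p\!S-(\vp^n)^-\,{}^p\!S'=\vp^n\mal S+(\vp^n)^+({}^p\!S-S)-(\vp^n)^-({}^p\!S'-S)$.

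For the implication (\ref{27.9.2022.1})$\Rightarrow$(\ref{27.9.2022.2}) it then remains to show that the correction term $(\vp^n)^+({}^p\!S-S)-(\vp^n)^-({}^p\!S'-S)$ is $d_{up}$-Cauchy, since $(\vp^n\mal S)_n$ already is by hypothesis. The point of the prescribed choice of ${}^p\!S$ and ${}^p\!S'$ is that, because the conditional essential infimum (supremum) lies below (above) $S$ (Proposition~\ref{13.1.2023.1}, Definition~\ref{13.1.2023.3}), one has $S_-\le{}^p\!S\le S_-\vee S$ and $S_-\wedge S\le{}^p\!S'\le S_-$, whence $|{}^p\!S-S|\le|\Delta S|$ and $|{}^p\!S'-S|\le|\Delta S|$ up to evanescence. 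Combining this with $|(\vp^n)^+-(\vp^m)^+|\vee|(\vp^n)^--(\vp^m)^-|\le|\vp^n-\vp^m|$ and with $\Delta((\vp^n-\vp^m)\mal S)=(\vp^n-\vp^m)\,\Delta S$, the difference of the correction terms at indices $n$ and $m$ is dominated, uniformly in $t$, by $2\,|\vp^n-\vp^m|\,|\Delta S|=2\,|\Delta((\vp^n-\vp^m)\mal S)|\le4\sup_{s\in[0,T]}|\vp^n\mal S_s-\vp^m\mal S_s|$, which tends to $0$ in probability; this gives the $d_{up}$-Cauchy property and hence~(\ref{27.9.2022.2}).

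For $L(\un X,\un X)=L(\un X)$ I argue both inclusions. For ``$\supseteq$'', given $\vp\in L(\un X)$ I take $\vp^n:=\vp\,1_{\{|\vp|\le n\}}\in\bP=(\bP)^\Pi$ and $V:=\vp\mal S$: condition~(a) is immediate; (\ref{3.6.2023.2}) is vacuous since every admissible $J_n$ equals $\Omega\times[0,T]$ for $n$ large; (\ref{27.9.2022.1}) holds because $\vp^n\mal S\to\vp\mal S$ in the semimartingale topology, hence in $d_{up}$, by dominated convergence for stochastic integrals (cf.~\cite[Theorem~12.4.10]{cohen.elliott.2015}); (\ref{27.9.2022.2}) follows from the first part; and (\ref{23.8.2020.1}) holds because any competing $\wt\vp^n$ satisfying~(a) has $|\wt\vp^n|\le|\vp|$ and $\wt\vp^n\to\vp$ pointwise, so $\wt\vp^n\mal S-\vp^n\mal S\to0$ in $d_{up}$ and a subsequence converges to $0$ up to evanescence. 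Thus $\vp\in L(\un X,\un X)$, with $\Pi(\vp)=\vp\mal S-\vp S$. For ``$\subseteq$'', an approximating sequence for a given $\vp\in L(\un X,\un X)$ provides $\vp^n\mal S\to V$ in $d_{up}$ with $\vp^n\to\vp$ pointwise and $|\vp^n|\le|\vp|$; a closedness property of the domain of stochastic integration then yields $\vp\in L(S)=L(\un X)$ and $\vp^n\mal S\to\vp\mal S$, so $V=\vp\mal S$.

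The step I expect to be the main obstacle is the inclusion ``$\subseteq$'': upgrading $d_{up}$-convergence of the truncated integrals to $S$-integrability of $\vp$ does not follow from the dominated convergence theorem alone and needs a completeness/closedness statement for $L(S)$ under uniform-on-compacts convergence in probability (the type of auxiliary result one would place in the appendix); one has to make sure the available version is formulated for $d_{up}$ rather than for the stronger semimartingale topology. The frictionless reductions of the first paragraph (vanishing cost terms, $\mathcal{I}^{fc}_1=\Omega\times[0,T]$, $V^{S,S}(\vp)=\vp\mal S$) are routine but rely on the cost construction of~\cite{kuehn.molitor.2022}.
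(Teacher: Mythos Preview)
Your argument for the implication~(\ref{27.9.2022.1})$\Rightarrow$(\ref{27.9.2022.2}) is correct and is essentially the paper's proof. The paper writes the process in~(\ref{27.9.2022.2}) as $\vp^n\mal S_{-}+(\vp^n)^+({}^p\!S-S_-)-(\vp^n)^-({}^p\!S'-S_-)$ (using $\Pi(\vp^n)+\vp^n S_-=\vp^n\mal S_-$) and bounds the correction via the observation, drawn from the explicit construction~(\ref{15.1.2023.2}), that $\{{}^p\!S>S_-\}\cap\{{}^p\!S>S\}$ is evanescent, whence $0\le{}^p\!S-S_-\le(\Delta S)^+$. You anchor at $S$ instead of $S_-$, but the inequality $S_-\le{}^p\!S\le S_-\vee S$ you derive is the same fact and the jump estimate is identical.

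For $L(\un X,\un X)=L(\un X)$ the paper does not argue at all: it says that once the first assertion is shown, the equality ``follows along the lines of the proof of \cite[Proposition~4.3]{kuehn.molitor.2022}''. Your $\supseteq$ argument is fine. For $\subseteq$ you correctly isolate the only nontrivial point, and your caveat is well placed: the Chou--Meyer--Stricker result~\cite{Chou} invoked elsewhere in the paper (Step~5 of the proof of Theorem~\ref{7.5.2022.2}) is used for $d_{\bbs}$-Cauchy sequences, not merely $d_{up}$-Cauchy ones, so a direct appeal to it does not close the gap. The paper's intended route is the companion reference; if you want to be self-contained you would have to upgrade $d_{up}$ to $d_{\bbs}$ first, which is where the extra structure of Definition~\ref{def:DefinitionL}---in particular the optimality condition~(\ref{23.8.2020.1})---would enter.
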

%
%
\begin{proof}[Proof of Proposition~\ref{15.1.2023.4}]
It is sufficient to show the first assertion. Then, the second one follows along the lines of the proof of \cite[Proposition~4.3]{kuehn.molitor.2022}.

Let $S:=S':=\un{X}$ and $(\vp^n)_{n\in\bbn}\subseteq (\bP)^\Pi$ be a sequence such that 
$\sup_{t\in[0,T]}|(\vp^n-\vp^m)\mal S_t|\to 0$ in probability as $n,m\to\infty$.
This implies that $\sup_{t\in[0,T]}|((\vp^n_t)^+ - (\vp^m_t)^+)\Delta S_t| + \sup_{t\in[0,T]}|((\vp^n_t)^- - (\vp^m_t)^-)\Delta S'_t|\to 0$ in probability. 
%
%
From (\ref{15.1.2023.2}) it follows that $\{{}^p\!S>S_-\}\cap\{{}^p\!S>S\}$ and $\{{}^p\!S'<S'_-\}\cap\{{}^p\!S'<S'\}$ are evanescent, 
and we obtain that
$\sup_{t\in[0,T]}|((\vp^n_t)^+-(\vp^m_t)^+)(\, {}^p\!S_t -S_{t-})| + \sup_{t\in[0,T]}|((\vp^n_t)^-_t-(\vp^m_t)^-)(\, {}^p\!S'_t - S'_{t-})|\to 0$ in probability.
Since $\Pi_t(\vp^n) + \vp^n_t S_{t-} = \vp^n\mal S_{t-}$ and $\sup_{t\in[0,T]}|(\vp^n-\vp^m)\mal S_{t-}|\to 0$ in probability as $n,m\to\infty$, we are done.
\end{proof} 

Our admissibility condition is in the spirit of Guasoni, L{\'e}pinette, and R{\'a}sonyi~\cite{guasoni2012fundamental}.
For a motivation we refer to \cite[Proposition~4.9]{guasoni2012fundamental} and Lemma~\ref{7.5.2022.1} (and the text before the lemma).
\begin{definition}\label{26.7.2022.01}
Let $M\in\bbr_+$. A self-financing strategy~$(\vp^0,\vp)$ is called $M$-admissible iff $\vp^0+M+(\vp+M)^+ \un{X} -(\vp + M)^-\ov{X} \ge 0$. We write $(\vp^0,\vp)\in\mathcal{A}^M$. 
A strategy is admissible iff it lies in $\mathcal{A}:=\cup_{M\in\bbr_+}\mathcal{A}^M$. We denote by $\mathcal{A}^M_0$ and $\mathcal{A}_0$ the corresponding sets of strategies which start at $(0,0)$. The liquidation value process  of $(\vp^0,\vp)$ is defined as
$V^{\rm liq}(\vp) := \vp^0 + \vp^+ \un{X} -\vp^-\ov{X}$.
\end{definition}
The following example shows that in Definition~\ref{26.7.2022.01} the actual bid and ask prices~$(\un{X},\ov{X})$ cannot be replaced by the original bid and ask prices~$(\un{S},\ov{S})$ if the set of terminal (liquidation) values that can be achieved by an admissible strategy should be Fatou-closed (cf., Theorem~\ref{7.5.2022.2} for a definition). This is an insight of \cite{guasoni2012fundamental}, but we did not find an explicit example in the literature. 
%
%
\begin{example}[Admissibility]
Let $T=2$, $B$ be a standard Brownian motion, and $U$ be a random variable that is uniformly distributed on $(0,1)$ and independent of $B$. 
Define $m_t:=\inf_{0\le s\le t}B_s$ and $\tau:=\inf\{ t\ge 0 :  m_t = -U\}\wedge 1$. 
The original ask price is given by $\ov{S}_t:=3+B_{t\wedge\tau}$ and the bid price by $\un{S}_t:=1_{\{t<\tau\}} + (\ov{S}_\tau+2)/2 1_{\{t\ge \tau\}}$.
The filtration is generated by $B^\tau$ and augmented by null sets. We have that $\ov{X}=\ov{S}$ and $\un{X}_t= 2\cdot 1_{\{t<\tau\}} + (\ov{S}_\tau+2)/2 1_{\{t\ge \tau\}}$.
The only non-silly investment strategies are to buy stocks before $\tau$ has occurred and sell them at time $\tau$ (or later). The market satisfies the RNFLVR condition in \cite[Definition~5.2(ii)]{guasoni2012fundamental} (one considers the ask and bid price processes $3/4 \ov{S}_t + 1/4\cdot 2$ and 
$3/2 1_{\{t<\tau\}} + (2/3 \ov{S}_\tau + 1/3\cdot 2) 1_{\{t\ge \tau\}}$ that are uniformly in time strictly more favorable 
than $\ov{S}$, $\un{S}$ by $P(-U>-1)=1$).

We define the function~$f(x):=(1-x)^{-1/2} -1$ for $x\in[0,1)$ and consider the sequence of nondecreasing strategies~$\vp^n_t:=f((-m_{t\wedge \tau})\wedge(1-1/n))1_{\{0<t\le \tau\}}$, $n\in\bbn$. 
The stock position is liquidated before $T$, and the terminal bond position results
\beao
\vp^{0,n}_T=f((-m_\tau)\wedge (1-1/n))\frac{1+B_\tau}2 -\int_0^{(-m_\tau)\wedge (1-1/n)}(1-x)\,df(x) \ge -\int_0^1(1-x)\,df(x). 
\eeao 
Following the strategies~$(\vp^n)_{n\in\bbn}$, stocks are purchased when $\ov{S}$ attains its running minimum (above $3-U$), and the amount of stocks explodes when $\ov{S}$ approaches $2$ before $\tau$ stops.
But, a share purchased at price~$2+\eps$ and liquidated at time~$T$ cannot produce losses larger than $\eps$. 
The example was chosen such that $M:=\int_0^1(1-x)\,df(x)<\infty$. This means that $\vp^{0,n}_T\ge -M\in\bbr$ for all $n\in\bbn$.
We even have that $(\vp^{0,n},\vp^n)$ is $M$-admissible in the sense of  Definition~\ref{26.7.2022.01}.
%
%
The strategies and their terminal wealth converge pointwise to $\vp^\infty_t:=f(-m_{t\wedge \tau})1_{\{0<t\le \tau\}}$ and 
\beao
\vp^{0,\infty}_T=f(-m_\tau)\frac{1+B_\tau}2 -\int_0^{-m_\tau}(1-x)\,df(x) \ge -\int_0^1(1-x)\,df(x),\quad\mbox{respectively}. 
\eeao 
The limiting strategy~$(\vp^{0,\infty},\vp^\infty)$ is $M$-admissible as well ($\vp^\infty$ is not bounded anymore but obviously in $L(\un{X},\ov{X})$).

Now, we turn to admissibility in the sense of Definition~\ref{26.7.2022.01} but with $(\un{X},\ov{X})$ replaced by $(\un{S},\ov{S})$. 
For each $n\in\bbn$, we want to determine the minimal $M_n\in\bbr_+$ such that $(\vp^{0,n},\vp^n)$ is $M_n$-admissible.
The most vulnerable time for the strategy is when $B$ reaches $-1+1/n$ and $U>1-1/n$. If the stocks were sold at that time, the bond position would be $-\int_0^{1-1/n}(1-x)\,df(x) - f(1-1/n)$. 
The difference to above is one unit per share. Since the admissibility condition also allows debts in the stock position, we arrive at
$M_n=(\int_0^{1-1/n}(1-x)\,df(x) + f(1-1/n))/2\in\bbr_+$.
By the choice of $f$, we have that $M_n\to\infty$ as $n\to\infty$. Thus, the sequence is not admissible with regard to a joint $M'\in\bbr_+$,
%
%
and the limiting strategy~$(\vp^{0,\infty}\vp^\infty)$ is not admissible at all. 

This already gives a strong hint that one would not have Fatou-closedness of the set of terminal portfolios which can be achieved
by admissible strategies if $(\un{X},\ov{X})$ were replaced by  $(\un{S},\ov{S})$ in Definition~\ref{26.7.2022.01}.
The reason is that $M_n$ is too large compared to the worst-case risk at maturity.
However, it remains to show that the limiting wealth~$\vp^{0,\infty}_T$ cannot be achieved by an admissible strategy different 
from $(\vp^{0,\infty},\vp^\infty)$. Assume by contradiction that $(\psi^0,\psi)$ is admissible with $(\psi^0_T,\psi_T)=(\vp^{0,\infty}_T,0)$.
We leave it as an exercise for the reader to shows that for each $n\in\bbn$, $\psi$ has to coincide
with $\vp^n$ on $\{U\le 1-1/n\}$. By the minimality of $M_n$, it follows that $(\psi^0,\psi)$ cannot be $M'$-admissible for $M'<M_n$. Since $M_n\to\infty$, $(\psi^0,\psi)$ cannot be admissible at all.    
%
%
%
\end{example}

{\bf For the rest of the paper, we follow the standard convention to assume that} 
\beam\label{27.7.2022.1}
\mathcal{F}_T=\mathcal{F}_{T-},\quad \un{S}_T=\un{S}_{T-},\quad\mbox{and\ }\ov{S}_T=\ov{S}_{T-}.
\eeam
For the actual bid and ask prices this implies that $\un{X}_T=\un{X}_{T-}$ and $\ov{X}_T=\ov{X}_{T-}$ as well.
The assumption allows to identify $(\vp^0_T,\vp_T)$ with the terminal portfolio that cannot be 
rebalanced anymore. It is w.l.o.g. since it just avoids to introduce an additional time after $T$ when the investor can trade at prices $\un{S}_T$, $\ov{S}_T$. 

\begin{definition}\label{4.4.2023.1}
The market model satisfies the num\'eraire-free no-arbitrage condition~(${\rm NA}^{nf}$) iff there is no admissible strategy~$(\vp^0,\vp)$ with $\vp^0_0=\vp_0=0$, $P(\vp^0_T\ge 0,\vp_T\ge 0)=1$, and
$P(\{\vp^0_T>0\}\cup\{\vp_T>0\})>0$.
\end{definition}
The condition is called ``num\'eraire-free'' since bounded short positions in both the bond and the stock are allowed. 
Stating the admissibility condition in terms of $\un{X}$ and $\ov{X}$ is equivalent to freezing a portfolio position as
it is done in Guasoni, L{\'e}pinette, and R{\'a}sonyi~\cite{guasoni2012fundamental}. On the other hand, freezing a short position in a stock 
with a frictionless price modeled by a nonnegative strict local martingale that behaves like a doubling strategy leads to an arbitrage.
Thus, the freezing of positions better fits to the num\'eraire-free arbitrage theory that leads to true martingale price processes.\\

As discussed in the introduction, we want to merge the conditions NUPBR and ${\rm NA}^{ps}$ coming from continuous time frictionless models and discrete time transaction costs models, respectively.
For this purpose, we consider the {\em cost value process} introduced by Bayraktar and Yu~\cite{bayraktar.yu.2018} 
as the cost to enter a portfolio position and defined as
\beao
V^{\rm cost}(\vp) := \vp^0 + \vp^+ \ov{X} -\vp^-\un{X}\quad \mbox{for}\ (\vp^0,\vp)\ \mbox{self-financing with}\ \vp^0_0=\vp_0=0.
\eeao
\begin{definition}\label{14.6.2023.1}
The market model satisfies the {\em prospective strict no unbounded profit with bounded risk}~(${\rm NUPBR}^{ps}$) condition iff
\beam\label{12.6.2023.1}
\left\{\sup_{t\in[0,T]}V^{\rm cost}_t(\vp)\ :\ (\vp^0,\vp)\in\mathcal{A}_0^1\right\}\quad\mbox{is bounded in $L^0$,}
\eeam
and for every sequence $(\vp^{0,n},\vp^n)_{n\in\bbn}\subseteq\mathcal{A}_0^1$ such that $(\vp^n)_{n\in\bbn}\subseteq(\bP)^\Pi$ 
and $(\vp^{0,n}_T,\vp^n_T)\to (C^0,C)$ a.s., where 
$(C^0,C)$ is a maximal element (in the sense that the convergence of $1$-admissible strategies cannot hold for a random vector
that strictly dominates $(C^0,C)$ with respect to the pointwise order), there exist forward convex combinations $(\lambda_{n,k})_{n\in\bbn,\ k=0,\ldots,k_n}$, $k_n\in\bbn$, i.e., $\lambda_{n,k}\in\bbr_+$ and $\sum_{k=0}^{k_n} \lambda_{n,k}=1$, such that
\beam\label{12.6.2023.2}
\sup_{n\in\bbn}\sup_{t\in[0,T]}V^{\rm cost}_t(\sum_{k=0}^{k_n}\lambda_{n,k}\vp^{n+k})<\infty\quad\mbox{a.s.}
\eeam
\end{definition}
 \begin{remark}
 In the special case that $\un{X}=\ov{X}$ the condition ${\rm NUPBR}^{ps}$ coincides with NUPBR (the latter still considered for num\'eraire-free 
$1$-admissible strategies). Namely, when the running supremum of a frictionless wealth process reaches a pre-specified level, the value can be conserved by liquidating the portfolio (for condition (\ref{12.6.2023.2}) in frictionless markets we refer to the proof of the second assertion of Theorem~\ref{25.7.2022.1}). This transfer of the cost value to time~$T$ is not possible in models with friction, and the condition has to be stated
directly in terms of pathwise suprema. Condition~(\ref{12.6.2023.2}) is weaker than assuming $L^0$-boundedness of the convex hull in (\ref{12.6.2023.1}). The latter would be needed to obtain a finite limit of forward convex combinations from arbitrary sequences of $1$-admissible strategies by the $L^0$-version of Koml\'os theorem (see \cite[Lemma~A1.1]{delbaen1994general} and the counterexample in \cite[Remark~4 in the appendix]{delbaen1994general}).
Economically convex combinations of pathwise suprema would not be very meaningful and in general larger than the suprema of mixed strategies considered in (\ref{12.6.2023.2}). 
\end{remark}
\begin{remark}
From the subadditivity of $C^{S,S'}$ it follows that the set $(\bP)^\Pi$ is convex and $V^{\rm cost}(\sum_{k=0}^{k_n}\lambda_{n,k}\vp^{n+k})\ge
\sum_{k=0}^{k_n}\lambda_{n,k}V^{\rm cost}(\vp^{n+k})$ for all $(\vp^n)_{n\in\bbn}\subseteq(\bP)^\Pi$ and $(\lambda_{n,k})_{n\in\bbn,\ k=0,\ldots,k_n}\subseteq\bbr_+$ with $\sum_{k=0}^{k_n} \lambda_{n,k}=1$. The same holds for $V^{\rm liq}$ and $V^{S,S'}$.
On the other hand, in contrast to $L(S)$, the set~$L(\un{X},\ov{X})$ need not be convex. The reason is that the wealth of a mixed strategy would have to be $+\infty$ (which is excluded) if trading costs cancel by the mixing. We note that for the arbitrage theory this is no problem: in the frictionless market
with price process $S_t=t$ the ``arbitrage strategy''~$\vp_t=1/t$ does not lie in $L(S)$, but NUPBR is already ruled out by bounded strategies.\\ 

By convention~(\ref{27.7.2022.1}), we could reformulate the ${\rm NUPBR}^{ps}$ condition by considering only $1$-admissible strategies with $\vp^0_T,\vp_T\ge -1$. This means that at $T$ the investor must actually limit her debts, not just be able to do so. Thus, it makes sense to work with the pointwise order than comparing terminal positions in Definition~\ref{14.6.2023.1}.
\end{remark}
\begin{remark}[Discrete time]
In the case of only one risky asset, that we consider in this paper, the discrete time ${\rm NA}^{ps}$ condition from \cite[Definition~2.3]{kuehn.molitor.2019} reduces to
\beao
\forall\ \vp\ \mbox{predictable},\ t=0,1,\ldots,T\quad (  V^{\rm liq}_t(\vp)\ge 0\ \implies\ V^{\rm cost}_t(\vp)= V^{\rm liq}_t(\vp)=0 )
\eeao
(this follows by the arguments in the proof of Lemma~\ref{7.5.2022.1}). 
In discrete time, we have that ${\rm NUPBR}^{ps}\Leftrightarrow {\rm NA}^{ps}$. The implication ``$\Rightarrow$'' follows from the fact that
$V^{\rm liq}_t(\vp)\ge 0$ implies that $V^{\rm liq}_u(\vp)\ge 0$ for $u=0,1,\ldots,t-1$, where $(\vp^0,\vp)$ is a discrete time strategy not necessarily admissible ex ante (cf. again the proof of Lemma~\ref{7.5.2022.1}). Let us show
``$\Leftarrow$''. On the set~$\{\ov{X}_0 = {\rm ess inf}_{\mathcal{F}_0}\un{X}_1\}\in\mathcal{F}_0$ a purchase at time~$0$ is reversible in the sense of \cite[Definition~3.2]{kuehn.molitor.2019}. Thus, under ${\rm NA}^{ps}$ we have $\ov{X}_1=\un{X}_1=\ov{X}_0$ on $\{\ov{X}_0 = {\rm ess inf}_{\mathcal{F}_0}\un{X}_1\}$, and the purchase can be postponed to time~$1$. 
On the complement~$\{\ov{X}_0 > {\rm ess inf}_{\mathcal{F}_0}\un{X}_1\}$ the number of risky assets of a $1$-admissible strategy is bounded by 
$(1+{\rm ess inf}_{\mathcal{F}_0}\un{X}_1)/(\ov{X}_0 - {\rm ess inf}_{\mathcal{F}_0}\un{X}_1)$. For a short position in the risky asset one gets a similar random bound. Applying this argument inductively for $t=0,1,2,\ldots$ we obtain the following.
For every sequence of $1$-admissible strategies~$(\vp^{0,n},\vp^n)_{n\in\bbn}$ there is a corresponding sequence $(\wt{\vp}^{0,n},\wt{\vp}^n)_{n\in\bbn}$ with the same liquidation and cost value processes such that
$(\wt{\vp}^n_t)_{n\in\bbn}$ is $L^0$-bounded for every $t=1,\ldots,T$ (Namely, when following the strategies we only realize the 
purely nonreversible parts of the orders in the sense of \cite[Lemma~3.3]{kuehn.molitor.2019}. This means that, for example, at time $0$ purchases are only realized on $\{\ov{X}_0 > {\rm ess inf}_{\mathcal{F}_0}\un{X}_1\}$). The $L^0$-boundedness of the adjusted positions yields that ${\rm NUPBR}^{ps}$ is satisfied. 
\end{remark}
\begin{assumption}\label{11.5.2022.1}
Let $\tau$ be  a stopping time such that on $\{\tau<\infty\}$ there starts an excursion of the actual spread~$X:=\ov{X}-\un{X}$ away from zero (cf. (\ref{24.3.2023.1})). Then, there exist a stopping time $\sigma$ with $\sigma\ge \tau$ and
$\sigma>\tau$ on $\{\tau<\infty,\ X_\tau=0\}$ 
and probability measures~$Q^1$ and $Q^2$ equivalent to $P$ 
such that $\ov{X}^{\sigma}-\ov{X}^{\tau}$ is a $Q^1$-supermartingale 
and $\un{X}^{\sigma}-\un{X}^{\tau}$ is a $Q^2$-submartingale.  Let $L^1$ and $L^2$ be the stochastic logarithm of the corresponding density process $Z^{Q^1}$ and $Z^{Q^2}$, 
respectively, i.e., $Z^{Q^i} = 1 + Z^{Q^i}_-\mal L^i$. For each $A\in\mathcal{P}$, we define $Z^A$ by $Z^A=1 +  Z^A_-1_A\mal L^1 + Z^A_- 1_{(\Omega\times [0,T])\setminus A}\mal L^2$ and 
assume that $Z^A$, $A\in\mathcal{P}$, are true martingales defining probability measures~$Q^A$, for which we, in turn, assume that
they are uniformly equivalent to $P$, i.e.,
\beam\label{16.8.2022.2}
\forall\eps>0\ \exists \delta>0\ \forall A\in \mathcal{P}\ \forall B\in\mathcal{F}\quad (P(B)\le \delta\ \implies Q^A(B)\le \eps)
\eeam
\beam\label{16.8.2022.1}
\mbox{and}\quad\forall\eps>0\ \exists \delta>0\ \forall A\in \mathcal{P}\ \forall B\in\mathcal{F}\quad (Q^A(B)\le \delta\ \implies P(B)\le \eps).
\eeam
Let $\Gamma(\tau)$ be the end time of the excursion. Analogously to above, there exist a stopping time $\wt{\sigma}$ with $\wt{\sigma}\le \Gamma(\tau)$ and
$\wt{\sigma}<\Gamma(\tau)$ on $\{\Gamma(\tau)<\infty,\ X_{\Gamma(\tau)-}=0\}$ and probability measures~$\wt{Q}^1$ and $\wt{Q}^2$ equivalent to $P$ such that $\ov{X}^{\Gamma(\tau)}1_{\{X_{\Gamma(\tau)-}>0\}} + \ov{X}^{\Gamma(\tau)-}1_{\{X_{\Gamma(\tau)-}=0\}}
-\ov{X}^{\wt{\sigma}}$ is a 
$\wt{Q}^1$-submartingale and $\un{X}^{\Gamma(\tau)}1_{\{X_{\Gamma(\tau)-}>0\}} + \un{X}^{\Gamma(\tau)-}1_{\{X_{\Gamma(\tau)-}=0\}}-\un{X}^{\wt{\sigma}}$ is a $\wt{Q}^2$-supermartingale. The pair~$(\wt{Q}^1,\wt{Q}^2)$ satisfies the same pasting conditions as $(Q^1,Q^2)$ from above.
\end{assumption}

\begin{remark}
First of all, it should be noted that Assumptions~\ref{20.8.2022} and \ref{11.5.2022.1} are automatically satisfied if the efficient friction condition in the sense of $P(\inf_{t\in[0,T]}(\ov{X}_t-\un{X}_t)>0)=1$ holds. Thus, the assumptions are weaker than
those in the previous literature. In addition, they are automatically satisfied if the spread can only move away from zero or come back by jumps.

But unfortunately, the situation is extremely complicated when this happens continuously.
Already to construct self-financing portfolios, \cite{kuehn.molitor.2022} needs Assumption~\ref{20.8.2022} that rules out Brownian local time behavior, and under which the starting times of excursions of the spread away from zero are stopping times. Assumption~\ref{11.5.2022.1} is 
for beginning and end, and it goes beyond Assumption~\ref{20.8.2022}. It requires that for an arbitrarily short random duration at the beginning of an excursion, the market would be arbitrage-free even if purchases could be liquidated at the ask price and short positions could be closed at the bid price. This can be regarded as a local tightening of the ${\rm NUPBR}^{ps}$ condition at the starting time of an excursion:
The process $V^{\rm cost}$ values a purchased position at the higher ask price as long as it is in the portfolio. In the fictitious frictionless market described above, the position can even be liquidated at the ask price, i.e., the cost value can be realized. With short sells it is the other way round, i.e., the fictitious frictionless market consists of different price processes for long and short positions. Since long and short positions cannot be hold simultanously, there is a non-convex trading constraint, and separation theorems are not applicable.
The condition is mirrored at the end of an excursion. 
%
%
\end{remark}

\begin{remark}
Condition~(\ref{16.8.2022.2}) is equivalent to the condition that $\{Z^A_T : A\in\mathcal{P}\}$ is uniformly integrable.
By the Neyman-Pearson lemma, condition~(\ref{16.8.2022.1}) is equivalent to $\inf\{q_\eps(Z^A_T) : A\in\mathcal{P}\}>0$ for all $\eps>0$, where $q_\eps(Z^A_T)$ denotes the right $\eps$-quantile of the distribution of $Z^A_T$ under $P$. For price processes of the form $\ov{X}_t=\ov{S}_t=B_t+\ov{\mu}t$ and $\un{X}_t=\un{S}_t=B_t+\un{\mu}t$ with a standard Brownian motion $B$ and $\ov{\mu}\ge \un{\mu}$, Assumption~\ref{11.5.2022.1} can easily be verified by using Novikov's condition.
\end{remark} 

\begin{definition}
A consistent price system~(CPS) is a pair $(S,Q)$ such that $Q$ is
a probability measure equivalent to $P$ and $S$ is a $Q$-martingale with $\un{X}\le S\le \ov{X}$
(and thus a fortiori $\un{S}\le S\le \ov{S}$). 
\end{definition}

\begin{theorem}\label{25.7.2022.1}
If the market model satisfies Assumption~\ref{20.8.2022}, Assumption~\ref{11.5.2022.1},
${\rm NA}^{nf}$, and ${\rm NUPBR}^{ps}$, then there exists a CPS.
Conversely, if $(S,Q)$ is a CPS, then the bid-ask model with bid price~$S$ and ask price~$S$ satisfies 
Assumption~\ref{20.8.2022}, Assumption~\ref{11.5.2022.1}, ${\rm NA}^{nf}$, and ${\rm NUPBR}^{ps}$.
\end{theorem}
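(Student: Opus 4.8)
The statement is an equivalence, and I would prove the two directions separately. For sufficiency (the four conditions $\Rightarrow$ a CPS) the plan is: first establish, as the technical core, that the cone of attainable terminal portfolios is Fatou closed, and then extract a CPS by a Kreps--Yan type separation. For necessity (a CPS $\Rightarrow$ the four conditions) the plan is a direct verification, after observing that a CPS generates a \emph{frictionless} market with a true martingale price.

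\textbf{Sufficiency, step 1: closedness.} First I would prove that the convex cone $\mathcal K$ of terminal portfolios attainable from zero endowment by admissible strategies, enlarged by free disposal (so $\mathcal K=\{(\vp^0_T,\vp_T):(\vp^0,\vp)\in\mathcal A_0\}-L^0_+(\bbr^2)$), is Fatou closed; this is the content of Theorem~\ref{7.5.2022.2}. I would follow the roadmap of the introduction. Given a sequence of $M$-admissible strategies, ${\rm NUPBR}^{ps}$ together with the worst-case-risk estimate at the beginning of each excursion --- which is exactly where Assumption~\ref{11.5.2022.1} enters, through the pasted measures $Q^A$ --- bounds the number of shares held on each friction interval $\mathcal I^c_i$; a semimartingale price system then bounds the accumulated costs, and, since the spread is bounded away from zero on such an interval, also the total variation of the strategies. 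Hence on every $\mathcal I^c_i$ (and, using Assumption~\ref{20.8.2022} to handle the endpoints, on every $\mathcal I^{fc}_i$) one applies the Campi--Schachermayer stochastic Helly theorem and passes to forward convex combinations, obtaining pointwise convergence of strategies and of the associated wealth processes; on the frictionless parts of the $\mathcal I^{fc}_i$ the (entirely different) Delbaen--Schachermayer wealth-process arguments are used instead, exploiting cost-free switching between strategies. The limit then has to be glued across the transition times (which may be unpredictable or approached continuously) and checked to lie again in $L(\un X,\ov X)$ and to be $M$-admissible; here the membership conditions~(\ref{27.9.2022.2}) and~(\ref{23.8.2020.1}) of Definition~\ref{def:DefinitionL} together with convention~(\ref{27.7.2022.1}) do the bookkeeping.

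\textbf{Sufficiency, step 2: from closedness to a CPS.} Granting Fatou closedness, a standard lemma (via Krein--Smulian) shows that a Fatou-closed cone containing $-L^0_+(\bbr^2)$ has $\sigma(L^\infty,L^1)$-closed intersection with $L^\infty(\bbr^2)$; since ${\rm NA}^{nf}$ says this cone meets $L^0_+(\bbr^2)$ only in $0$, the Kreps--Yan separation theorem (in its $\bbr^2$-valued form) yields $Z=(Z^0,Z^1)\in L^1(\bbr^2)$ with $Z^0,Z^1>0$ a.s., normalised so that $E[Z^0]=1$, and $E[Z^0\xi^0+Z^1\xi^1]\le 0$ for every $(\xi^0,\xi^1)$ attainable from zero. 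Put $Z^i_t:=E[Z^i\mid\F_t]$, $dQ/dP:=Z^0_T$, and $S:=Z^1/Z^0$; then $Z^0S=Z^1$ is a true $P$-martingale, so $S$ is a true $Q$-martingale and in particular a semimartingale. To obtain $\un X\le S\le\ov X$, fix a stopping time $\sigma$ and $A\in\F_\sigma$ and the localising times $R_k:=\inf\{t:\ov X_t\ge k\}$: on $A\cap\{\sigma<R_k\}$ the strategy that buys one share at $\sigma$ and holds it until $T$ is $k$-admissible (because $\un X\ge 0$, so one never owes more on the liquidation side than the purchase price $\ov X_\sigma<k$), and its terminal portfolio, equal to $(-\ov X_\sigma,1)$ on $A\cap\{\sigma<R_k\}$ and $(0,0)$ elsewhere, lies in $L^\infty(\bbr^2)$. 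Feeding it into the separating inequality, using the martingale property of $Z^0,Z^1$ (optional sampling) and letting $k\to\infty$ (monotone convergence; $\{\sigma<R_k\}\uparrow\{\sigma\le T\}$ since $\ov X$ is c\`adl\`ag, hence pathwise bounded), one gets $E_Q[S_\sigma 1_A]\le E_Q[\ov X_\sigma 1_A]$, so $S_\sigma\le\ov X_\sigma$ a.s.; the symmetric short-and-hold strategy gives $\un X_\sigma\le S_\sigma$. As $\un X,S,\ov X$ are c\`adl\`ag, the optional section theorem upgrades this to $\un X\le S\le\ov X$ up to indistinguishability, so $(S,Q)$ is a CPS.

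\textbf{Necessity.} Given a CPS $(S,Q)$, put $\un S=\ov S=S$. Since $S$ is a true $Q$-martingale, optional sampling forbids $\sup_{u\in[t,T]}(S_u-S_t)$ from being $\ge\eps$ a.s.\ on an $\F_t$-set (otherwise the first hitting time of $S_t+\eps$ would give a stopped value of strictly positive $Q$-mean), and symmetrically for the infimum; hence $\un X=\ov X=S$ and the model is frictionless with (true $Q$-)martingale price $S$. Then Assumptions~\ref{20.8.2022} and~\ref{11.5.2022.1} hold vacuously because the spread $X=\ov X-\un X\equiv 0$ has no excursions away from zero. For ${\rm NA}^{nf}$: if $(\vp^0,\vp)\in\mathcal A_0$ has $\vp^0_T,\vp_T\ge 0$, then frictionlessly $\vp^0_T+\vp_TS_T=(\vp\mal S)_T$, and $M$-admissibility makes $\vp\mal S+M(1+S)$ a nonnegative local $Q$-martingale, hence a supermartingale, so $E_Q[(\vp\mal S)_T]\le 0$; combined with $\vp^0_T+\vp_TS_T\ge 0$ and $S_T=\ov S_T>0$ this forces $\vp^0_T=\vp_T=0$. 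Finally ${\rm NUPBR}^{ps}$ reduces to ${\rm NUPBR}$ in the frictionless case: $L^0$-boundedness of $\{\sup_tV^{\rm cost}_t(\vp):(\vp^0,\vp)\in\mathcal A^1_0\}$ follows from Doob's maximal inequality for the nonnegative $Q$-supermartingale $\vp\mal S+M(1+S)$, and~(\ref{12.6.2023.2}) is obtained by the stopping/liquidation argument mentioned after Definition~\ref{14.6.2023.1} (when the running supremum of the wealth hits a prescribed level one liquidates, conserving the cost value at $T$, after which a Koml\'os/forward-convex-combination argument under $Q$ closes the gap).

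\textbf{Main obstacle.} The genuinely hard part is the Fatou closedness underlying sufficiency --- controlling the interaction of frictionless and friction intervals at transitions that may be unpredictable or continuous, and verifying that the limit obtained after forward convex combinations still satisfies the intricate membership conditions for $L(\un X,\ov X)$ with the same admissibility constant. By contrast, the separation/CPS-construction step and the whole necessity direction are routine once closedness and the structural results of~\cite{kuehn.molitor.2022} are available.
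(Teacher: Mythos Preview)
Your proposal is correct and follows essentially the same architecture as the paper: Fatou closedness of $\mathcal C_0$ (Theorem~\ref{7.5.2022.2}) $\Rightarrow$ weak$^*$-closedness $\Rightarrow$ Kreps--Yan separation $\Rightarrow$ construction of a CPS via $S:=Z^1/Z^0$; and for the converse, the observation that a true $Q$-martingale price forces $\un X=\ov X=S$, after which everything is frictionless. Two small differences are worth noting. For the inequality $\un X\le S\le\ov X$, the paper avoids your localisation by $R_k$ altogether: it buys only $1\wedge(1/\ov X_\tau)$ shares at cost $\ov X_\tau\wedge 1$, so the test strategy is bounded in both components from the outset and the contradiction is immediate. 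For condition~(\ref{12.6.2023.2}) in the necessity direction, your ``liquidate-when-the-running-sup-hits-a-level plus Koml\'os'' sketch is a bit loose; the paper instead observes that the frictionless $(1+S)$-num\'eraire market satisfies NFLVR and invokes \cite[Proposition~3.2 and Lemma~4.5]{delbaen1994general} to conclude that any maximal sequence of wealth processes is $d_{up}$-Cauchy, which directly bounds $\sup_n\sup_t V^{\rm cost}_t(\vp^n)$. Your route can be made to work, but you should be explicit that the forward convex combinations are already provided by the $d_{up}$-Cauchy property of the maximal sequence rather than by a separate Koml\'os step.
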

\begin{theorem}\label{7.5.2022.2}
Under Assumption~\ref{20.8.2022}, Assumption~\ref{11.5.2022.1}, ${\rm NA}^{nf}$, and ${\rm NUPBR}^{ps}$, the cone $\mathcal{C}_0:=\{(\vp^0_T,\vp_T) : (\vp^0,\vp)\in\mathcal{A}\}-L^0(\Omega,\mathcal{F},P;\bbr^2_+)$ is Fatou-closed in the sense that for every $M\in\bbr_+$, every sequence~$(C^{0,n},C^n)_{n\in\bbn}\subseteq \mathcal{C}_0$ with $C^{0,n},C^n\ge -M$  for all $n\in\bbn$, and every\linebreak $(C^0,C)\in L^0(\Omega,\mathcal{F},P;\bbr^2)$ with $(C^{0,n},C^n)\to (C^0,C)$\ a.s. as $n\to\infty$, there exists a $(\vp^0,\vp)\in\mathcal{A}$ with $(\vp^0_T,\vp_T)\ge (C^0,C)$ a.s.    
\end{theorem}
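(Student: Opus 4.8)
The proof follows the usual architecture of closedness theorems (Delbaen--Schachermayer~\cite{delbaen.schachermayer.1998}, Guasoni--L\'epinette--R\'asonyi~\cite{guasoni2012fundamental}), but splits the time axis along the decomposition~(\ref{24.3.2023.1}) and argues by completely different means on the ``frictionless'' pieces and on the pieces ``with costs''. First I would pick, for each $n$, a strategy $(\vp^{0,n},\vp^n)\in\mathcal{A}$ with $(\vp^{0,n}_T,\vp^n_T)\ge(C^{0,n},C^n)\ge(-M,-M)$. Since a terminal portfolio with both coordinates $\ge -M$ is $M$-admissible at time~$T$, the terminal liquidation value is bounded below; invoking Lemma~\ref{7.5.2022.1} (the continuous-time analogue of \cite[Proposition~4.9]{guasoni2012fundamental}) together with ${\rm NA}^{nf}$ I would upgrade this to uniform admissibility, i.e.\ $(\vp^{0,n},\vp^n)\in\mathcal{A}^{M'}$ for a single $M'\in\bbr_+$. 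Using the bounded-approximation property of $L(\un X,\ov X)$ recorded in the remark after Definition~\ref{def:DefinitionL} and a diagonal argument I may assume $\vp^n\in(\bP)^\Pi$ and that $(\vp^{0,n}_T,\vp^n_T)$ converges a.s.\ to a limit $\ge(C^0,C)$; replacing $(C^0,C)$ by a dominating maximal such limit, I may assume the limit is maximal, so that condition~(\ref{12.6.2023.2}) of ${\rm NUPBR}^{ps}$ becomes applicable.

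\emph{Control on the pieces with costs.} I would invoke~(\ref{12.6.2023.2}) to produce forward convex combinations $\psi^j:=\sum_{k}\lambda_{j,k}\vp^{j+k}$ with $\sup_t V^{\rm cost}_t(\psi^j)$ bounded a.s.\ uniformly in $j$ (the first part of ${\rm NUPBR}^{ps}$ only gives the analogous $L^0$-bound for the original sequence, and the mixed strategies really need the stronger statement because $\vp\mapsto V^{\rm cost}(\vp)$ is merely concave). Then comes the worst-case-risk step: on a stopping-time interval contained in $\mathcal{I}^c_i$, or in the part with costs of $\mathcal{I}^{fc}_i$, on which the actual spread satisfies $X\ge\delta>0$, one has $\delta|\psi^j_t|\le X_t|\psi^j_t|=V^{\rm cost}_t(\psi^j)-V^{\rm liq}_t(\psi^j)\le V^{\rm cost}_t(\psi^j)+M'(1+\ov X_t)$, so, $\ov X$ being a.s.\ bounded on $[0,T]$, $\sup_j\sup_t|\psi^j_t|<\infty$ a.s.\ there. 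Assumption~\ref{11.5.2022.1} provides, via the measures $Q^A$, $\widetilde Q^i$ and their uniform $P$-equivalence (\ref{16.8.2022.2})--(\ref{16.8.2022.1}), the (super/sub)martingale comparisons needed to carry this bound over to the (possibly unpredictable) endpoints of the excursions where $X$ degenerates, and Assumption~\ref{20.8.2022} guarantees the exhausting intervals have positive length. Once the number of shares is bounded, a fixed semimartingale price system $S$ controls the costs $C^S((\psi^j)^+)$, $C^{S}(-(\psi^j)^-)$, and since on $\{X\ge\delta\}$ every unit of trading costs at least~$\delta$, also $\sup_j\Var(\psi^j)<\infty$ a.s.\ on each piece with costs.

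\emph{Frictionless pieces, Helly, and Delbaen--Schachermayer.} On the frictionless pieces (where $\un X=\ov X$) there is no worst-case risk, but there the cost term vanishes, so $V^{S,S}(\psi^j)=\psi^j\mal S$ is a genuine stochastic integral, bounded below by $M'$-admissibility on that piece --- precisely the Delbaen--Schachermayer situation. Passing to \emph{further} forward convex combinations $\chi^m:=\sum_{l}\mu_{m,l}\psi^{m+l}$ --- which, by the triangle inequality, keep $\sup_m\sup_t|\chi^m_t|<\infty$ and $\sup_m\Var(\chi^m)<\infty$ on the pieces with costs and keep the integrals bounded below on the frictionless pieces --- I would simultaneously (i) apply the stochastic version of Helly's theorem (cf.\ \cite{kuehn.molitor.2022}) on the countably many pieces with costs to get $\chi^m\to\vp$ pointwise a.s.\ there, with finite-variation limit, and (ii) apply the closedness of the set of admissible stochastic integrals on the frictionless pieces to get $\chi^m\mal S\to\vp\mal S$ in $d_{up}$ for some $\vp\in L(S)$. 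Diagonalizing the choice of $(\mu_{m,l})$ over the pieces, the limits assemble into a single predictable $\vp$ and an optional l\`agl\`ad candidate wealth process $V$.

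\emph{Gluing and conclusion.} It then remains to verify that $\vp\in L(\un X,\ov X)$ with this $V$, i.e.\ to check~(\ref{3.6.2023.2}), (\ref{27.9.2022.1}), (\ref{27.9.2022.2}) and the competing-sequence optimality~(\ref{23.8.2020.1}) for the sequence $(\chi^m)$, and to read off admissibility: the strategy with stock part $\vp$ and bond part determined by the self-financing condition of Definition~\ref{def:DefinitionL} is $M'$-admissible by lower semicontinuity of $V^{\rm liq}$ along the $\chi^m$ (legitimate since, by subadditivity of $C^{S,S'}$, the maps $\vp\mapsto V^{\rm liq}(\vp),V^{\rm cost}(\vp)$ are concave), and $(\vp^0_T,\vp_T)$ is the a.s.\ limit of $(\chi^{0,m}_T,\chi^m_T)$, which dominates a convex combination of tail elements of $\big((C^{0,n},C^n)\big)_{n}$, hence $\ge(C^0,C)$ a.s. The main obstacle is precisely this gluing across the transitions between frictionless periods and periods with friction --- which can occur both continuously and by jumps --- namely showing that the Helly-limit on the friction side and the Delbaen--Schachermayer-limit on the frictionless side fit together into one element of $L(\un X,\ov X)$, with a well-behaved wealth process at the endpoints of the countably many excursions and satisfying~(\ref{27.9.2022.2}) and~(\ref{23.8.2020.1}); the uniform-admissibility upgrade in the reduction step, which rests on Assumption~\ref{11.5.2022.1}, is a close second source of difficulty.
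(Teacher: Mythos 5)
Your proposal captures the broad architecture the paper actually uses --- uniform $M$-admissibility via Lemma~\ref{7.5.2022.1}, reduction to bounded strategies in $(\bP)^\Pi$ and a maximal limit, ${\rm NUPBR}^{ps}$ together with the worst-case-risk argument to bound the number of shares away from the endpoints of an excursion, Assumption~\ref{11.5.2022.1} near the degenerating endpoints, Helly on the pieces with costs and Delbaen--Schachermayer-type reasoning on the frictionless pieces, then gluing. However, you are missing the central technical device that makes the frictionless side of the argument (and the gluing) go through: the \emph{fictitious dormant market}. You write that on the frictionless pieces one ``applies the closedness of the set of admissible stochastic integrals \ldots\ to get $\chi^m\mal S\to\vp\mal S$'', but no such closedness is directly available on a frictionless piece in isolation. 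The admissibility constant on a given frictionless piece depends on the gains and losses accrued on earlier pieces, and, more importantly, the Delbaen--Schachermayer argument (in particular the analogue of \cite[Lemma~4.5]{delbaen1994general} and the upgrade to \'Emery convergence) rests on the concatenation property of wealth processes, which is destroyed across an excursion: one cannot switch from strategy $n$ to strategy $m$ at no cost while the spread is positive. The paper restores concatenability by introducing the generalized time change~$(\tau_t,D_t)$, the not-right-continuous filtration $\wt{\bbf}$, and the dormant wealth processes $\wt{V}^\psi$ of (\ref{11.8.2022.1}), which are $\wt{\bbf}$-optional semimartingales with double jumps (Lemma~\ref{3.6.2024.01}). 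In this dormant model one is allowed to switch strategies only at frictionless times, with the anticipated gains over the pending excursion packaged as a right jump; this yields Lemma~\ref{24.4.2023.1} and the \'Emery-Cauchy property Lemma~\ref{2.6.2023.2}, whose limit $\wt{V}$ is then also the scaffolding used in Step~9 to assemble the global optional wealth process $V$ required by Definition~\ref{def:DefinitionL}(b), and, via identity~(\ref{2.6.2023.1}) and M\'emin's theorem, to produce the limiting integrand on the frictionless pieces (Step~8). Since you yourself flag the gluing as ``the main obstacle'' without saying how to surmount it, and the route you sketch (piecewise closedness of admissible integrals) would not go through, this is a genuine gap rather than an abbreviation.

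A secondary inaccuracy: Assumption~\ref{11.5.2022.1} is not used to ``carry the bound'' on the number of shares over to the degenerating endpoints of an excursion --- the shares may blow up there, and no such bound is available. It is used in Steps~3a--3b to show, via the $Q^A$/$\wt{Q}^A$-super- and submartingale structure and Proposition~\ref{7.6.2022}, that the \emph{wealth processes} $V^{S,S'}(\vp^n)$ cannot fluctuate significantly in a shrinking neighbourhood of such an endpoint (estimates~(\ref{9.6.2022.1}) and~(\ref{5.9.2022.1})); this control on the values rather than on the strategies is what lets one take $N\to\infty$ in the Helly construction and patch the interior limit up to the endpoint of the excursion.
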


\section{Proof of Theorems~\ref{25.7.2022.1} and \ref{7.5.2022.2}}\label{6.6.2023.01}

The following lemma corresponds to \cite[Proposition~4.9]{guasoni2012fundamental}. Intuitively, it states that at any intermediate time, the
credit line (in terms of bonds and stocks) required for the trading strategy (that is followed so far) is minimized by freezing the portfolio and close the stock position at the best price that can be achieved for sure now or in the future. Put differently, consider an investor who has 
built up a, say, positive stock position at some time~$t$. Then, to minimize her worst-case risk she just has to sell the stocks at price~$\un{X}_t$, and complicated dynamic trading strategies cannot improve the result.

\begin{lemma}\label{7.5.2022.1}
Let $M\in\bbr_+$. Assume that the model satisfies ${\rm NA}^{nf}$.
Let $(\vp^0,\vp)$ be an admissible strategy with $P(\vp^0_T\ge -M, \vp_T\ge -M)=1$. Then, $(\vp^0,\vp)$ is  $M$-admissible.
\end{lemma}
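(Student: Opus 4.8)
\emph{Approach.} The plan is to argue by contradiction against ${\rm NA}^{nf}$, mimicking the classical frictionless argument with ``wealth'' replaced by the $M$-shifted liquidation value. Recall the frictionless case: if $\vp\mal S\ge -M'$ and $(\vp\mal S)_T\ge -M$, then $\vp\mal S\ge -M$, because whenever $(\vp\mal S)_\tau<-M$ on some $A\in\mathcal F_\tau$ with $P(A)>0$, the strategy $\vp\,1_{\zu \tau_A,T\zu}$ has zero endowment, is admissible, and has terminal wealth $(\vp\mal S)_T-(\vp\mal S)_\tau>0$ on $A$ and $=0$ off $A$ --- an arbitrage. I would implement the same scheme in the bid--ask model, combining the ``restarted'' strategy with a single elementary move that brings its terminal stock position to $-M$, so that after adding the $M$-shift it becomes nonnegative.

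\emph{Steps.} Suppose $(\vp^0,\vp)$ is $M'$-admissible with $\vp^0_T,\vp_T\ge -M$ but not $M$-admissible, and set $W:=\vp^0+M+(\vp+M)^+\un{X}-(\vp+M)^-\ov{X}$, the liquidation value of the strategy shifted by $M$ bonds and $M$ stocks. From $\vp_T\ge -M$ we get $(\vp_T+M)^-=0$, hence $W_T=(\vp^0_T+M)+(\vp_T+M)\un{X}_T\ge 0$, whereas by assumption $W$ is not $\ge 0$ up to evanescence; note also $\Vl(\vp)\le W$ always. A section argument --- done with care for the predictable/c\`adl\`ag mismatch, using a predictable time and the predictable envelopes of Definition~\ref{13.1.2023.3}, since the shortfall may occur ``from the left'', cf.\ Example~\ref{15.1.2023.3} --- yields a stopping time $\tau$, a set $A\in\mathcal F_\tau$ with $P(A)>0$ and an $\eps>0$ with $W_\tau<-\eps$ on $A\subseteq\{\tau<\infty\}$; after shrinking $A$ I may also assume $\sup_t\un{X}_t,\sup_t\ov{X}_t,|\vp_\tau|,|\vp_{\tau+}|,|\vp^0_\tau|\le N$ there. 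Let $(\vp^{0,\mathrm{fr}},\vp^{\mathrm{fr}})$ coincide with $(\vp^0,\vp)$ on $\auf 0,\tau_A\zu$ and, on $A$ after $\tau$, rebalance at once to the stock position $-M$ at the actual prices (selling $\vp_\tau+M$ shares at $\un{X}_\tau$ if $\vp_\tau+M\ge 0$, otherwise buying $-(\vp_\tau+M)$ shares at $\ov{X}_\tau$) and then freeze; using the truncation one checks that it is again admissible, and $\vp^{0,\mathrm{fr}}_T=W_\tau-M$ on $A$. Finally, let $(\psi^0,\psi)$ be the zero-endowment self-financing strategy with stock process $\psi:=(\vp+M)\,1_{\zu \tau_A,T\zu}=\vp-\vp^{\mathrm{fr}}$; then $\psi$ is admissible, $\psi_T=\vp_T+M\ge 0$ on $A$, $(\psi^0,\psi)\equiv(0,0)$ off $A$, and --- this is the heart of the proof --- $\psi^0_T=\Pi(\psi)_T\ge \vp^0_T-\vp^{0,\mathrm{fr}}_T=\vp^0_T-W_\tau+M>-M+\eps+M=\eps>0$ on $A$. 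Thus $(\psi^0,\psi)\in\mathcal A_0$ is an arbitrage, contradicting ${\rm NA}^{nf}$.

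\emph{Main obstacle.} The crux is the inequality $\Pi(\psi)_T\ge \vp^0_T-\vp^{0,\mathrm{fr}}_T$ on $A$. Since the bond position of the self-financing strategy with stock process $\psi=(\vp+M)1_{\zu \tau_A,T\zu}$ is by definition $\Pi(\psi)$, and subadditivity of the cost term (equivalently, superadditivity of $\Pi$) only yields $\Pi(\psi)\le \vp^0-\vp^{0,\mathrm{fr}}$, one must bound from above the extra ``start-up'' cost that $\psi$ incurs when re-entering the position $\vp+M$ at the single time $\tau$. This cost is governed by the size of the spread $\ov{X}-\un{X}$ at and just after $\tau$: it vanishes on frictionless pieces, where the argument collapses to the frictionless one, and on intervals with friction it has to be controlled via Assumption~\ref{20.8.2022} and the cost calculus of \cite{kuehn.molitor.2022} --- which is precisely the frictionless/with-friction transition that makes the paper so technical, and where one may have to place $\tau$ inside a frictionless portion of $\{W<-\eps\}$ or otherwise exploit the local structure of $\vp$. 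Two further, more routine, points are: the section and truncation must be performed so that $\psi$ still lies in $L(\un{X},\ov{X})$ and all strategies remain predictable (removing the finite-variation piece $\vp^{\mathrm{fr}}$ does not create the cost cancellation behind the non-convexity of $L(\un{X},\ov{X})$), and the admissibility of $\vp^{\mathrm{fr}}$ and of $\psi$ comes from the truncation.
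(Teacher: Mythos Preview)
Your contradiction scheme is natural, and you have correctly identified the obstacle: when you restart the strategy at a time~$\tau$ with $W_\tau<-\eps$, the ``start-up cost'' of entering the position~$\vp_\tau+M$ at the bid-ask prices can swallow the $\eps$ you gained, and subadditivity of the cost term goes the wrong way. Your proposed fixes (place $\tau$ in a frictionless portion, exploit the local structure of $\vp$) do not work in general: there may be no frictionless portion, and nothing in the hypotheses bounds $|\vp_\tau+M|(\ov{X}_\tau-\un{X}_\tau)$ in terms of $\eps$. So the proof, as written, has a genuine gap exactly where you say it does.

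The paper overcomes this by a different choice of stopping time. After reducing to the normalized situation $\vp^0_{t_0}=0$, $\vp_{t_0}=1$, $\vp_T=0$ and assuming ${\rm ess\,inf}_{\mathcal{F}_{t_0}}\vp^0_T>\un{X}_{t_0}+\eps$, it does \emph{not} stop at a time where the liquidation value is bad; instead it waits for a time at which \emph{entering the position is cheap}. Concretely,
\[
\tau:=\inf\{t\ge t_0 : (\ov{X}_t\le \un{X}_{t_0}+2\eps/3\ \mbox{or}\ \vp_t\le 0)\ \mbox{and}\ \un{X}_u\le \un{X}_{t_0}+\eps/3\ \forall u\in(t_0,t)\}.
\]
On the event $\{\sup_{u\ge t_0}\un{X}_u\le \un{X}_{t_0}+\eps/3\}$, which has positive probability by the very definition of the actual bid price, one has $\tau<\infty$ (since $\vp_T=0$). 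At $\tau$ either the ask has dropped to $\un{X}_{t_0}+2\eps/3$, so the whole long position can be rebuilt at that price, or the strategy is no longer long, so there is nothing to buy. In either case the cost to replicate $(\vp^0_\tau,\vp_\tau)$ from zero endowment is at most $\un{X}_{t_0}+2\eps/3$, while following $\vp$ to maturity yields $\vp^0_T\ge \un{X}_{t_0}+\eps$. Thus the switching strategy $\psi:=\vp 1_{\auf\tau_{\{\vp_\tau\le 0\}}\zu\cup\zu\tau,T\zu}$ is an arbitrage. The point is that the entry cost is controlled \emph{by construction} of $\tau$, not by any a posteriori estimate, and this is precisely the idea missing from your approach.
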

\begin{proof}
It is sufficient to show the following seemingly weaker implication:  
for all $t_0\in(0,T)$ and for all admissible strategies~$(\vp^0,\vp)$ with $\vp^0_{t_0}=0$, $\vp_{t_0}=1$, and $\vp_T=0$, we have ${\rm ess inf}_{\mathcal{F}_{t_0}}\vp^0_T \le \un{X}_{t_0}$
(by symmetry, the arguments for an initial stock position below $-M$ are the same). 

Since we already passed to the actual prices, we prefer not to use the processes $\un{S}$ and $\ov{S}$ anymore (even though this may be more 
appealing at this place). Instead, we observe that $\un{X}_t={\rm ess inf}_{\mathcal{F}_t}\sup_{u\in[t,T]}\un{X}_u$ and  $\ov{X}_t={\rm ess sup}_{\mathcal{F}_t}\inf_{u\in[t,T]}\ov{X}_u$. W.l.o.g. let $\mathcal{F}_{t_0}$ be $P$-trivial. Assume by contradiction that there exists $\eps>0$ such that $P(\vp^0_T\ge \un{X}_{t_0} + \eps)=1$. 
The interesting case is that $\ov{X}_{t_0}\ge \un{X}_{t_0} + \eps$ since otherwise already $(\vp^0-\ov{X}_{t_0},\vp) 1_{\zu t_0,T\zu}$ would be an arbitrage, but we do not have to make a case differentiation. Consider the stopping time
\beam\label{20.4.2023.1}
\tau:=\inf\{t\ge t_0 : (\ov{X}_t\le \un{X}_{t_0} + 2\eps/3\ \mbox{or}\ \vp_t\le 0)\ \mbox{and}\ \un{X}_u\le \un{X}_{t_0} + \eps/3\quad \forall u\in(t_0,t)\}.
\eeam
We have that $P(\tau<\infty)>0$ since $P(\vp_T=0,\ \sup_{u\in[t_0,T]}\un{X}_u\le\un{X}_{t_0} + \eps/3)>0$ by the definition of the conditional essential infimum.
In addition, before time~$\tau(\omega)<\infty$, the ask price is above $\un{X}_{t_0}+2\eps/3$ and the bid price below $\un{X}_{t_0}+\eps/3$. At time~$\tau(\omega)<\infty$, either the investor can buy the stock at a better price than ever before or she short-sells the stock for the first time.
Under the contradiction assumption, we can switch from strategy~$0$ to strategy~$\vp$ at time $\tau$ and obtain an arbitrage.
To formalize this, let us show several inequalities, at first only in the case that $(\vp^0,\vp)$ is almost simple. Here, $S$ and $S'$ are arbitrary semimartingale price systems. The first inequality reads
\beam\label{15.4.2023.1}
\ov{X}_{t_0-} + V^{S,S'}_\tau(\vp 1_{\auf t_0,T\zu}) + \vp_\tau(\ov{X}_\tau - S_\tau) \le \un{X}_{t_0} + 2\eps/3\ \mbox{on\ }\{\tau<\infty, \ov{X}_\tau\le \un{X}_{t_0}+2\eps/3, \vp_\tau>0\}.
\eeam
The LHS are the costs to build up the position~$(\vp^0_\tau,\vp_\tau)$ at time~$\tau$ (note that $V^{S,S'}_0(\cdot)=0$ and $\ov{X}_{t_0-} + V^{S,S'}_{t_0}(\vp 1_{\auf t_0,T\zu})$ is the wealth of $(\vp^0,\vp)$ at time $t_0$). For almost simple strategies, inequality~(\ref{15.4.2023.1}) 
follows from $\ov{X}_t\ge \un{X}_{t_0} + 2\eps/3 \ge \ov{X}_\tau$ and $\un{X}_{t_0} + \eps/3 \ge \un{X}_t$ for all $t\in[t_0,\tau)$, which means that the initial stock position cannot be liquidated at a better price than $\un{X}_{t_0}+\eps/3$ and further purchases reduce the cost value. 
Next, we have again for $(\vp^0,\vp)$ almost simple
\beam\label{17.4.2023.1}
\ov{X}_{t_0-} + V^{S,S'}_\tau(\vp 1_{\auf t_0,T\zu}) + \vp_\tau(\un{X}_\tau - S_\tau) \le \un{X}_{t_0} + \eps/3\ \mbox{on\ }\{\tau<\infty, \ov{X}_\tau > \un{X}_{t_0}+2\eps/3, \vp_\tau>0\}.
\eeam
The estimate is for the case that $\tau$ is triggered by a nonpositive stock position but the infimum is not attained.
Since the long position is liquidated immediately afterwards at the lower bid price, we do not need to have control over $\ov{X}_\tau$, but $\un{X}_\tau \le \un{X}_{t_0}+\eps/3$ holds on $\{\tau<\infty\}$. 
Finally, one has for almost simple strategies
\beam\label{17.4.2023.2}
\ov{X}_{t_0-} + V^{S,S'}_{\tau-}(\vp 1_{\auf t_0,T\zu}) + \vp_{\tau-}(\un{X}_{\tau-} - S_{\tau-})
\le \un{X}_{t_0} + \eps/3\ \mbox{on}\ \{\tau<\infty,\ \vp_\tau\le 0\},
\eeam
where the left limit $\vp_{\tau-}$ also exists for $\vp\in L(\un{X},\ov{X})$ since $\ov{X}-\un{X} \ge \eps/3$ on $[t_0,\tau)$ if $\tau<\infty$, see \cite[Proposition~3.3]{kuehn.molitor.2022} and Proposition~\ref{10.12.2022}(a).

Let us show how (\ref{15.4.2023.1}), (\ref{17.4.2023.1}), and (\ref{17.4.2023.2}) can successively be extended to strategies from $(\bP)^\Pi$ and $L(\un{X},\ov{X})$. 
%
Let $\vp\in(\bP)^\Pi$ and $\tau$ from (\ref{20.4.2023.1}) belongs to this strategy. Since $\ov{X}_-
 - \un{X}_-$ is away from zero up to time $\tau$ on $\{\tau<\infty\}$, $\vp$ can be approximated by almost simple strategies uniformly in probability in the sense of \cite[proof of Proposition~3.17]{kuehn.molitor.2022}. The approximation can be adjusted such that the almost simple strategies take the value zero if $\vp$ or its right limit takes the value zero. We observe that $\tau$ need not coincide with the stopping times in (\ref{20.4.2023.1}) for the almost simple strategies. 
But, (\ref{15.4.2023.1}), (\ref{17.4.2023.1}), and (\ref{17.4.2023.2}) still hold with the different stopping time since we only need that the almost simple strategy does not become negative before the stopping time. Thus, the inequalities carry over to $\vp$ by convergence of $V^{S,S'}$ uniformly in probability and pointwise convergence of the strategies and their left limits at $\tau$ (see 
\cite[Theorem~3.19(ii) and Proposition~3.17]{kuehn.molitor.2022}).
Now, let $\vp\in L(\un{X},\ov{X})$. From the proof of Lemma~\ref{28.12.2022.1} it follows that the approximating 
sequence~$(\vp^n)_{n\in\bbn}\subseteq(\bP)^\Pi$ in Definition~\ref{def:DefinitionL} can be chosen such that $(\vp^n)^+\wedge 1 = \vp^+\wedge 1$ for all $n\in\bbn$. This means that the associated $\tau$ is the same for $\vp^n$ and $\vp$ and the inequalities carry over to $\vp$ by (\ref{27.9.2022.1}) and the fact that $\vp^n_{\tau-}\to \vp_{\tau-}$ on $\{\un{X}_{\tau-} < S_{\tau-}\}$  by  
Proposition~\ref{10.12.2022}(b) after passing to a deterministic subsequence.

Putting inequalities~(\ref{15.4.2023.1}), (\ref{17.4.2023.1}), and (\ref{17.4.2023.2}) together implies that the self-financing strategy~$(\psi^0,\psi)$ with $\psi:=\vp 1_{\auf\tau_{\{\vp_\tau\le 0\}}\zu\cup\zu\tau,T\zu}$
is admissible and satisfies $\psi_T=\vp_T=0$, $\psi^0_T=0$ on $\{\tau=\infty\}$, and
$\psi^0_T\ge \vp^0_T -\un{X}_{t_0} - 2\eps/3$ on $\{\tau<\infty\}$.   
Under the contradiction assumption one has $\vp^0_T -\un{X}_{t_0} - 2\eps/3 \ge \eps/3$. As observed above, we have $P(\tau<\infty)>0$. Thus,
$(\psi^0,\psi)$ must be an arbitrage, which is a contradiction.
\end{proof}
The following proposition describes jumps of general wealth processes at points with positive spread. Here, strategies have to be of finite variation.
\begin{proposition}\label{10.12.2022}
(a) Let $\vp\in L(\un{X},\ov{X})$. On $\{\ov{X}>\un{X}\}$ the paths of $\vp$ are of finite variation in right-hand neighborhoods; consequently, 
the right-hand limit of $\vp$ exists, $\Delta^+\vp$ is finite, and $\Delta^+ V^{S,S'}(\vp) =
(S-\ov{X})(\Delta^+\vp^+)^+ +(\un{X}-S)(\Delta^+\vp^+)^- +(\un{X}-S')(\Delta^+\vp^-)^+ +(S'-\ov{X})(\Delta^+\vp^-)^-$ up to evanescence. Analogously, on $\{\ov{X}_->\un{X}_-\}$ the paths of $\vp$ are of finite variation in left-hand neighborhoods, the left-hand limit of $\vp$ exists, $\Delta^-\vp$ is finite, and
$\Delta^- V^{S,S'}(\vp) = \vp^+\Delta S - \vp^-\Delta S' + (S_- - \ov{X}_-)(\Delta^-\vp^+)^+ +(\un{X}_- - S_-)(\Delta^-\vp^+)^- 
+(\un{X}_- - S'_-)(\Delta^-\vp^-)^+ +(S'_- - \ov{X}_-)(\Delta^-\vp^-)^-$ up to evanescence.\\
(b) Let $(\vp^n)_{n\in\bbn}\subseteq(\bP)^\Pi$ be an ``optimal'' approximating sequence of $\vp$ in the sense of Definition~\ref{def:DefinitionL}. Then, one has that
$(\Delta^+(\vp^{n_k})^+)^+\to (\Delta^+\vp^+)^+$ on $\{\ov{X}>S\}\cup\{\ov{X}>\un{X},\ (\Delta^+\vp^+)^+=0\}$
and $(\Delta^-(\vp^{n_k})^+)^+\to (\Delta^-\vp^+)^+$ on $\{\ov{X}_->S_-\}\cup\{\ov{X}_->\un{X}_-,\ (\Delta^-\vp^+)^+=0\}$ up to evanescence as $k\to\infty$ for some (deterministic) subsequence~$(n_k)_{k\in\bbn}$. The analogous statements for all combinations of negative/positive parts of the strategies and their jumps hold as well (cf. (a)).
\end{proposition}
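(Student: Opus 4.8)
The plan is to deduce everything from the corresponding statements for bounded strategies in \cite{kuehn.molitor.2022} together with the convergence properties built into Definition~\ref{def:DefinitionL}. For part~(a) I would first treat $\vp\in(\bP)^\Pi$: here $\Pi(\vp)>-\infty$ forces $C^S(\vp^+)<\infty$ and $C^{S'}(-\vp^-)<\infty$, and since a round trip in $\vp^+$ (buy at $\ov X$, sell at $\un X$, both valued against $S$) costs exactly $\ov X-\un X$ per share, the pathwise variation of $\vp^\pm$ on any predictable interval inside $\{\ov X-\un X\ge\delta\}$ is dominated by $\delta^{-1}$ times the cost accumulated there; as $\ov X-\un X$ is c\`adl\`ag and strictly positive on $\{\ov X>\un X\}$ (resp.\ $\ov X_--\un X_-$ on $\{\ov X_->\un X_-\}$), this gives finite variation in right-hand (resp.\ left-hand) neighborhoods, hence the existence of $\vp_+,\vp_-$ and finiteness of $\Delta^\pm\vp$. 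The jump formulas then follow by differencing $V^{S,S'}(\vp)=\vp^+\mal S-\vp^-\mal S'-C^S(\vp^+)-C^{S'}(-\vp^-)$: the two c\`adl\`ag stochastic integrals carry no $\Delta^+$ and contribute $\vp^+\Delta S-\vp^-\Delta S'$ to $\Delta^-$, while on $\{\ov X>\un X\}$ the cost terms jump by $(\ov X-S)(\Delta^+\vp^+)^+ + (S-\un X)(\Delta^+\vp^+)^-$ and its $S'$-mirror (and analogously for $\Delta^-$ with the left limits), which is precisely the bookkeeping of \cite[Subsections~3.1 and 3.2]{kuehn.molitor.2022}; this part is essentially \cite[Proposition~3.3]{kuehn.molitor.2022} applied to $\un X,\ov X$.

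For $\vp\in L(\un X,\ov X)$ I would fix an approximating sequence $(\vp^n)\subseteq(\bP)^\Pi$ as in Definition~\ref{def:DefinitionL} and localise: on a predictable interval $\auf\sigma,\rho\zu$ contained in $\{\ov X-\un X\ge\delta\}$ and in the ``cost part'' of some $\mathcal I^c_i$ or $\mathcal I^{fc}_i$, bounded away from its boundaries, the previous step bounds $\Var_\sigma^\rho((\vp^n)^+)$ by $\delta^{-1}$ times the cost incurred there, and by (\ref{27.9.2022.1})--(\ref{27.9.2022.2}) the restricted processes $1_{\mathcal I^c_i}\mal V^{S,S'}(\vp^n)$ and $1_{\mathcal I^c_i}\mal(\Pi(\vp^n)+(\vp^n)^+\,{}^p\!S-(\vp^n)^-\,{}^p\!S')$ converge in $d_{up}$, so that cost is uniformly bounded in $n$. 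Lower semicontinuity of pathwise variation under the pointwise convergence $\vp^n\to\vp$ then yields finite variation of $\vp^\pm$ on $\auf\sigma,\rho\zu$; letting $\delta\downarrow0$ and exhausting $\{\ov X>\un X\}$ by such intervals gives finite variation in right/left-hand neighborhoods, and the jump formula carries over because uniform (i.e.\ $d_{up}$) convergence of l\`agl\`ad paths preserves $\Delta^+$ and $\Delta^-$, while $(\vp^n)^\pm\to\vp^\pm$ and the freshly proved local finite variation identify the limiting coefficients.

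For part~(b) I would, for each $i$, pass to a subsequence along which $1_{\mathcal I^c_i}\mal V^{S,S'}(\vp^n)\to1_{\mathcal I^c_i}\mal V$ and $1_{\mathcal I^{fc}_i}\mal V^{S,S'}(\vp^n)\to1_{\mathcal I^{fc}_i}\mal V$ uniformly a.s.\ (possible by (\ref{27.9.2022.1})), then diagonalise over $i\in\bbn$ to get one deterministic subsequence $(n_k)$; along it the $\Delta^+$ (and, since uniform convergence also controls left limits, $\Delta^-$) of the wealth processes converge, so by part~(a) the four-term expression for $\Delta^+V^{S,S'}(\vp^{n_k})$ converges on $\mathcal I^c_i\cap\{\ov X>\un X\}$ --- hence on almost all of $\mathcal I^c_i$ --- and likewise on $\mathcal I^{fc}_i$ to the same expression with $\vp$. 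Since $(\vp^{n_k})^\pm\le\vp^\pm$ and $\vp^\pm$ have right limits on $\{\ov X>\un X\}$, one immediately gets $\limsup_k(\Delta^+(\vp^{n_k})^+)^+\le(\Delta^+\vp^+)^+$ together with the corresponding one-sided bounds for the other three jump quantities; in particular $(\Delta^+(\vp^{n_k})^+)^+\to0$ on $\{\ov X>\un X,\ (\Delta^+\vp^+)^+=0\}$, which is that half of the claim. On $\{\ov X>S\}$ with $(\Delta^+\vp^+)^+>0$ I would argue by contradiction: if $(\Delta^+(\vp^{n_k})^+)^+\to\alpha<(\Delta^+\vp^+)^+$ along a sub-subsequence, then since $(\vp^{n_k})^+_s\to\vp^+_s\to\vp^+_{t+}$ as $s\downarrow t$ the path $(\vp^{n_k})^+$ must increase by at least $(\Delta^+\vp^+)^+-\alpha-o(1)$ on an arbitrarily short interval $(t,t+\eps)$, and since $\ov X-S$ is c\`adl\`ag and strictly positive at $t$ this buying costs at least a strictly positive constant there; as costs only add up, $1_{\mathcal I^c_i}\mal V^{S,S'}(\vp^{n_k})$ would drop by that constant below $1_{\mathcal I^c_i}\mal V$, contradicting uniform convergence. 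The remaining combinations follow the same pattern with $\{S>\un X\}$, $\{\un X>S'\}$, $\{S'>\ov X\}$ for the respective strategy/jump signs and with $\ov X_-,\un X_-,S_-,S'_-$ for the $\Delta^-$ statements (extracted from uniform convergence of the left-limit processes, cf.\ (\ref{3.6.2023.2})).

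The main obstacle is this last decoupling in part~(b): the $d_{up}$-convergence only delivers one non-negative linear combination of the four jump quantities $(\Delta^+(\vp^{n_k})^\pm)^\pm$, and the individual limits can be read off only where the relevant coefficient is \emph{strictly} negative --- which is exactly why the hypotheses $\{\ov X>S\}$, $\{S>\un X\}$, etc.\ (or the degenerate alternative $(\Delta^+\vp^+)^+=0$) appear. Making the ``no hidden spike'' argument rigorous requires a careful localisation of $\ov X-S$ and of the semimartingales on short random intervals just after the jump time, so that the unaccounted-for buying really translates into a quantifiable drop in the wealth that the uniform limit cannot absorb; this is where I expect most of the technical work to lie, together with keeping track of the structural constraint $\vp^+\vp^-=0$ (so that at most two of the four quantities are nonzero at any given point) throughout.
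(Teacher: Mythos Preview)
Your proposal follows essentially the same route as the paper's proof. For part~(a) the paper also reduces to a spread-times-variation~$\le$~wealth-loss estimate (its inequality~(\ref{4.10.2022.1})), first for almost simple strategies, then passes to $(\bP)^\Pi$ and $L(\un X,\ov X)$ via $d_{up}$-convergence of $V^{S,S'}$ along subsequences and lower semicontinuity of the variation. For part~(b) the paper likewise diagonalises over the intervals $\mathcal{I}^c_i,\mathcal{I}^{fc}_i$ to obtain uniform a.s.\ convergence, uses part~(a) to identify the limit of $\Delta^+V^{S,S'}$, and rules out the residual ``opposite-sign jump'' scenario by a localised extra-cost contradiction (defining an explicit $t_2$ close to $t_1$).

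Two points are worth flagging. First, in your extension to $L(\un X,\ov X)$ you assert that convergence in (\ref{27.9.2022.1})--(\ref{27.9.2022.2}) makes the \emph{cost} uniformly bounded in $n$; this is not immediate, since $V^{S,S'}(\vp^n)=(\vp^n)^+\mal S-(\vp^n)^-\mal S'-C^{S,S'}(\vp^n)$ and you have no a~priori control on the stochastic-integral part. The paper avoids this by working directly with the wealth-variation inequality~(\ref{4.10.2022.1}): the right-hand side involves only $V^{S,S'}_{t_2}(\vp^n)-V^{S,S'}_{t_1}(\vp^n)$ and $|\vp^n_{t_1}|+|\vp^n_{t_2}|$, both of which converge by (\ref{27.9.2022.1}) and the pointwise convergence $\vp^n\to\vp$, so $\Var_{t_1}^{t_2}(\vp^n)$ is automatically bounded. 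Second, your one-sided bound $\limsup_k(\Delta^+(\vp^{n_k})^+)^+\le(\Delta^+\vp^+)^+$ from $(\vp^{n_k})^+\le\vp^+$ is correct and nicely disposes of the degenerate case, but the claimed ``corresponding one-sided bounds for the other three jump quantities'' do not all go the same way (e.g.\ for $(\Delta^+(\vp^{n_k})^+)^-$ you only get a $\liminf\ge$ inequality). This is harmless: the paper does not use these bounds and instead lets the wealth-jump convergence and the trade-back contradiction cover all sign combinations uniformly, which your sketch also does in the non-degenerate case.
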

\begin{proof}
Ad (a). Let $(\omega,t_1)\in\Omega\times[0,T)$ with $\ov{X}_{t_1}(\omega)>\un{X}_{t_1}(\omega)$ and, omitting $\omega$ in the notation, 
$t_2:=\inf\{t>t_1 : \ov{X}_t \le  
2/3\ov{X}_{t_1} + 1/3\un{X}_{t_1}\ \mbox{or}\ \un{X}_t \ge 1/3\ov{X}_{t_1} + 2/3\un{X}_{t_1}\}\wedge T$. 
By the right-continuity of $\ov{X}$ and $\un{X}$, we have that $t_2>t_1$.
For an almost simple strategy~$\vp$, elementary calculations show the estimate
\beam\label{4.10.2022.1}
V^{S,S'}_{t_2}(\vp) - V^{S,S'}_{t_1}(\vp) \le -(\ov{X}_{t_1} - \un{X}_{t_1})/6 {\rm  Var}_{t_1}^{t_2}(\vp) + 
(|\vp_{t_1}|+|\vp_{t_2}|)(\sup_{t\in[t_1,t_2]}\ov{X}_t-\inf_{t\in[t_1,t_2]}\un{X}_t).
\eeam 
(Having the ``last in first out principle'' in mind, we argue as follows: A stock position which is both built up and liquidated between $t_1$ and $t_2$ cause a loss higher than $(\ov{X}_{t_2} - \un{X}_{t_1})/3$. Gains from shares which are trades at most once can be estimated very roughly
since their trading volume is bounded by $|\vp_{t_1}|+|\vp_{t_2}|$). 

It remains to extend (\ref{4.10.2022.1}) successively to $(\bP)^\Pi$ and to $L(\un{X},\ov{X})$. 
%
%
This follows by convergence up to evanescence along subsequences (cf. \cite[Theorem~3.19(ii)]{kuehn.molitor.2022} and Definition~\ref{def:DefinitionL})
and by the fact that the variation of the pointwise limiting strategy is dominated by the 
$\liminf$ of the variation processes.
%
%
The arguments for the left-hand neighborhood are the same.  
For a strategy path of finite variation the equations in (a) easily follow from the definition of the cost term 
(cf. \cite[Definition~3.2 and the proof of (A.3)]{kuehn.molitor.2022}).\\

Ad (b). Convergence in probability implies almost sure convergence along a subsequence. Thus, by a diagonalization argument, we can find a
(deterministic) subsequence $(n_k)_{k\in\bbn}$ (not depending on $i$) such that for each $i\in\bbn$, the convergence in (\ref{27.9.2022.1}) holds uniformly in time for almost all $\omega$. By part~(a), we know that on the set~$\{\ov{X}>\un{X}\}$ the right-hand limits of the strategies 
exist and $(S-\ov{X})(\Delta^+(\vp^{n_k})^+)^+ +(\un{X}-S)(\Delta^+(\vp^{n_k})^+)^- +(\un{X}_-S')(\Delta^+(\vp^{n_k})^-)^+ +(S'-\ov{X})(\Delta^+(\vp^{n_k})^-)^-
\to (S-\ov{X})(\Delta^+\vp^+)^+ +(\un{X}-S)(\Delta^+\vp^+)^- +(\un{X}-S')(\Delta^+\vp^-)^+ +(S'-\ov{X})(\Delta^+\vp^-)^-$ up to evanescence
by the above mentioned uniform convergence of the wealth processes on a single excursion.
There remains the problem that $\Delta^+\vp^{n_k}$ can have the opposite sign of $\Delta^+\vp$ although the jumps of the cost terms are similar. 

We fix an $\omega\in \Omega$ outside the $P$-null sets from above and omit it in the notation. Let $t_1\in[0,T)$. To save space, we only write down 
the case that $\vp_{t_1}\ge 0$, $\ov{X}_{t_1}>S_{t_1}$, and $\Delta^+\vp_{t_1}>0$. The other cases follow analogously.
Now, assume by contradiction that, in addition, there exist an $\eps>0$ and a (random) subsequence~$(k_l)_{l\in\bbn}$ with
\beam\label{10.12.2022.1}
|(\Delta^+ (\vp^{n_{k_l}}_{t_1})^+)^+ - (\Delta^+\vp^+_{t_1})^+|\ge \eps\quad \mbox{for all\ }l\in\bbn.
\eeam
Define 
\beao
t_2 & := & \inf\{t>t_1 : \vp_t \le \vp_{t_1+} -\Delta^+\vp_{t_1}/2\quad \mbox{or}\quad V^{S,S'}_t(\vp) \le V^{S,S'}_{t_1+}(\vp) - \Delta^+\vp_{t_1}(\ov{X}_{t_1} - S_{t_1})/7\\ 
& & \qquad \mbox{or}\quad \ov{X}_t \le 2/3 \ov{X}_{t_1} +1/3 S_{t_1} \quad \mbox{or}\quad S_t \ge 1/3 \ov{X}_{t_1} + 2/3 S_{t_1}\}\wedge T.
\eeao
By $\vp^{n_k}_{t_1}\to\vp_{t_1}$, $\Delta^+ V^{S,S'}_{t_1}(\vp^{n_k}) \to \Delta^+ V^{S,S'}_{t_1}(\vp)$, and $\Delta^+V^{S,S'}_{t_1}(\vp)<0$,
we must have that $\Delta^+\vp^{n_{k_l}}_{t_1}<0$ for all $l$ large enough (depending on $\omega$). Indeed, if $\Delta^+\vp^{n_{k_l}}_{t_1}$ 
had the same sign as $\Delta^+\vp_{t_1}$, divergent absolute values would be contrary to convergent jumps of the cost term.
By $\vp^{n_k}_{(t_1+t_2)/2}\to\vp_{(t_1+t_2)/2}$ and the construction of $t_2$, the loss of wealth that
$(\vp^{n_{k_l}})_{l\in\bbn}$ necessarily produces on $(t_1,(t_1+t_2)/2]$ to get closer to $\vp$ again can be estimated from below by $1/6 (\ov{X}_{t_1}-S_{t_1})\Delta^+\vp_{t_1}$ for $l\to\infty$. Since this dominates the possible losses of $\vp$ {\em after} $t_1+$, we arrive at a contradiction to $V^{S,S'}_{t_1}(\vp^{n_k})\to V^{S,S'}_{t_1}(\vp)$ and $V^{S,S'}_{(t_1+t_2)/2}(\vp^{n_k})\to V^{S,S'}_{(t_1+t_2)/2}(\vp)$\ as $k\to\infty$.
This means that if $(\vp^{n_{k_l}})_{l\in\bbn}$ traded in the opposite direction of $\vp$, it would have to compensate for this promptly, which would lead to {\em additional} transaction costs. The assertion for the left-hand jump follows analogously.
\end{proof}

By definition of $\vp\in L(\un{X},\ov{X})$, its wealth process can be approximated by wealth 
processes~$V^{S,S'}(\vp^n)$ with bounded strategies~$\vp^n$, $n\in\bbn$, satisfying $(\vp^n)^+ \le \vp^+$, $(\vp^n)^- \le \vp^-$, and 
$\vp^n \to \vp$. 
However, $\vp^n$ need not be $M$-admissible if $\vp$ is $M$-admissible. 
Lemma~\ref{12.6.2022.1} shows that we can {\em choose} the approximating bounded strategies such that 
they are $M$-admissible. The following two lemmas prepare Lemma~\ref{12.6.2022.1}. Since the investor has a credit line of $M$ stocks,
the ``safest short-term strategy'' is to hold $-M$ stocks. This means that for $M>0$, the inequalities $(\vp^n)^+ \le \vp^+$ and $(\vp^n)^- \le \vp^-$ do not imply that $\vp^n$ is ``safer than $\vp$ in the short term''. The following lemma overcomes this problem. 
\begin{lemma}\label{28.12.2022.1}
Let $\vp\in L(\un{X},\ov{X})$ and $M\in\bbr_+$. Then, the approximating sequence~$(\varphi^n)_{n\in\bbn}\subseteq (\bP)^\Pi$ in Definition~\ref{def:DefinitionL} can be chosen such that 
\beam\label{16.6.2022.2}
(\vp^n)^-\wedge M=\vp^-\wedge M
\eeam 
and thus $(\vp^n+M)^+\le (\vp+M)^+$ for all $n\in\bbn$.
\end{lemma}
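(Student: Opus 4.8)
The starting point is the approximating sequence $(\psi^n)_{n\in\bbn}\subseteq(\bP)^\Pi$ that Definition~\ref{def:DefinitionL} already provides for $\vp$, with $(\psi^n)^+\le\vp^+$, $(\psi^n)^-\le\vp^-$, $\psi^n\to\vp$ pointwise, together with the associated semimartingales $S$, $S'$, the wealth process $V$, and the predictable envelopes ${}^p\!S$, ${}^p\!S'$ witnessing conditions~(\ref{3.6.2023.2})--(\ref{23.8.2020.1}). The idea is to \emph{surgically modify} each $\psi^n$ only on the predictable set where its negative part is ``too small'', i.e.\ on $D_n:=\{\psi^n<\vp^-\wedge M\}\cap\{\vp<0\}$ (more precisely, where $(\psi^n)^-<\vp^-\wedge M$), replacing $\psi^n$ there by $-(\vp^-\wedge M)$. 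Concretely I would set
\[
\vp^n:=\psi^n\vee\bigl(-(\vp^-\wedge M)\bigr)\wedge\bigl((\psi^n)^+\bigr),
\]
or equivalently $\vp^n:=\psi^n$ on $\{(\psi^n)^-\ge\vp^-\wedge M\}\cup\{\vp\ge0\}$ and $\vp^n:=-(\vp^-\wedge M)$ otherwise. Since $\vp^-\wedge M$ is a bounded predictable process and $\psi^n\in\bP$, the modified $\vp^n$ is again bounded predictable; one checks that $(\vp^n)^-\wedge M=\vp^-\wedge M$ holds by construction (on the modified set both sides equal $\vp^-\wedge M$ because $\psi^n$ was strictly smaller there, while off the set $(\psi^n)^-\ge\vp^-\wedge M$ forces $(\vp^n)^-\wedge M=(\psi^n)^-\wedge M=\vp^-\wedge M$ since $(\psi^n)^-\le\vp^-$), and $(\vp^n)^-\le\vp^-$, $(\vp^n)^+\le\vp^+$ are preserved. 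The implication $(\vp^n+M)^+\le(\vp+M)^+$ is then the pointwise inequality from \cite[proof of Proposition~4.9]{guasoni2012fundamental} applied with the credit line $M$: on $\{\vp\ge -M\}$ one has $(\vp^n)^-\wedge M=\vp^-\wedge M$ gives $\vp^n\ge -M$, and one estimates directly.

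**Re-verifying membership in $L(\un{X},\ov{X})$.** The real work is to show that $(\vp^n)_{n\in\bbn}$ still qualifies as an admissible approximating sequence for $\vp$, i.e.\ that conditions~(a), (\ref{3.6.2023.2}), (\ref{27.9.2022.1}), (\ref{27.9.2022.2}) and~(\ref{23.8.2020.1}) survive the modification. Condition~(a) is immediate. For the convergence of wealth processes I would compare $\vp^n$ with $\psi^n$: the modification changes $\psi^n$ only on the predictable set $D_n=\{(\psi^n)^-<\vp^-\wedge M\}\cap\{\vp<0\}$, and since $\psi^n\to\vp$ pointwise, $1_{D_n}\to0$ pointwise on $\{\vp<0\}\cap\{\vp^-\wedge M>0\}$ is \emph{not} true in general — instead I would argue that on $D_n$ the difference $\vp^n-\psi^n=-(\vp^-\wedge M)-\psi^n$ lies between $0$ and $\vp^-\wedge M-\psi^n\le\vp^-$, hence is dominated by $\vp^-$, and use the dominated convergence theorem for the stochastic integrals $1_{\mathcal{I}^c_i}\mal((\vp^n)^-\,{}^p\!S')$ etc.\ — exactly as in part~(ii) of the proof of Proposition~\ref{19.6.2023.3} and as in the frictionless argument. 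The key point making the costs behave is that $\vp^-\wedge M$ and $\psi^-$ trade in the \emph{same direction} (both are short positions), so the cost term $C^{S'}$ applied to $(\vp^n)^-$ versus $(\psi^n)^-$ differs only through the modification and one has the subadditivity/monotonicity of $C$ at one's disposal. So I expect (\ref{3.6.2023.2}), (\ref{27.9.2022.1}) and (\ref{27.9.2022.2}) to follow by the same dominated-convergence mechanism, reading off that $V^{S,S'}(\vp^n)\to V$ with the same limit $V$ (the limit is unchanged because the perturbation vanishes in the limit: $(\vp^n)^-\to\vp^-\wedge\cdots$, wait — one must check the limit is still $\vp$, which it is, since $(\vp^n)^-$ is squeezed between $(\psi^n)^-$ and $\vp^-\wedge M$ and $(\psi^n)^-\to\vp^-$, so on $\{\vp^-\le M\}$ we get $(\vp^n)^-\to\vp^-$, and on $\{\vp^-> M\}$ eventually $(\psi^n)^-\ge M\ge\vp^-\wedge M$ so no modification occurs).

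**The main obstacle: the optimality condition (\ref{23.8.2020.1}).** The delicate part is re-establishing that for \emph{every} competing sequence $(\wt\vp^n)$ satisfying~(a) the one-sided convergence~(\ref{23.8.2020.1}) holds along a deterministic subsequence with our new $(\vp^n)$. The original sequence $(\psi^n)$ had this property; I would deduce it for $(\vp^n)$ by showing $1_{\mathcal{I}^c_i}\mal(V^{S,S'}(\wt\vp^n)-V^{S,S'}(\vp^n))\le 1_{\mathcal{I}^c_i}\mal(V^{S,S'}(\wt\vp^n)-V^{S,S'}(\psi^n))+1_{\mathcal{I}^c_i}\mal(V^{S,S'}(\psi^n)-V^{S,S'}(\vp^n))$, bounding the positive part of the first term by the corresponding quantity for $\psi^n$ (which tends to $0$ by optimality of $\psi^n$) and the second term by $d_{up}$-convergence established above. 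Here one must be slightly careful that $\wt\vp^n$ is a competitor for $\psi^n$ in the sense of~(a) — it is, because~(a) only constrains the positive/negative parts by $\vp^\pm$, which are unchanged. The subtlety I anticipate is the bookkeeping of subsequences: (\ref{23.8.2020.1}) for $\psi^n$ gives one deterministic subsequence, the $d_{up}$-convergences give a.s.\ convergence only along \emph{further} subsequences, and one needs a single deterministic subsequence working simultaneously — a routine diagonalization, but it must be spelled out. I would expect the whole argument to be short modulo these dominated-convergence estimates, since structurally it mirrors part~(ii) of Proposition~\ref{19.6.2023.3} and Proposition~\ref{15.1.2023.4}.
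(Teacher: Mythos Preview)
Your modification $\vp^n=\psi^n\wedge((-M)\vee\vp)$ is exactly the one the paper uses, and your treatment of condition~(a) and of the optimality condition~(\ref{23.8.2020.1}) via splitting is in the right spirit. The gap is in your argument for~(\ref{27.9.2022.1}): you propose to handle the difference $V^{S,S'}(\vp^n)-V^{S,S'}(\psi^n)$ by ``the same dominated-convergence mechanism'' as in Proposition~\ref{19.6.2023.3}(ii) and Proposition~\ref{15.1.2023.4}. This works for the stochastic-integral part, but not for the cost term: $C^{S'}$ is not a stochastic integral, and in the two propositions you cite the cost term is either fixed (only the indicator $1_{J_n}$ varies, so monotonicity suffices) or zero (frictionless case). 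Here the strategies themselves are changing, and subadditivity/monotonicity of $C$ only give one-sided inequalities, not convergence.

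The paper resolves this in two stages. First, the decomposition $C^{S'}(-\psi^-)=C^{S'}(-(\psi^-\wedge M))+C^{S'}(-(\psi+M)^-)$ (the ``same direction'' observation you mention, made precise) isolates the difference as $V^{S,S'}(-(\vp^-\wedge M))-V^{S,S'}(-((\psi^n)^-\wedge M))$; the Fatou-type estimate of \cite[Corollary~3.24]{kuehn.molitor.2022} controls the negative part, and the \emph{positive} part is killed by applying the optimality~(\ref{23.8.2020.1}) of the \emph{original} sequence with the \emph{modified} sequence $\vp^n$ as the competitor---so the order is reversed from your plan: optimality is used to \emph{establish} pointwise convergence, not the other way around. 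Second, this only yields pointwise (up to evanescence) convergence of the cost difference; upgrading to the required $d_{up}$-convergence is a separate non-trivial step done by contradiction, analyzing right- and left-hand jump behavior near an accumulation point of first-hitting times via Proposition~\ref{10.12.2022}(b). Your sketch does not account for either of these mechanisms.
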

We note that, by contrast, the inequality $(\vp^n+M)^-\le (\vp+M)^-$ already follows from $(\vp^n)^-\le \vp$.
\begin{proof}[Proof of Lemma~\ref{28.12.2022.1}]
{\em Step 1:} Let $\vp\in L(\un{X},\ov{X})$ and $M\in\bbr_+$. By definition, there exists a sequence~$(\vp^n)_{n\in\bbn}\subseteq (\bP)^\Pi$ with $(\vp^n)^+\le \vp^+$, $(\vp^n)^-\le \vp^-$, $\vp^n\to \vp$ and semimartingale price systems~$S$, $S'$ 
such that (\ref{3.6.2023.2}), (\ref{27.9.2022.1}), (\ref{27.9.2022.2}) hold 
and (\ref{23.8.2020.1}) is satisfied  for all competing sequences.
For any $\psi\in(\bP)^\Pi$, we have that 
\beam\label{27.12.2022.1}
C^{S'}(-\psi^-) = C^{S'}(-(\psi^-\wedge M)) + C^{S'}(-(\psi + M)^-)
\eeam
since the
strategies $-(\psi^-\wedge M)$ and $-(\psi + M)^-$ never trade in the opposite direction (formally, one checks it for almost simple
strategies and apply the approximation in \cite[proof of Theorem~3.19]{kuehn.molitor.2022}).
Consequently, we can decompose: $V^{S,S'}(\vp^n) = V^{S,S'}((\vp^n)^+) + V^{S,S'}(-((\vp^n)^-\wedge M)) + V^{S,S'}(-(\vp^n+M)^-)$ and consider the alternative pointwise approximation 
$\wt{\vp}^n:= \vp^n1_{\{\vp\ge 0\}} - (\vp^-\wedge M) - (\vp^n+M)^- = \vp^n\wedge ((-M)\vee \vp)$,\ $n\in\bbn$.
Again by (\ref{27.12.2022.1}), we have that $V^{S,S'}(\wt{\vp}^n) = V^{S,S'}((\vp^n)^+) + V^{S,S'}(-(\vp^-\wedge M)) + V^{S,S'}(-(\vp^n+M)^-)$ and thus
$V^{S,S'}(\wt{\vp}^n) - V^{S,S'}(\vp^n)=V^{S,S'}(-(\vp^-\wedge M)) - V^{S,S'}(-((\vp^n)^-\wedge M))$. 
By \cite[Corollary~3.24]{kuehn.molitor.2022} using a Fatou-type estimate, there exists a (deterministic) subsequence~$(n_k)_{k\in\bbn}$ such that 
\beam\label{25.6.2023.1}
(V^{S,S'}(-(\vp^-\wedge M)) - V^{S,S'}(-((\vp^{n_k})^-\wedge M)))^-\to 0\quad\mbox{up to evanescence as}\ k\to\infty.
\eeam 
Now, we turn to a single interval $\mathcal{I}^c_i$ (the same for $\mathcal{I}^{fc}_i$)
and define $Y^n:=1_{\mathcal{I}^c_i}\mal (V^{S,S'}(-(\vp^-\wedge M)) - V^{S,S'}(-((\vp^n)^-\wedge M)))$.
(\ref{25.6.2023.1}) already implies that, with $(\vp^n)_{n\in\bbn}$, also $(\wt{\vp}^n)_{n\in\bbn}$ is better than all competing sequences in the sense of (\ref{23.8.2020.1}). By (\ref{23.8.2020.1}), there exists a further subsequence~$(k_l)_{l\in\bbn}$ such that
$(1_{\mathcal{I}^c_i}\mal(V^{S,S'}(\wt{\vp}^{n_{k_l}}) - V^{S,S'}(\vp^{n_{k_l}})))^+ \to 0$, and together with (\ref{25.6.2023.1}) we arrive at 
\beam\label{16.6.2022.1}
1_{\mathcal{I}^c_i}\mal Y^{n_{k_l}}\to 0\quad\mbox{up to evanescence as}\ l\to\infty.
\eeam
We continue by observing that
\beao
& & Y^n\\ 
& & = 1_{\mathcal{I}^c_i}\mal ((\vp^n)^-\wedge M  - (\vp^-\wedge M))\mal S' - 1_{\mathcal{I}^c_i}\mal C^{S'}(-(\vp^-\wedge M)) 
  + 1_{\mathcal{I}^c_i}\mal C^{S'}(-((\vp^n)^-\wedge M))
\eeao
and define $\wt{Y}^n:= 1_{\mathcal{I}^c_i}\mal C^{S'}(-((\vp^n)^-\wedge M)) - 1_{\mathcal{I}^c_i}\mal C^{S'}(-(\vp^-\wedge M))$.
There exists a further (deterministic) subsequence~$(l_j)_{j\in\bbn}$ such that
\beam\label{28.12.2022.2}
\sup_{t\in[0,T]}|((\vp^{n_{k_{l_j}}})^-\wedge M - (\vp^-\wedge M))\mal S'_t|\to 0,\quad \mbox{a.s.}
\eeam
(cf. \cite[Theorem~I.4.31(iii)]{jacod.shiryaev}) and Proposition~\ref{10.12.2022}(b) holds true for $(n_{k_{l_j}})_{j\in\bbn}$. Putting (\ref{16.6.2022.1}) and (\ref{28.12.2022.2}) together,
we have
\beam\label{27.9.2022.3}
\wt{Y}^{n_{k_{l_j}}}\to 0\ \mbox{up to evanescence as}\ j\to\infty.
\eeam
{\em Step 2:} It remains to show that $(\wt{\vp}^{n_{k_{l_j}}})_{j\in\bbn}$ satisfies (\ref{27.9.2022.1}) and (\ref{27.9.2022.2}), where the limiting wealth process is of course the same as for the approximating strategies~$(\vp^n)_{n\in\bbn}$ we started with. 
By (\ref{28.12.2022.2}), it is sufficient to show that $\sup_{t\in[0,T]}|\wt{Y}^{n_{k_{l_j}}}_t|\to 0$ a.s. as $j\to\infty$. W.l.o.g. $n_{k_{l_j}}=j$.
We assume by contradiction that there exists $A\in\mathcal{F}_T$ with $P(A)>0$ such that on $A$ the sequence~$(\sup_{t\in[0,T]}|\wt{Y}^n_t|)_{n\in\bbn}$ does not tend to $0$, but the convergence in (\ref{27.9.2022.3}) holds. 
We fix an $\omega\in A$ that is omitted in the following notation. There exists $\eps>0$ such that $\sup_{t\in[0,T]}|\wt{Y}^n_t|>\eps$ for infinitely many $n$ (depending on $\omega$).  
Define $\sigma_n:=\inf\{t\ge 0 : |\wt{Y}^n_t|>\eps\}$. The sequence~$(\sigma_n)_{n\in\bbn}$ has an accumulation point in $[0,T]$
denoted by $\sigma$. 
By (\ref{27.9.2022.3}), we have that 
\beam\label{28.9.2022.1}
\wt{Y}^n_\sigma\to 0\quad\mbox{as\ }n\to\infty.
\eeam 
At first, we consider the behaviour of $\wt{Y}^n$ in a right-hand neighborhood of $\sigma$. By $\vp^n_\sigma\to\vp_\sigma$ and Proposition~\ref{10.12.2022}(b), considering the cases $\Delta^+ \vp^-_\sigma<0$, $\Delta^+ \vp^-_\sigma>0$, and 
$\Delta^+ \vp^-_\sigma=0$, we obtain that $(S'_\sigma-\ov{X}_\sigma)((\vp^n_{\sigma+})^-\wedge M -((\vp^n_\sigma)^-\wedge M))^- +(\un{X}_\sigma-S'_\sigma)((\vp^n_{\sigma+})^-\wedge M -((\vp^n_\sigma)^-\wedge M))^+$ converges to
$(S'_\sigma-\ov{X}_\sigma)(\vp^-_{\sigma+}\wedge M - (\vp^-_\sigma\wedge M))^- +(\un{X}_\sigma-S'_\sigma)(\vp^-_{\sigma+}\wedge M -(\vp^-_\sigma\wedge M))^+$. This means that $\Delta^+V^{S,S'}_\sigma(-((\vp^n)^-\wedge M))\to\Delta^+V^{S,S'}_\sigma(-(\vp^-\wedge M))$. 
By the right-continuity of stochastic integrals and by (\ref{28.9.2022.1}), we have that
$C^{S'}_{\sigma+}(-((\vp^n)^-\wedge M))\to C^{S'}_{\sigma+}(-(\vp^-\wedge M))$. In addition,  
there exists $\sigma'>\sigma$ such that the limit cost term $C^{S'}_{\sigma'}(-(\vp^-\wedge M))$ is bounded from above by
$C^{S'}_{\sigma+}(-(\vp^-\wedge M))+\eps/2$. By $C^{S'}_{\sigma'}(-((\vp^n)^-\wedge M))\to C^{S'}_{\sigma'}(-(\vp^-\wedge M))$ as $n\to\infty$,
(\ref{28.9.2022.1}), and by the monotonicity of the cost terms, we obtain
$\sup_{n\ge n_0,\ t\in[\sigma,\sigma']}|C^{S'}_t(-((\vp^n)^-\wedge M))-C^{S'}_t(-(\vp^-\wedge M))|\le \eps$ for $n_0$ large enough. 
The left-hand neighborhood of $\sigma$ can be handled in the same way. 
We arrive at a contradiction to $\sigma_n\to\sigma$. This shows that $\sup_{t\in[0,T]}|\wt{Y}^n_t|\to 0$ a.s. and thus (\ref{27.9.2022.1}).
Since the differences of the cost terms and the semimartingale gains converge separately in $d_{up}$, we can argue as in Proposition~\ref{15.1.2023.4} to derive (\ref{27.9.2022.2}) with the same process ${}^p\!S'$ that does the job for $(\vp^n)_{n\in\bbn}$.
\end{proof}
%
%

\begin{lemma}\label{29.1.2023.2}
For $\psi\in L(\un{X},\ov{X})$ we define the predictable process
\beao
L(\psi):=\Pi(\psi) + M + (\psi + M)^+ {\rm ess inf}_{\mathcal{F}_-}\un{X} - (\psi + M)^- {\rm ess sup}_{\mathcal{F}_-}\ov{X}
\eeao
(that models the liquidation value of $(\Pi(\psi)+M,\psi+M)$ after the portfolio is rebalanced but before the prices jump at some time~$t$).
Then, 
\beao
(\Pi(\psi),\psi)\mbox{\ is $M$-admissible}\quad\implies\quad L(\psi)\ge 0
\eeao
\end{lemma}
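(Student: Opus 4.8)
Proof plan. The plan is to prove the implication directly by a predictable sectioning argument. First I would note that $L(\psi)$ is a predictable process: the predictable envelopes ${\rm ess\,inf}_{\mathcal{F}_-}\un{X}$ and ${\rm ess\,sup}_{\mathcal{F}_-}\ov{X}$ are predictable by Definition~\ref{13.1.2023.3}, $\psi$ is predictable, and $\Pi(\psi)$ is predictable. Hence $\{L(\psi)<0\}$ is a predictable set, and by the predictable section theorem it is evanescent as soon as $L(\psi)_\tau\ge 0$ a.s.\ on $\{\tau<\infty\}$ for every predictable stopping time~$\tau$. This reduction is the crucial structural move: the $M$-admissibility inequality of Definition~\ref{26.7.2022.01} involves the c\`adl\`ag (hence merely optional) processes $\un{X},\ov{X}$, so it cannot be compared pathwise with the predictable process $L(\psi)$ — it can only be exploited at predictable times, which is exactly where, by Definition~\ref{13.1.2023.3}, the predictable envelopes are pinned down.

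Then I would fix a predictable stopping time~$\tau$ (w.l.o.g.\ with $\tau<\infty$ meaning $\tau\le T$) and evaluate the $M$-admissibility of $(\Pi(\psi),\psi)$ at~$\tau$, which gives
\[
Y:=\Pi(\psi)_\tau+M+(\psi_\tau+M)^+\un{X}_\tau-(\psi_\tau+M)^-\ov{X}_\tau\ge 0\quad\text{a.s.\ on }\{\tau<\infty\}.
\]
Since $\Pi(\psi)$ and $\psi$ are predictable, $\Pi(\psi)_\tau$ and $\psi_\tau$ are $\mathcal{F}_{\tau-}$-measurable, and in particular $A:=\{\psi_\tau+M\ge 0\}\in\mathcal{F}_{\tau-}$. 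I would then take the $\mathcal{F}_{\tau-}$-conditional essential infimum of $Y$ and evaluate it with the calculus rules for the conditional essential infimum/supremum recorded in Proposition~\ref{23.7.2022.1}: localize separately on $A$ and on $A^c$; on $A$, where $(\psi_\tau+M)^-=0$, use translation by the $\mathcal{F}_{\tau-}$-measurable term $\Pi(\psi)_\tau+M$ and positive homogeneity in the nonnegative $\mathcal{F}_{\tau-}$-measurable factor $(\psi_\tau+M)^+$ to obtain $\Pi(\psi)_\tau+M+(\psi_\tau+M)^+\,{\rm ess\,inf}_{\mathcal{F}_{\tau-}}\un{X}_\tau$; on $A^c$, where $(\psi_\tau+M)^+=0$, additionally use ${\rm ess\,inf}_{\mathcal{F}_{\tau-}}(-\,\cdot)=-{\rm ess\,sup}_{\mathcal{F}_{\tau-}}(\cdot)$ to obtain $\Pi(\psi)_\tau+M-(\psi_\tau+M)^-\,{\rm ess\,sup}_{\mathcal{F}_{\tau-}}\ov{X}_\tau$. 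Because $(\psi_\tau+M)^-$ vanishes on $A$ and $(\psi_\tau+M)^+$ vanishes on $A^c$, these two expressions glue to $\Pi(\psi)_\tau+M+(\psi_\tau+M)^+\,{\rm ess\,inf}_{\mathcal{F}_{\tau-}}\un{X}_\tau-(\psi_\tau+M)^-\,{\rm ess\,sup}_{\mathcal{F}_{\tau-}}\ov{X}_\tau$, which by Definition~\ref{13.1.2023.3} equals $L(\psi)_\tau$.

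Finally, since $Y\ge 0$ a.s.\ on $\{\tau<\infty\}$, its $\mathcal{F}_{\tau-}$-conditional essential infimum is $\ge 0$ a.s.\ on $\{\tau<\infty\}$ (immediate from the definition, $0$ being an $\mathcal{F}_{\tau-}$-measurable minorant of $Y$ there), so $L(\psi)_\tau\ge 0$ a.s.\ on $\{\tau<\infty\}$; as $\tau$ was an arbitrary predictable stopping time and $L(\psi)$ is predictable, the predictable section theorem yields $L(\psi)\ge 0$ up to evanescence. I expect the only genuinely non-routine part to be the bookkeeping with the conditional essential infimum/supremum — in particular checking that the localization, translation, positive-homogeneity and inf/sup-reflection rules used are precisely those furnished by Proposition~\ref{23.7.2022.1} — together with the (elementary but essential) observation that the comparison must be performed along predictable stopping times rather than pathwise.
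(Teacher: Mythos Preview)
Your proposal is correct and follows essentially the same route as the paper: both use the predictable section theorem to reduce to a predictable stopping time~$\tau$, and both hinge on the identification (via Definition~\ref{13.1.2023.3}) of $L(\psi)_\tau$ with the $\mathcal{F}_{\tau-}$-conditional essential infimum of the admissibility expression $\Pi(\psi)_\tau+M+(\psi_\tau+M)^+\un{X}_\tau-(\psi_\tau+M)^-\ov{X}_\tau$. The paper phrases this as a contradiction (if $L_\tau(\psi)<0$ on a set of positive probability, then the admissibility inequality must fail with positive probability), whereas you argue directly, but the content is identical. One minor citation issue: the localization, translation, positive-homogeneity and reflection rules you invoke are standard properties of the conditional essential infimum/supremum, but they are not actually recorded in Proposition~\ref{23.7.2022.1}, which only establishes existence of the c\`adl\`ag version of $\un{X}$; you should just cite them as standard.
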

\begin{proof}
We suppose otherwise. Then, by a section theorem for predictable sets (see, e.g., \cite[Theorem~4.8]{he.wang.yan.1992}), there would exist a predictable stopping time~$\tau$ with
$P(\tau<\infty)>0$ and $L_\tau(\psi)<0$ on $\{\tau<\infty\}$. This implies $P(\Pi_\tau(\psi) + M + (\psi_\tau + M)^+ \un{X}_\tau - 
(\psi_\tau + M)^- \ov{X}_\tau<0)>0$, a contradiction to the $M$-admissibility of $(\Pi(\psi),\psi)$. 
\end{proof}
\begin{lemma}\label{12.6.2022.1}
Let $\vp\in L(\un{X},\ov{X})$ such that $(\Pi(\vp),\vp)$ is $M$-admissible for some $M\in\bbr_+$. Then, the approximating sequence~$(\varphi^n)_{n\in\bbn}\subseteq (\bP)^\Pi$ in Definition~\ref{def:DefinitionL} can be chosen such that $(\Pi(\vp^n),\vp^n)$ are also $M$-admissible for all $n\in\bbn$.
%
%
%
\end{lemma}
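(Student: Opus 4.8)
\emph{The plan is as follows.} Start from the approximating sequence $(\vp^n)_{n\in\bbn}\subseteq(\bP)^\Pi$ provided by Lemma~\ref{28.12.2022.1}, which besides (a), (\ref{3.6.2023.2}), (\ref{27.9.2022.1}), (\ref{27.9.2022.2}), (\ref{23.8.2020.1}) for suitable $S,S',V,{}^p\!S,{}^p\!S'$ also satisfies $(\vp^n)^-\wedge M=\vp^-\wedge M$. Write $G(\psi):=\Pi(\psi)+M+(\psi+M)^+\un{X}-(\psi+M)^-\ov{X}$ for the $M$-admissibility defect of Definition~\ref{26.7.2022.01}, so that $(\Pi(\psi),\psi)$ is $M$-admissible iff $G(\psi)\ge 0$ up to evanescence; by hypothesis $G(\vp)\ge 0$, and by Lemma~\ref{29.1.2023.2} the same holds for the defect computed with the predictable envelopes, which via~(\ref{27.9.2022.2}) is transported to the $\vp^n$ (this is the ``after the trade but before the price movement'' valuation of Example~\ref{15.1.2023.3}). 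The idea is to cut each $\vp^n$ at $\tau_n:=\inf\{t\in[0,T]:G_t(\vp^n)<0\}$ and, after the cut, to let it carry the most cautious position still compatible with~(a), e.g.
\begin{align*}
\wt\vp^n:=\vp^n\,1_{\auf 0,\tau_n\auf}\;-\;(\vp^-\wedge M)\,1_{\auf\tau_n,T\zu}\ \in(\bP)^\Pi .
\end{align*}

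On $\auf 0,\tau_n\auf$ one has $\wt\vp^n=\vp^n$, hence $G(\wt\vp^n)=G(\vp^n)\ge 0$. The passage into the cautious position at $\tau_n$ is a \emph{left} jump, hence by Proposition~\ref{10.12.2022}(a) it is settled at the left-limit prices $\un X_{\tau_n-},\ov X_{\tau_n-}$, so that the jump of $\un X,\ov X$ at $\tau_n$ itself is irrelevant and the post-cut cash can be estimated through $G_{\tau_n-}(\vp^n)\ge 0$. On the tail $\auf\tau_n,T\zu$ the position stays in the credit band $[-M,0]$; for the genuinely self-insuring choice (the constant $-M$) the defect is $\Pi(\wt\vp^n)+M$, which is price-independent and stays $\ge 0$ because $\Pi$ no longer moves after $\tau_n$; the difficulty is that $-M$ is incompatible with $(\wt\vp^n)^-\le\vp^-$ when $\vp^-<M$, which forces a non-constant shadow and makes the verification that $G(\wt\vp^n)\ge 0$ on the tail --- in particular against downward jumps of $\un X$ --- one of the delicate steps. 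Granting it, every $(\Pi(\wt\vp^n),\wt\vp^n)$ is $M$-admissible, and $(\wt\vp^n)^+\le\vp^+$, $(\wt\vp^n)^-\le\vp^-$ by construction.

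Next one checks that $(\wt\vp^n)_{n\in\bbn}$ is again an approximating sequence of $\vp$ in the sense of Definition~\ref{def:DefinitionL} with the \emph{same} limit $V$. Since $\wt\vp^n$ and $\vp^n$ coincide on $\auf 0,\tau_n\auf$, this reduces to $\tau_n\to T$ a.s.\ along a deterministic subsequence and to the vanishing in $d_{up}$ of the tail contribution to $1_{\mathcal{I}^c_i}\mal(\cdot)$ and $1_{\mathcal{I}^{fc}_i}\mal(\cdot)$ for each $i$; the latter is routine (stopping is a $d_{up}$-contraction and $1_{\mathcal{I}^c_i}\mal V$, $1_{\mathcal{I}^{fc}_i}\mal V$ are l\`agl\`ad), and (\ref{23.8.2020.1}) for competing sequences of $\wt\vp^n$ is inherited from $\vp^n$ since the class of competing sequences depends only on $\vp$.

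The main obstacle I foresee is exactly $\tau_n\to T$ (together with the tail verification above). The defect $G$ is a \emph{global} object, whereas Definition~\ref{def:DefinitionL} only yields \emph{per-interval} up-convergence of the wealth processes (cf.\ the remark after Proposition~\ref{19.6.2023.3}), so $G(\vp^n)$ need not converge globally to $G(\vp)$ and $\tau_n\to T$ cannot be read off directly. I expect the cut to have to be organised consistently with the decomposition~(\ref{24.3.2023.1}) --- performed and re-synchronised interval by interval, using that at the boundaries of the $\mathcal{I}^c_i,\mathcal{I}^{fc}_i$ the portfolio can be rebalanced without costs --- so that on each interval the defect is controlled in terms of the finite wealth at its left endpoint; this interval bookkeeping, and the transitions between frictionless and friction periods, will be the technical heart. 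By contrast, the possible price jump at a cut time is dealt with cleanly, through the left-jump valuation at $\un X_{\tau_n-},\ov X_{\tau_n-}$ and the price-independence of the admissibility defect of a position at $-M$.
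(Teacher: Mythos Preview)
Your plan has the right skeleton—cut the approximants when admissibility is threatened, replace by a safe tail—but the two obstacles you flag are both symptoms of one missing idea: your stopping criterion is \emph{absolute} ($\tau_n=\inf\{t:G_t(\vp^n)<0\}$), whereas the paper uses a \emph{relative} one. Even in the finite-interval case where one does have global $d_{up}$-convergence of $V^{S,S'}(\vp^n)$ to $V^{S,S'}(\vp)$, this only yields $G(\vp^n)\to G(\vp)\ge 0$ in $d_{up}$; if $G(\vp)$ touches zero, $G(\vp^n)$ may dip below zero for every $n$, so $\tau_n<T$ persists and your $\wt\vp^n$ fails to converge pointwise to $\vp$ on the tail. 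No amount of interval bookkeeping fixes this.

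The paper's way out is to fix $\delta>0$, write ${}^pV^n:=\Pi(\vp^n)+(\vp^n)^+{}^p\!S-(\vp^n)^-{}^p\!S'$ (and ${}^pV$ for $\vp$), and cut at
\[
\tau:=\inf\{t\ge 0:\ V^{S,S'}_t(\vp^n)<V^{S,S'}_t(\vp)-\delta\ \text{ or }\ {}^pV^n_t<{}^pV_t-\delta\}.
\]
In Step~1 (finitely many intervals), (\ref{27.9.2022.1}) and (\ref{27.9.2022.2}) give $P(\tau<\infty)\le\delta$ for $n$ large, which immediately resolves your ``$\tau_n\to T$'' obstacle. One then sets $\wt\vp^n:=\vp^n\,1_{\auf 0,\tau\auf\cup\auf\tau_A\zu}$ with $A:=\{{}^pV^n_\tau\ge{}^pV_\tau-\delta\}\in\mathcal F_{\tau-}$, i.e.\ the tail is simply zero. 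On $\auf 0,\tau\auf$ the defect comparison via (\ref{16.6.2022.2}) reduces to $V^{S,S'}(\vp^n)-V^{S,S'}(\vp)\ge-\delta$ plus nonnegative remainders; on $A^c$ the liquidation is at left-limit prices (your intuition here is correct); on $A$ one keeps $\vp^n_\tau$ and uses Lemma~\ref{29.1.2023.2} to get $L(\vp^n)\ge L(\vp)-\delta\ge-\delta$ for the \emph{predictable} defect, followed by a further case split on whether $\un X_\tau\gtrless({\rm ess\,inf}_{\mathcal F_-}\un X)_\tau$ to pass from the predictable to the actual defect. Your sentence ``the jump of $\un X,\ov X$ at $\tau_n$ itself is irrelevant'' is therefore too optimistic: it \emph{is} relevant at a possibly totally inaccessible $\tau$, and this is precisely why the second stopping condition on ${}^pV^n$ and condition~(\ref{27.9.2022.2}) are needed. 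The outcome of Step~1 is only $(M+\delta)$-admissibility; one then lets $\delta\downarrow 0$ along a diagonal. Step~2 passes from finitely many intervals to the general case by (\ref{3.6.2023.2}): approximate $\Omega\times[0,T]$ by a finite union $J$, set $\vp^n:=-M$ outside $J$, and repeat Step~1.
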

\begin{proof}[Proof of Lemma~\ref{12.6.2022.1}]
Let $\vp\in L(\un{X},\ov{X})$ such that $(\Pi(\vp),\vp)$ is $M$-admissible. By Lemma~\ref{28.12.2022.1}, we can and do choose an
approximating sequence~$(\varphi^n)_{n\in\bbn}\subseteq (\bP)^\Pi$ such that (\ref{16.6.2022.2}) holds.
The approximation holds with regard to the semimartingale price systems $S$ and $S'$ and the associated predictable processes by ${}^p\!S$ and ${}^p\!S'$. 
We write $\vp^{0,n} := \Pi(\vp^n)$, $\vp^0:=\Pi(\vp)$, $V^{S,S'}(\vp):=V$ and introduce the predictable processes 
${}^pV^n:= \vp^{0,n} + (\vp^n)^+\, {}^p\!S -(\vp^n)^-\, {}^p\!S'$,
${}^pV:= \vp^0 + \vp^+\, {}^p\!S -\vp^-\, {}^p\!S'$, $L^n:=L(\vp^n)$, and
$L:=L(\vp)$. 

{\em Step 1:} In the first step, we prove the lemma for the special case that $\vp$ invests only during finitely many intervals~$\mathcal{I}^c_i$ 
and $\mathcal{I}^{fc}_i$.  
Let $\delta>0$. In this special case, (\ref{27.9.2022.1}) and (\ref{27.9.2022.2}) imply that 
$P(\sup_{t\in[0,T]}(|V^{S,S'}_t(\vp^n) - V^{S,S'}_t(\vp)| +  |{}^pV^n_t-{}^pV_t|) > \delta)\le \delta$ for $n$ large enough. 
We define $\tau:=\inf\{t\ge 0 : V^{S,S'}_t(\vp^n) < V^{S,S'}_t(\vp) -\delta\ \mbox{or}\ {}^pV^n_t < {}^pV_t -\delta\}$ and 
$A:=\{{}^pV^n_\tau \ge {}^pV_\tau - \delta\}\in\mathcal{F}_{\tau-}$. 
Since $\tau$ is the debut of a progressive set, it is a stopping time (see e.g. \cite[Theorem~7.3.4]{cohen.elliott.2015}).
Consequently, $\auf 0,\tau\auf\cup \auf \tau_A\zu = \auf 0,\tau\zu \cap \{{}^pV^n\ge {}^pV - \delta\}\in\mathcal{P}$, which
allows us to consider the strategy
\beao
\wt{\vp}^n:=\vp^n1_{\auf 0,\tau\auf\cup \auf \tau_A\zu}.
\eeao
At first, we compare liquidation values strictly before $\tau$. By (\ref{16.6.2022.2}), it is easy to check that 
\beam\label{9.2.2023.1}
& & \vp^{0,n} + M + (\vp^n + M)^+ \un{X} - (\vp^n + M)^- \ov{X} - \left(\vp^0 + M + (\vp + M)^+ \un{X} - (\vp + M)^- \ov{X}\right)\nonumber\\
& & = V^{S,S'}(\vp^n) - V^{S,S'}(\vp) + (\vp^+ - (\vp^n)^+)(S-\un{X}) +  (\vp^- - (\vp^n)^-)(\ov{X}-S')\nonumber\\
& & \ge V^{S,S'}(\vp^n) - V^{S,S'}(\vp) \ge -\delta\qquad\quad\mbox{on}\ \auf 0, \tau\auf
\eeam
and thus, by the $M$-admissibility of $\vp$, 
\beam\label{10.2.2023.1}
\vp^{0,n} + M + (\vp^n + M)^+ \un{X} - (\vp^n + M)^- \ov{X} \ge -\delta\qquad\quad\mbox{on}\ \auf 0,\tau\auf.
\eeam
We proceed by analyzing the liquidation value of the portfolio $(\Pi(\wt{\vp}^n)+M,\wt{\vp}^n+M)$ at $\tau$ if the event $\Omega\setminus A$ occurs.
This means that long and short positions in the stock are liquidated at the prices $\un{X}_{\tau-}$ and $\ov{X}_{\tau-}$, respectively.
The paths of the processes $\vp^{0,n}$ and $\vp^n$ must be of finite variation in a left-hand neighborhood 
of $\tau$ if $\ov{X}_{\tau-}>\un{X}_{\tau-}$ (see Proposition~\ref{10.12.2022}(a)). Consequently, on $(\Omega\setminus A)\cap\{\ov{X}_{\tau-}>\un{X}_{\tau-}\}$ one receives $(\vp^n_{\tau-} + M)^+ \un{X}_{\tau-} - (\vp^n_{\tau-} + M)^- \ov{X}_{\tau-} \ge -(\vp^{0,n}_{\tau-}+M+\delta)$ by (\ref{10.2.2023.1}).
On $(\Omega\setminus A)\cap\{\ov{X}_{\tau-}=\un{X}_{\tau-}\}$ we use that $V^{S,S'}_t(\vp^n+M)+M\ge -\delta$ for all $t<\tau$ by (\ref{10.2.2023.1}),  
$V^{S,S'}_t(\vp^n+M)\to V^{S,S'}_{\tau-}(\vp^n+M)$ as $t\uparrow \tau$, and 
$V^{S,S'}_{\tau-}(\vp^n+M)+M=V^{S,S'}_{\tau-}(\vp^n+M)+M+\Delta^-C_\tau(\vp^n+M)=\vp^{0,n}_\tau+M+(\vp^n_\tau + M)^+ \un{X}_{\tau-} - (\vp^n_\tau + M)^- \ov{X}_{\tau-}$. This yields that 
\beam\label{11.2.2023.1}
\wt{\vp}^{0,n}_\tau + M = \vp^{0,n}_\tau+M+(\vp^n_\tau + M)^+ \un{X}_{\tau-} - (\vp^n_\tau + M)^- \ov{X}_{\tau-} 
\ge -\delta\quad\mbox{on}\ \Omega\setminus A.
\eeam
Finally, we analyze the liquidation on the event~$A$, i.e., we have to show that 
$\vp^{0,n} + M + (\vp^n + M)^+ \un{X} - (\vp^n + M)^- \ov{X}\ge 0$ on the set $\auf\tau_A\zu$.
Analogously to (\ref{9.2.2023.1}), again using (\ref{16.6.2022.2}), we estimate   
\beam\label{18.9.2022.1}
L^n - L & = & {}^pV^n - {}^pV + (\vp^+ - (\vp^n)^+)({}^p\!S -{\rm ess inf}_{\mathcal{F}_-}\un{X}) +  (\vp^- - (\vp^n)^-)({\rm ess sup}_{\mathcal{F}_-}\ov{X}-{}^p\!S')\nonumber\\
& \ge & {}^pV^n - {}^pV\quad \mbox{on}\ \auf\tau_A\zu.
\eeam
By Lemma~\ref{29.1.2023.2}, we have that $L\ge 0$ up to evanescence. 
Together with (\ref{18.9.2022.1}), we obtain that 
\beam\label{10.2.2023.2}
L^n \ge L -\delta \ge -\delta\quad\mbox{on}\ \auf \tau_A\zu.
\eeam
In order to replace the predictable lower bound~$L^n$ of the liquidation process by the process itself, we pointwise distinguish the two cases below. Note that $\tau$ is in general not predictable and thus, $({\rm ess inf}_{\mathcal{F}_-}\un{X})_\tau$ need not coincide with 
${\rm ess inf}_{\mathcal{F}_{\tau-}}\un{X}_\tau$.

{\em Case~1:}\ $\un{X}_\tau \ge ({\rm ess inf}_{\mathcal{F}_-}\un{X})_\tau$ if $\vp_\tau + M \ge 0$ 
(and $\ov{X}_\tau\le ({\rm ess sup}_{\mathcal{F}_-}\ov{X})_\tau$ if $\vp_\tau + M < 0$). 
This means that the liquidation value becomes higher. Inequation~(\ref{10.2.2023.2}) yields
\beam\label{7.1.2022.1}
& & \vp^{0,n} + M + (\vp^n + M)^+ \un{X} - (\vp^n + M)^- \ov{X}\\ 
& & \ge L^n \ge -\delta\quad\mbox{on}\ \auf \tau_A\zu\cap(\{\vp+M\ge 0,\ \un{X}\ge {\rm ess inf}_{\mathcal{F}_-}\un{X}\}\cup\{\vp+M<0,\ \ov{X}
\le {\rm ess sup}_{\mathcal{F}_-}\ov{X}\})\nonumber.
\eeam

{\em Case~2:}\ $\un{X}_\tau<({\rm ess inf}_{\mathcal{F}_-}\un{X})_\tau$ if $\vp_\tau + M \ge 0$ 
(and $\ov{X}_\tau>({\rm ess sup}_{\mathcal{F}_-}\ov{X})_\tau$ if $\vp_\tau + M < 0$). 
This case can only occur if $\tau$ is an unpredictable stopping time with $\un{X}_\tau<\un{X}_{\tau-}$ (or $\ov{X}_\tau>\ov{X}_{\tau-}$).
The liquidation value becomes smaller than its predictable lower bound, but the decrease is smaller than that of the limiting strategy:
\beam\label{7.1.2022.2}
& & \vp^{0,n} + M + (\vp^n + M)^+ \un{X} - (\vp^n + M)^- \ov{X}\\
& & = \vp^0 + M + (\vp + M)^+ \un{X} - (\vp + M)^- \ov{X} + L^n - L\nonumber\\
& & \quad + ((\vp^n)^+ -\vp^+)(\un{X}-{\rm ess inf}_{\mathcal{F}_-}\un{X}) + (\vp^- -(\vp^n)^-)(\ov{X}-{\rm ess sup}_{\mathcal{F}_-}\ov{X})\nonumber\\
& & \ge \vp^0 + M + (\vp + M)^+ \un{X} - (\vp + M)^- \ov{X} + L^n - L\nonumber\\
& & \ge -\delta\quad\mbox{on}\ \auf \tau_A\zu\cap(\{\vp+M\ge 0,\ \un{X}<{\rm ess inf}_{\mathcal{F}_-}\un{X}\}\cup\{\vp+M<0,\ \ov{X} > {\rm ess sup}_{\mathcal{F}_-}\ov{X}\}),
\nonumber
\eeam
where the equality holds by (\ref{16.6.2022.2}) and the second inequality holds because $(\vp^0,\vp)$ is $M$-admissible. 

Putting together, the strategy~$(\Pi(\wt{\vp}^n),\wt{\vp}^n)$ is $(M+\delta)$-admissible.
By compression and since $\delta>0$ is arbitrary, we find an approximating sequence (with a suitable null sequence~$(\delta_n)_{n\in\bbn}$)
that is also $M$-admissible.\\ 

{\em Step 2:} We now proceed to the general case. Again, let $\delta>0$. By (\ref{3.6.2023.2}) one can find a finite union $J$ of intervals $(I^c_i)_{i\in\bbn}$, $(I^{fc}_i)_{i\in\bbn}$ such that $P(\sup_{t\in[0,T]} |1_J\mal V_t - V_t |1_{\{X_t=0\}}\vee |1_J\mal V_{t-} - V_{t-}|1_{\{X_{t-}=0\}} >\delta)\le \delta$.
Given $\delta>0$, we modify the approximating strategies such that they satisfy $\vp^n:=-M$ on 
$(\Omega\times[0,T])\setminus J$. By construction of $I^c_i$, $I^{cf}_i$, there are no trading costs at the boundaries of the intervals. In addition,
by the semimartingale property of $S'$, $J$ can be chosen close enough to $\Omega\times[0,T]$ such that 
$d_{\bbs}(-M 1_{(\Omega\times[0,T])\setminus J}\mal S',0)\le \delta$.
With these considerations in mind, the proof is analog to Step~1. We note that on $(\Omega\times[0,T])\setminus J$ the strategies $\vp^n$ 
need not be liquidated since they just hold $-M$ stocks.
\end{proof}

\begin{lemma}\label{28.2.2023.3}
Let $\vp^n,\vp\in L(\un{X},\ov{X})$ for all $n\in\bbn$ and let $S,S'$ be semimartingale price systems. If $\vp^n$ is $M$-admissible for all $n\in\bbn$, $V^{S,S'}(\vp^n)\to V^{S,S'}(\vp)$ uniformly in probability, and $\vp^n\to\vp$ up to evanescence on $\{\ov{X}>\un{X}\}$, then $\vp$ is $M$-admissible as well. 
\end{lemma}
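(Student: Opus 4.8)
The plan is to deduce the pointwise inequality defining $M$-admissibility of $\varphi$ — namely $\Pi(\varphi)+M+(\varphi+M)^+\un X-(\varphi+M)^-\ov X\ge 0$ up to evanescence — by passing to the limit in the same inequality written for each $\varphi^n$. Since the conclusion concerns $\varphi$ only, I may freely thin the sequence: using the $d_{up}$-convergence $V^{S,S'}(\varphi^n)\to V^{S,S'}(\varphi)$ I extract a subsequence $(n_k)$ along which $\sup_{t\in[0,T]}|V^{S,S'}_t(\varphi^{n_k})-V^{S,S'}_t(\varphi)|\to 0$ $P$-a.s. I then fix $\omega$ outside a single $P$-null set on which this uniform convergence holds, $\varphi^{n_k}\to\varphi$ holds on $\{\ov X>\un X\}$, and the $M$-admissibility inequality for $\varphi^{n_k}$ holds for every $k$ and all $t\in[0,T]$; $\omega$ is suppressed below. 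Recalling that $\Pi(\psi)=V^{S,S'}(\psi)-\psi^+ S+\psi^- S'$ for $\psi\in L(\un X,\ov X)$, and that $\Pi$ — hence $V^{S,S'}$ for the price systems $S,S'$ of the statement — does not depend on the chosen semimartingales (Proposition~\ref{19.6.2023.3}), the $M$-admissibility inequality for $\varphi^{n_k}$ at a time $t$ reads $g_t\big(\varphi^{n_k}_t,V^{S,S'}_t(\varphi^{n_k})\big)\ge 0$, where $g_t(a,v):=v-a^+S_t+a^-S'_t+M+(a+M)^+\un X_t-(a+M)^-\ov X_t$ is jointly continuous in $(a,v)$ (the maps $a\mapsto a^\pm,(a+M)^\pm$ being continuous).

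Next I fix $t$ and split according to $\Omega\times[0,T]=\{\ov X>\un X\}\cup\{\ov X=\un X\}$. On $\{\ov X_t>\un X_t\}$ one has $\varphi^{n_k}_t\to\varphi_t$ by hypothesis, so letting $k\to\infty$ in $g_t(\varphi^{n_k}_t,V^{S,S'}_t(\varphi^{n_k}))\ge 0$ and using continuity of $g_t$ gives $g_t(\varphi_t,V^{S,S'}_t(\varphi))\ge 0$, which is precisely the $M$-admissibility inequality for $\varphi$ at $t$. On $\{\ov X_t=\un X_t\}$ one has $S_t=S'_t=\un X_t=\ov X_t$, so $-a^+S_t+a^-S'_t=-a\,\un X_t$ and $(a+M)^+\un X_t-(a+M)^-\ov X_t=(a+M)\un X_t$, whence $g_t(a,v)=v+M(1+\un X_t)$ does not depend on $a$; thus $V^{S,S'}_t(\varphi^{n_k})+M(1+\un X_t)\ge 0$ for all $k$, and letting $k\to\infty$ yields $V^{S,S'}_t(\varphi)+M(1+\un X_t)=g_t(\varphi_t,V^{S,S'}_t(\varphi))\ge 0$, again the desired inequality for $\varphi$. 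As $t$ and $\omega$ were arbitrary outside a $P$-null set, $(\Pi(\varphi),\varphi)$ is $M$-admissible.

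I expect the only genuine subtlety to be the frictionless case, since there $\varphi^{n_k}_t$ need not converge; the point is that at a time of vanishing spread every semimartingale price system must coincide with $\un X=\ov X$, which turns the $M$-admissibility expression into a function of $V^{S,S'}$ alone, so the hypothesis $V^{S,S'}(\varphi^n)\to V^{S,S'}(\varphi)$ suffices there — isolating the identity $g_t(a,v)=v+M(1+\un X_t)$ valid on $\{\ov X=\un X\}$ handles simultaneously the passage to the limit and the re-reading of the limit inequality for $\varphi$. The remaining bookkeeping — collecting the countably many evanescence exceptional sets from the $M$-admissibility of the $\varphi^{n_k}$, the a.s.-uniform convergence set along the subsequence, and the evanescence set from $\varphi^{n_k}\to\varphi$ off $\{\ov X=\un X\}$ into one $P$-null set — is routine, as is noting that $\Pi(\varphi)$ and $\Pi(\varphi^n)$ are finite because $\varphi,\varphi^n\in L(\un X,\ov X)$ possess finite-valued l\`agl\`ad wealth processes.
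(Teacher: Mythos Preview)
Your proof is correct and follows essentially the same approach as the paper's: both rewrite the $M$-admissibility quantity as $V^{S,S'}(\psi)$ plus a correction that depends only on $\psi$ (and the prices), observe that on $\{\ov X=\un X\}$ this correction is independent of $\psi$ because $S=S'=\un X=\ov X$ there, and then pass to the limit using pointwise convergence of $\vp^n$ on $\{\ov X>\un X\}$ together with the $d_{up}$-convergence of $V^{S,S'}$. The paper's version is just a more compact packaging, defining $A(\psi):=\Pi(\psi)+M+(\psi+M)^+\un X-(\psi+M)^-\ov X-V^{S,S'}(\psi)$ and noting $A(\vp^n)\to A(\vp)$ pointwise by the same case split you spell out.
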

\begin{proof}
%
For any $\psi\in L(\un{X},\ov{X})$ we define the process~$A(\psi):=\Pi(\psi) + M + (\psi + M)^+ \un{X} -(\psi+M)^-\ov{X} - V^{S,S'}(\psi)$ and rewrite it to
%
%
\beao
A(\psi) & = & M + 1_{\{\psi\ge 0\}}(\psi(\un{X}-S)+M\un{X}) 
+ 1_{\{-M<\psi<0\}}(\psi(\un{X}-S')+M\un{X})\nonumber\\
& & + 1_{\{\psi\le -M\}}(\psi(\ov{X}-S')+M\ov{X}).  
\eeao
Consequently, $\vp^n\to\vp$ implies that $A(\vp^n)\to A(\vp)$ pointwise.
Since, $\Pi(\vp^n) + M + (\vp^n + M)^+ \un{X} -(\vp^n+M)^-\ov{X}\ge 0$ for all $n\in\bbn$ and $V^{S,S'}(\vp^n)\to V^{S,S'}(\vp)$ uniformly in probability, the assertion follows.
\end{proof}
 
For the rest of the section, we fix an $M\in\bbr_+$ and a sequence 
\beam\label{25.4.2023.1}
(\vp^{0,n},\vp^n)_{n\in\bbn}\subseteq \mathcal{A}\ \mbox{such that}\ \vp^{0,n}_T,\vp^n_T\ge -M\ \mbox{for all}\  n\in\bbn\ \mbox{and}\ 
(\vp^{0,n}_T,\vp^n_T)\to (C^0,C)\ \mbox{a.s.},
\eeam
where $(C^0,C)\in L^0(\Omega,\mathcal{F},P;\bbr^2)$ is a {\em maximal element} in the sense that (\ref{25.4.2023.1}) cannot hold for a random vector that strictly dominates $(C^0,C)$ in the pointwise order. E.g., $(2,-1)$ does not dominate $(0,0)$ even if $\ov{S}_T=1$. By (\ref{12.6.2023.1}) maximal elements exist following the arguments in \cite[Lemma 4.3]{delbaen1994general}. 

To proof Theorem~\ref{7.5.2022.2},
we have to show that there exists an $M$-admissible strategy with terminal value $(C^0,C)$, but there is still a long way to go. 
We refer to \cite[Remark 4.4]{delbaen1994general} for an in-depth discussion for the need of maximal elements and also for the argument why it is sufficient to consider sequences as in (\ref{25.4.2023.1}) to prove Theorem~\ref{7.5.2022.2}. We note that for these considerations it does not make any difference that we consider a two-dimensional framework here. Under convention~(\ref{27.7.2022.1}), we can interpret $(\vp^{0,n}_T,\vp^n_T)$ as the position after the market has closed. Thus, positions in different ``currencies'' are analog to wealth in different scenarios.\\

By Lemma~\ref{7.5.2022.1}, the sequence in (\ref{25.4.2023.1}) has to be $M$-admissible. In addition, for each $n\in\bbn$, there is a sequence of bounded processes~$(\vp^{n,m})_{m\in\bbn}$ that approximate $\vp^n$ in the sense 
of Definition~\ref{def:DefinitionL}. By Lemma~\ref{12.6.2022.1}, the self-financing strategies~$(\vp^{0,n,m},\vp^{n,m})$ with $\vp^{0,n,m}:=\Pi(\vp^{n,m})$ can be chosen to be $M$-admissible as well. Conditions~(\ref{3.6.2023.2}) and (\ref{27.9.2022.1}) in Definition~\ref{def:DefinitionL}  imply that $V^{S,S'}_T(\vp^{n,m})\to V_T$ in probability as $m\to\infty$, which means that $\vp^{0,n,m}_T + (\vp^{n,m}_T)^+ S_T - (\vp^{n,m}_T)^- S'_T \to \vp^{0,n}_T + (\vp^n_T)^+ S_T - (\vp^n_T)^- S'_T$ in probability. By $\vp^{n,m}_T\to \vp^n_T$ pointwise as $m\to\infty$, we get $\vp^{0,n,m}_T\to \vp^{0,n}_T$ in probability as $m\to\infty$ for all $n\in\bbn$. 
Since the convergence in probability is metrizable, there exists a subsequence~$(m_n)_{n\in\bbn}$ with $(\vp^{0,n,m_n}_T,\vp^{n,m_n}_T)\to (C^0,C)$ in probability as $n\to\infty$ (one can choose $m_n$ such that $P(|\vp^{0,n,m_n}_T-\vp^{0,n}_T|>1/n\ \mbox{or}\ |\vp^{n,m_n}_T-\vp^n_T|>1/n)\le 1/n$). By passing to a subsequence, we can also get almost sure convergence.   
This allows us to assume w.l.o.g. that there exists a sequence~$(a_n)_{n\in\bbn}\subseteq\bbr_+$ such that already the sequence in (\ref{25.4.2023.1}) satisfies
\beam\label{31.12.2022.1}
|\ph^n|\le a_n\quad\mbox{and}\quad (\vp^{0,n},\vp^n)_{n\in\bbn}\subseteq\mathcal{A}^M,\quad n\in\bbn.
\eeam
Of course, $a_n$ may explode as $n\to\infty$ but for the analysis at the boundaries of the intervals with friction we want to argue with bounded strategies. 
\begin{note}\label{5.5.2023.1}
Let $\wt{S}$ be a semimartingale price system.
Then, there exist semimartingale price systems~$S$ and $S'$ such that for every starting time of an excursion~$\tau$, there exist stopping times $\sigma$ and 
$\wt{\sigma}$ satisfying the conditions from Assumption~\ref{11.5.2022.1} with
\beam\label{13.8.2022.3}
(S,S')1_{\auf\tau,\Gamma(\tau)\zu}=(\ov{X},\un{X})1_{\auf \tau,\sigma\auf} + (\wt{S},\wt{S}) 1_{\auf\sigma,\wt{\sigma}\auf} + (\un{X},\ov{X})1_{\auf\wt{\sigma},\Gamma(\tau)\zu},
\eeam
\end{note}
\begin{proof}
There are at most countably many excursions. Since the processes $\ov{X},\un{X}$, and $\wt{S}$ are c\`adl\`ag, one can choose $\sigma$ and $\wt{\sigma}$ close enough to $\tau$ and $\Gamma(\tau)$, respectively, such that the correction terms that occurs by replacing the semimartingale by $Q^A$-super- and submartingales
at the boundaries are arbitrarily small in terms of the metric~$d_{\bbs}$.
\end{proof} 
For the rest of the section, we fix the semimartingale price systems~$S$ and $S'$ to evaluate long and short positions in the risky asset, respectively, and construct a limiting strategy 
of the sequence in (\ref{25.4.2023.1})/(\ref{31.12.2022.1}) using these semimartingales in Definition~\ref{def:DefinitionL}(b). This implies that if the spread cannot move away from zero continuously or return to zero continuously, then one can take any semimartingale price system to construct the limiting strategy.   

\subsection{Fictitious dormant market}

In the proof of the frictionless FTAP the concatenation property of the set of wealth processes plays a crucial role.
This property is not available in models with transaction costs since on $\{X_t>0\}$ one cannot switch between two strategies without additional costs. On the other hand, frictionless markets are included in our model. 
To prove Theorem~\ref{7.5.2022.2}, we proceed in two steps. In the {\em first step}, we consider a fictitious market that behaves similar to a frictionless market and satisfies the concatenation property. When the bid-ask spread is positive, the market is dormant, but at other times one can switch between the strategies. More precisely, the gains that are obtained during an excursion of the spread away from zero enter in the wealth processes but one cannot switch from one strategy to another strategy during this time. 
 Then, in the {\em second step}, the limiting strategy is constructed separately for each excursion, and properties (\ref{27.9.2022.1}), (\ref{27.9.2022.2}), and (\ref{23.8.2020.1}) are verified accordingly. We stress that the second step is the main new step. However, the first step that is developed in this subsection is not only needed to capture the frictionless intervals but also to paste the limiting strategies of the countably many excursions together and obtain a suitable limiting strategy along the whole interval.

Intuitively, the dormant model coincides  with the original 
model if $X_-=0$, and it is dormant during excursions of the spread away from zero. When an excursion ends and a new frictionless interval begins, the model is restarted with the current frictionless wealth (including gains from trades during the excursions). 
At the starting time of an excursion two different cases can occur: Either the spread is still zero. Then, we investor can still switch to another strategy using the information that an excursion has just started. Or, the spread jumps away from zero. Then, the investor is already in the midst of an excursion, and in the fictitious model the gains at the jump time and the gains during the rest of the excursion cannot be separated from each other. The easiest way to model this is to allow {\em in the first case} for double jumps of the wealth process at the starting time of an excursion: the 
left jump models movements triggered by a possible synchronous jump of the semimartingale price systems and the right jump models the anticipated gains during the excursion.
By contrast, if the spread jumps away from zero, the fictitious wealth already takes the value after the excursion and remains constant up to the end  of the excursion. Formally, we introduce the generalized time change~$(\tau_t,D_t)_{t\in[0,T]}$ by
\begin{align*}
\tau_t:=\begin{cases}
t & \mathrm{if}\ X_t=0\\
\Gamma(\tau_1^i)\wedge T &  \mathrm{if}
\ X_t>0\ \mathrm{and\ for}\ i\in\bbn\ \mbox{with}\ t\in[\tau_1^i,\Gamma(\tau_1^i))
\end{cases}
\end{align*}
and $D_t:=\{X_t>0\}\cap(\cup_{i\in\bbn} \{\tau_1^i\le t<\Gamma(\tau_1^i)\le T,\ X_{\Gamma(\tau_1^i)-}=0\})$. We observe that $(\tau_t)_{t\in[0,T]}$ is a nondecreasing family of stopping times (not necessarily right-continuous). It is shown in \cite[proof of Lemma~5.2]{kuehn.molitor.2022} that $\Gamma(\tau_1^i)_{\{X_{\Gamma(\tau_1^i)-}=0\}}$ is a predictable stopping time. Together with $\{X_t>0\}=\{\tau_t>t\}\in\mathcal{F}_{\tau_t-}$ for $t\in[0,T)$, this implies that $D_t\in\mathcal{F}_{\tau_t-}$ for all $t\in[0,T]$, using that $D_T=\emptyset$.
We introduce the not necessarily right-continuous filtration $\wt{\bbf}=(\wt{\mathcal{F}}_t)_{t\in[0,T]}$ by
\beao
\wt{\mathcal{F}}_t  & := & \sigma(\mathcal{F}_{\tau_t-} \cup ((\Omega\setminus D_t)\cap\mathcal{F}_{\tau_t})),\quad t\in[0,T].
\eeao
For any $\psi\in (\bP)^\Pi$ with wealth process~$V^{S,S'}(\psi) := \psi^+\mal S -\psi^-\mal S' - C^S(\psi^+) - C^{S'}(-\psi^-)$, the {\em dormant wealth process}~$\wt{V}^\psi$ is defined as
\beam\label{11.8.2022.1}
\wt{V}^\psi_t:= V^{S,S'}_{\tau_t-}(\psi) 1_{D_t} + V^{S,S'}_{\tau_t}(\psi) 1_{\Omega\setminus D_t},\quad t\in[0,T].
\eeam
For the bounded strategies above, we write $\wt{V}^n:=\wt{V}^{\vp^n}$, $n\in\bbn$. We note that the definition of $\wt{V}^\psi_t$ on $\{\tau_t=T\}$ can depend on the choice of $(S,S')$, but these semimartingale price systems are already fixed.
The paths of the process~$\wt{V}^\psi$ are l\`agl\`ad
but can have double jumps. The integration theory for l\`agl\`ad integrators is comparatively little developed (see the monograph by Abdelghani and Melnikov~\cite{abdelghani.melnikov.2022} for a survey).
An integration theory that is tailor-made for trading models with l\`agl\`ad price processes (not trading strategies!) is introduced in 
K\"uhn and Stroh~\cite{kuehn.stroh.2009}. The key idea is that one can separately invest 
in the part of the right jumps of the asset price process that can be overlapped by countably many stopping times, i.e., that is ``accessible''.
\begin{remark}\label{29.6.2023.1}
It is quite natural to model the dormant market with double jumps since this allows to keep the time domain~$[0,T]$. 
However, a reader who prefers to follow the arguments of this subsection within the standard framework of c\`adl\`ag integrators
may introduce at time $(\tau_1^i)_{\{X_{\tau_1^i}=0\}}$ an additional time shift of $2^{-i}$: the movement in the original model is paused briefly, and the right jump turns into a left jump taking place at a later point in time. Formally, this corresponds to the generalized time change~$((\tau\circ\tau')_{t+},D'_t)_{t\in[0,T]}$ with $\tau'_t := \inf\{u\ge 0 : u + \sum_{i=1}^\infty 2^{-i} 1_{\{\tau_1^i<u,\ X_{\tau_1^i}=0\}}\ge t\}$ and $D'_t:=\{X_{\tau'_t}>0\}\cap\{\tau'_t<T\}\cap\{X_{((\tau\circ\tau')_{t+})-}=0\}$ for $t\in[0,T+\sum_{i=1}^\infty 2^{-i} 1_{\{\tau_1^i<T,\ X_{\tau_1^i}=0\}}]$.
\end{remark}

\begin{lemma}[cf. Korollar~10.12 in Jacod~\cite{jacod.1979}]\label{3.6.2024.01}
For every $\psi\in (\bP)^\Pi$, the process~$\wt{V}^\psi$ is an optional semimartingale (see, e.g., 
\cite[Definition 2.5]{kuehn.stroh.2009}) under the filtration~$\wt{\bbf}$.
\end{lemma}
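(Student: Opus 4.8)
The plan is to exhibit $\wt V^\psi$ as the image under the generalized time change $(\tau_t,D_t)_{t\in[0,T]}$ of the genuine $\bbf$-semimartingale $V:=V^{S,S'}(\psi)=\psi^+\mal S-\psi^-\mal S'-C^S(\psi^+)-C^{S'}(-\psi^-)$, and then to invoke the cited time-change result of Jacod.

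First I would record that $V$ is an optional $\bbf$-semimartingale. Indeed, $\psi^+,\psi^-$ are bounded predictable, so $\psi^+\mal S$ and $\psi^-\mal S'$ are c\`adl\`ag semimartingales, while $C^S(\psi^+)$ and $C^{S'}(-\psi^-)$ are nondecreasing l\`agl\`ad processes and hence of finite variation; writing the c\`adl\`ag part as a local martingale plus a finite-variation process via the Bichteler--Dellacherie decomposition, $V$ is the sum of a c\`adl\`ag local martingale and a l\`agl\`ad finite-variation process. (In particular $V$ is c\`adl\`ag on $\{X_-=0\}$ and on the part of each excursion where the cost terms do not move.)

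Next I would verify that $(\tau_t,D_t)$ is admissible as a generalized time change and that $\wt\bbf$ is the associated time-changed filtration. This is essentially already done in the text preceding the lemma: $(\tau_t)_{t\in[0,T]}$ is a nondecreasing family of $\bbf$-stopping times, and $D_t\in\mathcal{F}_{\tau_t-}$ for every $t$ (using that $\Gamma(\tau_1^i)_{\{X_{\Gamma(\tau_1^i)-}=0\}}$ is predictable, $\{X_t>0\}=\{\tau_t>t\}\in\mathcal{F}_{\tau_t-}$, and $D_T=\emptyset$), while by definition $\wt{\mathcal{F}}_t=\sigma(\mathcal{F}_{\tau_t-}\cup((\Omega\setminus D_t)\cap\mathcal{F}_{\tau_t}))$. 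The set $D_t$ records whether, at the (possibly split) instant $\tau_t$, one has not yet crossed $\tau_t$ (value $V_{\tau_t-}$, information $\mathcal{F}_{\tau_t-}$) or has crossed it (value $V_{\tau_t}$, information $\mathcal{F}_{\tau_t}$), which is exactly the case distinction (\ref{11.8.2022.1}) defining $\wt V^\psi$.

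With these two points in hand, Korollar~10.12 in Jacod~\cite{jacod.1979} applies --- if one prefers, separately to the local-martingale part and to the finite-variation part of $V$, since a time-changed local martingale becomes an optional $\wt\bbf$-local martingale and a time-changed finite-variation process stays of finite variation --- and yields that $\wt V^\psi_t=V_{\tau_t-}1_{D_t}+V_{\tau_t}1_{\Omega\setminus D_t}$ is an optional semimartingale under $\wt\bbf$. The main point to be careful about is the bookkeeping at the time change: because $(\tau_t)$ is not right-continuous, the time-changed filtration $\wt\bbf$ is not right-continuous either, $\wt V^\psi$ genuinely acquires double jumps at the starting times of excursions lying in $\{X=0\}$, and ``semimartingale'' must therefore be read in the optional sense throughout; this is precisely the generality in which the cited result is invoked, and no regularity of $V$ beyond being an $\bbf$-(optional) semimartingale enters.
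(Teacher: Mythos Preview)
Your outline has the right overall shape, but it treats Jacod's Korollar~10.12 as a black box that directly covers the present generalized time change $(\tau_t,D_t)$, and this is where the argument breaks down. The lemma in the paper is labeled ``cf.\ Korollar~10.12'' precisely because the setup here is nonstandard: the family $(\tau_t)$ is not right-continuous and the additional data $D_t$ splits the instant $\tau_t$ into a ``before'' and an ``after'', which produces genuine double jumps. Jacod's corollary is stated for ordinary time changes, and you do not explain why it should extend to this splitting mechanism; the paper instead \emph{adapts} the proof of Jacod's Theorem~10.10 by hand.

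Concretely, two pieces of technical work are missing. First, for an arbitrary $\wt{\bbf}$-stopping time $\sigma$ one must check that $\tau_\sigma$ is an $\bbf$-stopping time and that $(\tau_\sigma)_{D_\sigma}$ is $\bbf$-predictable; only then can one write $\wt M^k_T-\wt M^k_\sigma$ as a stochastic integral against the stopped $\bbf$-martingale and obtain the optional martingale property of $\wt M^k$ under $\wt\bbf$. Second, your assertion that ``a time-changed local martingale becomes an optional $\wt\bbf$-local martingale'' is not justified and is in fact not what one gets: the localizing $\bbf$-stopping times $T_k$ do not transform into $\wt\bbf$-stopping times at which one could stop $\wt M$, and one loses control over the jump at $\sup\{s:\tau_s<T_k\}$. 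The paper handles this by using a different localizing sequence $S_k=\tau_{T_k}$ and concluding only that $\wt M$ is an optional \emph{semimartingale}, not a local martingale. Without these two ingredients your appeal to Jacod remains a heuristic.
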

\begin{proof}
We have to start with some preparatory work on stopping times regarding the two filtrations. Let $\sigma$ be an $\wt{\bbf}$-stopping time.

{\em Step 1:} Let us show that $\tau_{\sigma}$ is an $(\mathcal{F}_t)_{t\in[0,T]}$-stopping time. By construction of $(\tau_t)_{t\in[0,T]}$, one has
\beam\label{12.7.2024.1}
\{\tau_{\sigma}\le t\} = \left( \{X_t=0\}\cap \{\sigma\le t\}\right)\cup\cup_{i\in\bbn}\left(\{\tau_1^i\le t\}\cap \{\sigma<\tau_1^i\}\right),\quad t\in[0,T].
\eeam
In addition, $\{q<\tau_1^i\}\in \mathcal{F}_{\tau_1^i}$ and $\{q<\tau_1^i\}\cap \mathcal{F}_{\tau_q}\subseteq \mathcal{F}_{\tau_1^i}$ imply that
\beam\label{12.7.2024.2}
\{\sigma<\tau_1^i\} = \cup_{q\in\bbq\cap[0,T]} \left(\{\sigma<q\}\cap\{q<\tau_1^i\}\right)\in\mathcal{F}_{\tau_1^i}.
\eeam
By (\ref{12.7.2024.1}), $\{X_t=0\}\cap\wt{\mathcal{F}}_t\subseteq \mathcal{F}_t$, and (\ref{12.7.2024.2}), one obtains that $\{\tau_\sigma\le t\}\in\mathcal{F}_t$.

{\em Step 2:} Let us show that $(\tau_{\sigma})_{D_\sigma}$ is an $(\mathcal{F}_t)_{t\in[0,T]}$-predictable stopping time, where $D_\sigma := \{\omega\in\Omega : \omega\in D_{\sigma(\omega)}\}$. Since $\Gamma(\tau_1^i)_{\{X_{\Gamma(\tau_1^i)-}=0\}}$ is an $(\mathcal{F}_t)_{t\in[0,T]}$-  predictable stopping time (see \cite[proof of Lemma~5.2]{kuehn.molitor.2022}), we have that $\{\tau_1^i\le\sigma<\Gamma(\tau_1^i)\wedge T\}\cap\{X_{\Gamma(\tau_1^i)-}=0\}\in\mathcal{F}_{\Gamma(\tau_1^i)-}$ and 
$\auf \Gamma(\tau_1^i)_{\{\tau_1^i\le\sigma<\Gamma(\tau_1^i)\wedge T\}\cap\{X_{\Gamma(\tau_1^i)-}=0\}},T\zu\in\mathcal{P}$
(cf., e.g., \cite[Proposition~I.2.10]{jacod.shiryaev}). By $\auf (\tau_{\sigma})_{D_\sigma}, T\zu = \cup_{i\in\bbn} \auf \Gamma(\tau_1^i)_{\{\tau_1^i\le\sigma<\Gamma(\tau_1^i)\wedge T\}\cap\{X_{\Gamma(\tau_1^i)-}=0\}}, T\zu$ we are done.

{\em Step 3:} Now, we come to the main part of the proof. Since the cost terms~$C^S$, $C^{S'}$ are nondecreasing adapted processes, $V^{S,S'}(\psi)$ is an optional semimartingale under $(\mathcal{F}_t)_{t\in[0,T]}$ which even possesses a local martingale part~$M$ with c\`adl\`ag paths. We can adapt the arguments of Jacod~\cite[Theorem~10.10]{jacod.1979}. It is sufficient to analyze the time-changed process~$\wt{M}_t:=M_{\tau_t-}1_{D_t} + M_{\tau_t}1_{\Omega\setminus D_t}$, $t\in[0,T]$, whereas the time-changed finite variation process is obviously of finite variation. 

Let $(T_k)_{k\in\bbn}$ be a localizing sequence of $(\mathcal{F}_t)_{t\in[0,T]}$-stopping times such that $M^{T_k}$, $k\in\bbn$, are martingales under $(\mathcal{F}_t)_{t\in[0,T]}$. 
Let us show that the time-changed stopped process~$\wt{M}^k_t:=M^{T_k}_{\tau_t-}1_{D_t} + M^{T_k}_{\tau_t}1_{\Omega\setminus D_t}$, $t\in[0,T]$, 
is an optional martingale under $\wt{\bbf}$. For this, let $\sigma$ be an arbitrary $\wt{\bbf}$-stopping time. 
By Steps~1 and 2, we have that $\auf (\tau_\sigma)_{D_\sigma}\zu\cup\zu \tau_\sigma, T\zu\in\mathcal{P}$ and thus
$E(\wt{M}^k_T-\wt{M}^k_\sigma)=E(1_{\auf (\tau_\sigma)_{D_\sigma}\zu\cup\zu \tau_\sigma, T\zu}\mal M^{T_k}_T)=0$, using that $\tau_T=T$ and $D_T=\emptyset$.

Now, consider $S_k := \tau_{T_k}$, $k\in \bbn$, which is a localizing sequence of $\wt{\bbf}$-stopping times by $\mathcal{F}_t\subseteq \wt{\mathcal{F}}_t$ and Step~1. 
Since $\wt{M}=\wt{M}^k$ on $\auf 0, \sup\{s\ge 0 : \tau_s<T_k\}\auf$ and $\wt{M}$ is constant on $\zu \sup\{s\ge 0 : \tau_s<T_k\},\tau_{T_k}\auf$, the process $\wt{M}$ is an optional semimartingale under $\wt{\bbf}$, and the assertion of the lemma follows (we note that one has no control over the left and right jump at $\sup\{s\ge 0 : \tau_s<T_k\}$, which is in general not even an $\wt{\bbf}$-stopping time, thus one cannot conclude that $\wt{M}$ is an optional local martingale). 
\end{proof}
\begin{lemma}\label{24.4.2023.1}[cf. Lemma~4.5 in Delbaen and Schachermayer~\cite{delbaen1994general}]
Let ${\rm NA}^{nf}$ and ${\rm NUPBR}^{ps}$ be satisfied. Then,
\beao
\sup_{t\in[0,T]}|\wt{V}^n_t-\wt{V}^m_t|\to 0\quad\mbox{in probability as\ }n,m\to\infty.
\eeao
\end{lemma}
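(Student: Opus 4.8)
The plan is to mimic the classical argument of Delbaen and Schachermayer (Lemma~4.5 in \cite{delbaen1994general}), adapted to the dormant market. Suppose for contradiction that $\sup_{t\in[0,T]}|\wt V^n_t-\wt V^m_t|$ does not tend to $0$ in probability. Since the approximating strategies $(\vp^{0,n},\vp^n)$ were arranged to be $M$-admissible (see (\ref{31.12.2022.1}) and Lemma~\ref{12.6.2022.1}), the dormant wealth processes satisfy $\wt V^n\ge -M'$ for a common constant $M'$ (combining the $M$-admissibility bound with the envelope estimates of Lemma~\ref{29.1.2023.2}); moreover each $\wt V^n$ is an optional $\wt{\bbf}$-semimartingale by Lemma~\ref{3.6.2024.01}. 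The key structural feature I would exploit is that the dormant model has the \emph{concatenation property}: at any $\wt{\bbf}$-stopping time $\sigma$ one may switch from strategy $\vp^n$ to strategy $\vp^m$, because by construction of the time change $\tau$ one only ever switches at times where the original spread $X$ (or $X_-$) vanishes, and there trading incurs no cost. Hence if $\wt V^n$ and $\wt V^m$ differ by more than $2\eps$ with probability bounded below, one can form a concatenated admissible strategy that produces, at some stopping time, a wealth exceeding $\min(\wt V^n,\wt V^m)$ by a fixed amount $\eps$ on a set of fixed positive probability, and is bounded below by $-M'$ throughout.

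Concretely, I would proceed as follows. First, fix $\eps>0$ and $\alpha>0$ such that $P(\sup_t|\wt V^n_t-\wt V^m_t|>2\eps)>\alpha$ for infinitely many pairs $(n,m)$. For such a pair, let $\sigma_{n,m}:=\inf\{t: \wt V^n_t - \wt V^m_t > 2\eps \text{ or } \wt V^m_t-\wt V^n_t>2\eps\}$, an $\wt{\bbf}$-stopping time since $\wt V^n-\wt V^m$ has l\`agl\`ad paths and is $\wt{\bbf}$-optional; on $\{\sigma_{n,m}<\infty\}$ one of the two processes strictly dominates the other. Switching at $\sigma_{n,m}$ from the smaller to the larger strategy — which, because the switch occurs at a cost-free time, translates back into a genuine $M$-admissible self-financing strategy in the original market via the identification of the dormant wealth with $V^{S,S'}_{\tau_\bullet}$ — yields a bounded-below wealth process whose value at $\sigma_{n,m}$ beats the baseline by $\eps$ on a set of probability $>\alpha/2$ (after splitting on which process is larger). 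Iterating this over a rapidly growing subsequence of indices $(n_k)$ and concatenating, along the lines of the doubling-type construction in \cite[Lemma~4.5]{delbaen1994general}, produces for each $N$ an $M$-admissible strategy whose cost value (equivalently, whose dormant terminal wealth, which dominates the liquidation value) at time $T$ is $\ge -M$ and is $\ge N$ with probability bounded away from $0$ uniformly in $N$.

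This is precisely what is excluded: the family of cost-value suprema of $1$-admissible (equivalently, after rescaling, $M$-admissible) strategies is bounded in $L^0$ by the first part of the ${\rm NUPBR}^{ps}$ condition (\ref{12.6.2023.1}), and ${\rm NA}^{nf}$ rules out the degenerate case where the $\eps$-gains could be had with probability one. Hence no such $\eps,\alpha$ can exist, and $\sup_{t\in[0,T]}|\wt V^n_t-\wt V^m_t|\to 0$ in probability. The main obstacle I anticipate is the bookkeeping in the concatenation step: one must check that switching strategies at the $\wt{\bbf}$-stopping times $\sigma_{n,m}$ genuinely corresponds to an admissible strategy in $L(\un X,\ov X)$ with controlled credit line, using that the switch times are carried by the cost-free boundaries of the intervals $\mathcal{I}^c_i,\mathcal{I}^{fc}_i$ and that the dormant construction freezes wealth over excursions so that no hidden cost is incurred mid-excursion; and one must verify that the l\`agl\`ad/double-jump structure does not spoil the debut $\sigma_{n,m}$ being an $\wt{\bbf}$-stopping time, which is where Lemma~\ref{3.6.2024.01} and the optionality of $\wt V^n$ are used.
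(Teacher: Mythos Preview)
Your overall setup is right—assume the supremum fails to vanish, then concatenate strategies in the dormant model at cost-free switching times—and your identification of the bookkeeping issue (that switches must land on the cost-free boundaries of the $\mathcal{I}^c_i,\mathcal{I}^{fc}_i$) is exactly what the paper handles. But the mechanism you propose for the contradiction is off target. You aim to iterate switches to build cost values $\ge N$ and violate~(\ref{12.6.2023.1}); this does not work. Even with one correctly oriented switch, the resulting terminal wealth equals $\wt V^{m_k}_T + \eps$ on the event and $\wt V^{n_k}_T$ otherwise—bounded in $L^0$ since both sequences of terminal values converge by~(\ref{25.4.2023.1}). There is no way to compound these \emph{relative} $\eps$-gains across different pairs $(n_k,m_k)$ into a single admissible strategy with exploding cost value: after the switch you are committed to following $\vp^{m_k}$ to maturity. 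Your citation of \cite[Lemma~4.5]{delbaen1994general} as a ``doubling-type construction'' is a misreading; that proof, like the paper's, hinges on maximality of the limit.

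The missing ingredient is precisely the maximality of $(C^0,C)$ from~(\ref{25.4.2023.1}), which you never invoke. The paper does one switch per pair: follow $\vp^{n_k}$ until $\wt T_k:=\inf\{t:\tau_t<T,\ \wt V^{n_k}_t-\wt V^{m_k}_t\ge\eps\}$, then switch to $\vp^{m_k}$ (with a case distinction on whether the switch occurs before, or only at the end of, the relevant excursion, to guarantee predictability and cost-free rebalancing). The resulting $\psi^k\in\mathcal{A}^M_0$ satisfies $(\Pi_T(\psi^k),\psi^k_T)\ge(\vp^{0,m_k}_T+\eps,\vp^{m_k}_T)$ on $\{\wt T_k<\infty\}$ (probability $\ge\eps$) and equals $(\vp^{0,n_k}_T,\vp^{n_k}_T)$ otherwise. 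Forward convex combinations via \cite[Lemma~A1.1]{delbaen1994general}, together with concavity of $\Pi$ on $(\bP)^\Pi$, then produce an $M$-admissible sequence whose terminal values converge to something strictly dominating $(C^0,C)$, contradicting maximality. As a secondary point, your switching direction is backwards: concatenation carries your current wealth and adds the \emph{increments} of the new strategy, so switching \emph{from} the smaller \emph{to} the larger at $\sigma$ yields terminal wealth $\wt V^{\rm small}_\sigma + (\wt V^{\rm large}_T - \wt V^{\rm large}_\sigma) \le \wt V^{\rm large}_T - 2\eps$, a loss rather than a gain.
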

This means that at the times the bid-ask spread vanishes the wealth processes have to be close together. 
\begin{proof}
The proof is analogous to that in the frictionless model by \cite[Lemma~4.5]{delbaen1994general}, but a bit more technical since we have to deal with processes that have double jumps.
By (\ref{25.4.2023.1}), we already know that $(\wt{V}^n_T)_{n\in\bbn}$ is a Cauchy sequence in probability.
Now, assume by contradiction that there exists $\eps>0$ and subsequences $(n_k,m_k)_{k\in\bbn}$ with $P(\sup_{t\in[0,T), \tau_t<T}(\wt{V}^{n_k}_t-\wt{V}^{m_k}_t)> \eps)\ge\eps$ for all $k\in\bbn$ (note convention~(\ref{27.7.2022.1})).
Let $\wt{T}_k:=\inf\{t\ge 0 : \tau_t<T,\ \wt{V}^{n_k}_t-\wt{V}^{m_k}_t\ge \eps\}$. Define $T_k:=\wt{T}_k$ on $\{\wt{T}_k<\infty\}\cap\{X_{\wt{T}_k}=0\}\cap\{\wt{V}^{n_k}_{\wt{T}_k}-\wt{V}^{m_k}_{\wt{T}_k}\ge \eps\}$, $T_k:=\Gamma(\wt{T}_k)$ elsewhere on $\{\wt{T}_k<\infty\}$, and $T_k:=\infty$ on $\{\wt{T}_k=\infty\}$. Consider 
$\psi^k:=\vp^{n_k}1_{\auf 0,T_k\zu\setminus\auf (T_k)_A\zu} + \vp^{m_k}1_{\auf (T_k)_A\zu\cup\zu T_k,T\zu}$ with
$A:=\{\wt{T}_k<\infty\}\cap(\{X_{\wt{T}_k}>0\}\cup\{\wt{V}^{n_k}_{\wt{T}_k}-\wt{V}^{m_k}_{\wt{T}_k}<\eps\})\cap\{X_{\Gamma(\wt{T}_k)-}=0\}$. The process~$\psi^k$ is predictable 
in the original model and thus generates a strategy.
If $\wt{T}_k$ coincides with some $\tau_1^i$, there are two cases. In the case that $X_{\wt{T}_k}=0$ and $\wt{V}^{n_k}_{\wt{T}_k}-\wt{V}^{m_k}_{\wt{T}_k}\ge \eps$ one 
switches from $\vp^{n_k}$ to $\vp^{m_k}$ before the excursion starts (using the information $\mathcal{F}_{\tau_1^i}$). 
Otherwise, one still follows the strategy~$\vp^{n_k}$ during the excursion and switches to $\vp^{m_k}$ 
at its end. In the fictitious model the information about the excursion is already available at $\tau_1^i$ and the model is 
dormant afterwards.
We arrive at $\psi^k_T = \vp^{m_k}_T$, $\Pi_T(\psi^k) \ge \Pi_T(\vp^{m_k}) + \eps$ on $\{\wt{T}_k < \infty\}$ and 
$\psi^k_T = \vp^{n_k}_T$, $\Pi_T(\psi^k) = \Pi_T(\vp^{n_k})$ on $\{\wt{T}_k=\infty\}$ with $P(\wt{T}_k<\infty)\ge\eps$. Now, we pass to forward convex combination of $(\psi^k)_{k\in\bbn}$ (cf. \cite[Lemma~A1.1]{delbaen1994general}). Since $(\Pi(\psi^k),\psi^k)\in\mathcal{A}^M$ for all $k\in\bbn$ and the self-financing operator $\Pi$ is concave on $(\bP)^\Pi$, we arrive at a contradiction to the maximality of $(C^0,C)$ in the above sense.  
\end{proof}
\begin{lemma}\label{14.7.2024.1}
Let $Y$ be an $\wt{\bbf}$-optional semimartingale whose right jumps only take place at the stopping 
times~$(\tau_1^i)_{\{X_{\tau_1^i}=0\}}$, $i\in\bbn$. 
Then, the sequence~$\left(\sum_{i=1}^k \Delta^+ Y_{\tau_1^i} 1_{\{X_{\tau_1^i}=0\}} 1_{\zu \tau_1^i,T\zu}\right)_{k\in\bbn}$ is $d_{up}$-Cauchy, 
and the limiting process~$Y^g:=\sum_{i=1}^\infty \Delta^+ Y_{\tau_1^i} 1_{\{X_{\tau_1^i}=0\}} 1_{\zu \tau_1^i,T\zu}$ is a left-continuous $\wt{\bbf}$-optional semimartingale. The process~$Y^r:=Y-Y^g$ is a c\`adl\`ag $\wt{\bbf}$-semimartingale.
\end{lemma}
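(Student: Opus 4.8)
The plan is to obtain the decomposition from the general structure theory of optional semimartingales and then to recognise $Y^g$ as the explicit right‑jump part once the hypothesis on the location of the right jumps is fed in. By that theory (see, e.g., Jacod~\cite{jacod.1979}) one can write $Y=Y_0+M+A$ under $\wt{\bbf}$, where $M$ is an $\wt{\bbf}$‑optional local martingale and $A$ is a l\`agl\`ad optional process of finite variation, both null at time~$0$. Along a localising sequence of $\wt{\bbf}$‑stopping times I may and do assume below that $M$ is a square‑integrable $\wt{\bbf}$‑optional martingale and that $A$ has integrable total variation; the conclusions for the stopped processes paste together along this sequence by a routine argument. For the finite‑variation part I would put $A^{g,k}:=\sum_{i=1}^k \Delta^+ A_{\tau_1^i}\,1_{\{X_{\tau_1^i}=0\}}\,1_{\zu\tau_1^i,T\zu}$; since the pathwise variation of a l\`agl\`ad path dominates the sum of its right jumps, $\sum_i |\Delta^+ A_{\tau_1^i}|\,1_{\{X_{\tau_1^i}=0\}}\le \Var_0^T(A)<\infty$ a.s., hence $\sup_{t\in[0,T]}|A^{g,k}_t-A^{g,k'}_t|\le\sum_{i>k\wedge k'}|\Delta^+ A_{\tau_1^i}|\,1_{\{X_{\tau_1^i}=0\}}\to 0$ a.s.\ as $k,k'\to\infty$, so $(A^{g,k})_k$ is $d_{up}$‑Cauchy with a left‑continuous limit $A^g$ of finite variation.

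For the martingale part the crucial point is that, for each $i$, the process $\wh M^i_t:=\Delta^+ M_{\tau_1^i}\,1_{\{X_{\tau_1^i}=0\}}\,1_{\zu\tau_1^i,T\zu}(t)$ is itself an $\wt{\bbf}$‑optional martingale: this rests on $\E[\Delta^+ M_\sigma\mid\mathcal F_\sigma]=0$ for every $\wt{\bbf}$‑stopping time $\sigma$, which holds because $M$ is an optional martingale (so $\E[M_T\mid\mathcal F_\sigma]=M_\sigma$ and $\E[M_T\mid\mathcal F_{\sigma+}]=M_{\sigma+}$). Moreover $\wh M^i$ and $\wh M^j$ are orthogonal for $i\neq j$, since their only jumps are right jumps sitting on the disjoint graphs $\auf(\tau_1^i)_{\{X_{\tau_1^i}=0\}}\zu$ and $\auf(\tau_1^j)_{\{X_{\tau_1^j}=0\}}\zu$ (the excursion intervals in (\ref{24.3.2023.1}) being disjoint), whence $[\wh M^i,\wh M^j]=0$ and $\E[\wh M^i_T\wh M^j_T]=0$. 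Therefore $\sum_{i=1}^k \wh M^i$ is a square‑integrable $\wt{\bbf}$‑optional martingale with $\E[(\sum_{i=1}^k\wh M^i_T)^2]=\sum_{i=1}^k\E[(\Delta^+ M_{\tau_1^i})^2\,1_{\{X_{\tau_1^i}=0\}}]$, and these partial sums are bounded uniformly in $k$ by $\E[\sum_{s\in[0,T]}(\Delta^+ M_s)^2]<\infty$ (the right‑jump Bessel inequality for the $L^2$‑martingale $M$). Hence $(\sum_{i\le k}\wh M^i)_k$ converges in $L^2$, and by Doob's maximal inequality for optional martingales also in $d_{up}$, to a left‑continuous $\wt{\bbf}$‑optional local martingale $M^g$ once the localisation is removed.

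Combining, since $\Delta^+ Y=\Delta^+ M+\Delta^+ A$ and, by hypothesis, $Y$ has right jumps only on the graphs $\auf(\tau_1^i)_{\{X_{\tau_1^i}=0\}}\zu$, one has $\sum_{i=1}^k \Delta^+ Y_{\tau_1^i}\,1_{\{X_{\tau_1^i}=0\}}\,1_{\zu\tau_1^i,T\zu}=\sum_{i=1}^k\wh M^i+A^{g,k}$; this sequence is $d_{up}$‑Cauchy with limit $Y^g:=M^g+A^g$, which proves the first assertion. The limit $Y^g$ is left‑continuous (a $d_{up}$‑limit of left‑continuous processes is left‑continuous, arguing along an a.s.\ uniformly convergent subsequence) and is an $\wt{\bbf}$‑optional semimartingale, being the sum of an optional local martingale and a finite‑variation process. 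Finally $Y^r:=Y-Y^g$ is a difference of $\wt{\bbf}$‑optional semimartingales, hence again one; and $Y^r$ has no right jumps, since $\Delta^+ Y^g_{\tau_1^i}=\Delta^+ Y_{\tau_1^i}$ on $\{X_{\tau_1^i}=0\}$ (only $\wh M^i$ and the $i$‑th summand of $A^g$ jump from the right at $\tau_1^i$) while neither $Y$ nor $Y^g$ has other right jumps; a l\`agl\`ad process without right jumps is c\`adl\`ag, so $Y^r$ is a c\`adl\`ag $\wt{\bbf}$‑semimartingale. The step I expect to be the real obstacle is the martingale part of the second paragraph: making rigorous that each $\wh M^i$ is a genuine $\wt{\bbf}$‑optional martingale (the centring of the right jump is with respect to $\mathcal F_\sigma$, not $\mathcal F_{\sigma+}$) together with the orthogonality and the bound $\sum_i\E[(\Delta^+ M_{\tau_1^i})^2\,1_{\{X_{\tau_1^i}=0\}}]\le\E[\sum_{s\in[0,T]}(\Delta^+ M_s)^2]<\infty$; both are classical in optional‑martingale theory but need care because $\wt{\bbf}$ is not right‑continuous.
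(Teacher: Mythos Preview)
Your argument is essentially correct, but the paper takes a much shorter route: it recognises the partial sums directly as stochastic integrals of the bounded optional integrands $H^{2,k}:=\sum_{i=1}^k 1_{\auf(\tau_1^i)_{\{X_{\tau_1^i}=0\}}\zu}$ with respect to the optional semimartingale~$Y$, using the integration theory of \cite[Theorem~3.5]{kuehn.stroh.2009}. The $d_{up}$-Cauchy property then follows at once from the continuity of that integral (a dominated-convergence statement), and the optional semimartingale property of the limit from Galtchouk~\cite[Theorem~2.3]{galtchouk.1985}. Your hands-on decomposition $Y=Y_0+M+A$ with separate treatment of the finite-variation and martingale parts is a valid alternative; its advantage is that it avoids quoting the integration theory of \cite{kuehn.stroh.2009} as a black box, while the paper's approach is that all the delicate points you flag (that each $\wh{M}^i$ is a genuine $\wt{\bbf}$-optional martingale, the orthogonality, the right-jump Bessel inequality, and the compatibility of the localisation with the $d_{up}$-conclusion) are already absorbed into the cited results. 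In effect you are reproving a special case of \cite[Theorem~3.5]{kuehn.stroh.2009}.

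One small point to be careful about in your write-up: the decomposition $Y=Y_0+M+A$ does \emph{not} imply that $M$ and $A$ individually have right jumps only at the $(\tau_1^i)_{\{X_{\tau_1^i}=0\}}$; your estimates for $A^{g,k}$ and $\sum_i\wh{M}^i$ do not actually use this, but the final sentence (``neither $Y$ nor $Y^g$ has other right jumps'') relies on the hypothesis for $Y$ and on the explicit left-continuous form of $Y^g$, not on any such property of $M$ or $A$ separately---which is what you wrote, so this is fine.
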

\begin{proof}
By the continuity of the integral in \cite[Theorem~3.5]{kuehn.stroh.2009}, applied to the optional semimartingale~$Y$ and the sequence of optional integrands~$H^{2,k}:=\sum_{i=1}^k 1_{\auf (\tau_1^i)_{\{X_{\tau_1^i}=0\}}\zu}$, $k\in\bbn$, the above sequence is $d_{up}$-Cauchy, and its  limit~$Y^g$ is the integral of the pointwise limit of $(H^{2,k})_{k\in\bbn}$ with respect to the integrator~$Y$. Then, the integral~$Y^g$ is an optional semimartingale by Galtchouk~\cite[Theorem~2.3]{galtchouk.1985} and the arguments at the end of Step~2 in the proof of \cite[Theorem~3.5]{kuehn.stroh.2009} (the former is only for the case that the integrator is a locally square integrable martingale). By construction, the process~$Y^r$ has no right jumps.
\end{proof}
Next, we proceed towards convergence with respect to the semimartingale topology. 
Since the processes are not c\`adl\`ag but can have right jumps, the \'Emery metric has to be adjusted. To avoid the introduction of too much notation, we write down $\wt{d}_{\bbs}$ only for $\wt{\bbf}$-optional semimartingales~$Y^1$, $Y^2$ whose right jumps only take place at the stopping 
times~$(\tau_1^i)_{\{X_{\tau_1^i}=0\}}$, $i\in\bbn$, using the decomposition in Lemma~\ref{14.7.2024.1}:
\beam\label{19.6.2023.02}
\wt{d}_{\bbs}(Y^1,Y^2) & := & \sup_{H\ \mbox{\small $\wt{\bbf}$-predictable with}\ |H|\le 1,\ G_i\in L^0(\mathcal{F}_{\tau_1^i}),\ |G_i|\le 1} E(\sup_{t\in[0,T]}| H\mal ((Y^1)^r-(Y^2)^r)_t\nonumber\\
& & \qquad\qquad +\sum_{i=1}^\infty G_i 1_{\{\tau_1^i<t,\ X_{\tau_1^i}=0\}}(\Delta^+ Y^1_{\tau_1^i}-\Delta^+ Y^2_{\tau_1^i})|\wedge 1).
\eeam
The integral in (\ref{19.6.2023.02}) can be seen as a standard stochastic integral under $\wt{\bbf}$ or $\wt{\bbf}^+$ (the filtrations generate the same predictable sets). The sum in (\ref{19.6.2023.02}) converges uniformly in probability by the same arguments as in the proof of Lemma~\ref{14.7.2024.1}.
\begin{remark}\label{20.7.2024.1}
With the integral in \cite[Theorem~3.5]{kuehn.stroh.2009}, the definition of $\wt{d}_{\bbs}$ for arbitrary optional semimartingales is canonical. 
For the purposes of the present paper, however, it is sufficient to consider optional semimartingales whose right jumps only take place at the stopping 
times~$(\tau_1^i)_{\{X_{\tau_1^i}=0\}}$, $i\in\bbn$. By Lemma~\ref{3.6.2024.01}, the dormant wealth processes~$\wt{V}^\psi$ satisfy this property. This special situation is much simpler since right jumps can then be treated as left jumps that take place at separate times (cf. Remark~\ref{29.6.2023.1}).
\end{remark}
\begin{lemma}\label{2.6.2023.2}[cf. Lemmas~4.7 to 4.11 in Delbaen and Schachermayer~\cite{delbaen1994general}] 
Let ${\rm NA}^{nf}$ and ${\rm NUPBR}^{ps}$ be satisfied. Then, there exist forward convex combinations
$(\lambda_{n,k})_{n\in\bbn,\ k=0,\ldots,k_n}$, $k_n\in\bbn$, i.e., $\lambda_{n,k}\in\bbr_+$ and $\sum_{k=0}^{k_n} \lambda_{n,k}=1$, such that
for $\psi^n := \sum_{k=0}^{k_n}\lambda_{n,k}\vp^{n+k}$, $(\wt{V}^{\psi^n})_{n\in\bbn}$ is a $\wt{d}_{\bbs}$-Cauchy sequence.
In addition, there exists an $\wt{\bbf}$-optional semimartingale~$\wt{V}$ whose right jumps only take place at the stopping times~$(\tau_1^i)_{\{X_{\tau_1^i}=0\}}$, $i\in\bbn$, such that $\wt{d}_{\bbs}(\wt{V}^{\psi^n},\wt{V})\to 0$ as $n\to\infty$.
\end{lemma}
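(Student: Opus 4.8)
The ``in addition'' part is a formality once the first part is proved: exactly as for \'Emery's metric, the space of $\wt{\bbf}$-optional semimartingales whose right jumps are confined to the stopping times~$(\tau_1^i)_{\{X_{\tau_1^i}=0\}}$, $i\in\bbn$, is complete under $\wt{d}_{\bbs}$ (this uses the decomposition of Lemma~\ref{14.7.2024.1} together with Remark~\ref{20.7.2024.1} and the continuity of the K\"uhn--Stroh integral \cite{kuehn.stroh.2009}), and since the sum term in $(\ref{19.6.2023.02})$ separately controls the accessible right jumps, the $\wt{d}_{\bbs}$-limit~$\wt{V}$ of a Cauchy sequence of such processes has no right jumps outside the $(\tau_1^i)_{\{X_{\tau_1^i}=0\}}$. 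Hence the whole task is to exhibit forward convex combinations $\psi^n=\sum_{k=0}^{k_n}\lambda_{n,k}\vp^{n+k}$ for which $(\wt{V}^{\psi^n})_{n\in\bbn}$ is $\wt{d}_{\bbs}$-Cauchy.

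I would start from two facts already available for the bounded $M$-admissible strategies fixed in $(\ref{31.12.2022.1})$: each $\wt{V}^{\vp^n}$ is an $\wt{\bbf}$-optional semimartingale with right jumps only at the $(\tau_1^i)_{\{X_{\tau_1^i}=0\}}$ (Lemma~\ref{3.6.2024.01}), and $(\wt{V}^{\vp^n})_{n\in\bbn}$ is $d_{up}$-Cauchy (Lemma~\ref{24.4.2023.1}). First I would invoke the second half of the ${\rm NUPBR}^{ps}$ condition, i.e.\ $(\ref{12.6.2023.2})$ (after a straightforward rescaling to $1$-admissible strategies, the cost terms being positively homogeneous), applied to the sequence~$(\ref{25.4.2023.1})$, whose terminal values converge a.s.\ to the maximal element~$(C^0,C)$; this produces a first round of forward convex combinations with $\sup_{n}\sup_{t\in[0,T]}V^{\rm cost}_t(\psi^n)<\infty$ a.s. Together with the $M$-admissibility of the $\psi^n$ and the pointwise bounds $\un{X}\le S,S'\le\ov{X}$ --- which give $V^{\rm liq}(\psi^n)\le V^{S,S'}(\psi^n)\le V^{\rm cost}(\psi^n)$ and a uniform (in $n$) finite lower bound on $V^{\rm liq}(\psi^n)$ --- this makes $(\wt{V}^{\psi^n})_{n\in\bbn}$ uniformly bounded in $L^0$ by a single finite random variable and uniformly bounded below by a finite random variable.

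Next I would run the Delbaen--Schachermayer argument (Lemmas~4.7--4.11 of \cite{delbaen1994general}) transported to this framework. By Remark~\ref{29.6.2023.1}/Remark~\ref{20.7.2024.1} the accessible right jumps at the $(\tau_1^i)_{\{X_{\tau_1^i}=0\}}$ can be reparametrised as left jumps occurring at separate, delayed times; after this reparametrisation each $\wt{V}^{\psi^n}$ is a genuine c\`adl\`ag semimartingale in an auxiliary right-continuous filtration and $\wt{d}_{\bbs}$ becomes an equivalent of the \'Emery metric, so it suffices to carry out the c\`adl\`ag argument there and translate back via Lemma~\ref{14.7.2024.1}. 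In that setting I would decompose each (reparametrised) wealth process into a local martingale part and a finite-variation part, and use the uniform finite lower bound, the $L^0$-bound from the previous step, the $d_{up}$-Cauchyness of $(\wt{V}^{\vp^n})$, and the maximality of $(C^0,C)$ exactly as in \cite{delbaen1994general}: M\'emin's closedness theorem for the space of stochastic integrals together with an $L^0$-Koml\'os selection gives convergence of the local martingale parts after passing to further forward convex combinations, while the finite-variation parts --- whose decreasing parts include the cost processes, controlled by the $V^{\rm cost}$-bound --- are handled by a change of measure turning the wealth processes into supermartingales, the required equivalent measures near the excursion boundaries being supplied by Assumption~\ref{11.5.2022.1}. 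Composing the two rounds of forward convex combinations gives a single family $(\lambda_{n,k})$ with $(\wt{V}^{\psi^n})_{n\in\bbn}$ $\wt{d}_{\bbs}$-Cauchy, and $\wt{V}$ then exists by completeness.

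I expect two points to be the main obstacles. First, transporting the Delbaen--Schachermayer compactness machinery through the generalized time change and the non-right-continuous filtration~$\wt{\bbf}$ with double jumps: one has to verify that the ``right jumps as delayed left jumps'' reduction is legitimate for all the stochastic integrals and for $\wt{d}_{\bbs}$ (this is exactly where Lemma~\ref{14.7.2024.1} and the K\"uhn--Stroh theory enter), and that the change-of-measure/supermartingale step still has the equivalent measures it needs --- which is the role of Assumption~\ref{11.5.2022.1}. Second, the non-linearity: since $\Pi$ is only concave, $\wt{V}^{\psi^n}\neq\sum_{k}\lambda_{n,k}\wt{V}^{\vp^{n+k}}$ and the surplus $\wt{V}^{\psi^n}-\sum_{k}\lambda_{n,k}\wt{V}^{\vp^{n+k}}\geq 0$ has to be shown not to destroy either the $V^{\rm cost}$-boundedness from the first round or the Cauchy property from the second, using the maximality of $(C^0,C)$; reconciling the two selections of the $\lambda_{n,k}$ is the delicate step, and it is here that the precise form of $(\ref{12.6.2023.2})$ --- a bound on pathwise suprema of cost values of the \emph{mixed} strategies, rather than on the $L^0$-convex hull --- is essential.
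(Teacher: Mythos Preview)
Your outline misses the structural ingredient that makes the lemma work and misidentifies the source of the needed measure. The whole point of introducing the dormant market is that it restores the \emph{concatenation property}: at times with $X_-=0$ one can switch from one strategy to another without cost, and during excursions the dormant process is constant. The paper packages this as property~(iii) of the set~$\mathcal{X}$ of dormant wealth processes, together with (i)~$L^0$-boundedness of $\{Y_T:Y\in\mathcal{X}\}$ (from $(\ref{12.6.2023.1})$, not $(\ref{12.6.2023.2})$), (ii)~existence of a dominating element for convex combinations (this is how the concavity of $\Pi$ is handled, making your second ``obstacle'' disappear), and (iv)~the num\'eraire~$N_t=1+S_{\tau_t}$ under which all $\wt{V}^\psi/N$ are bounded below by $-M$ (the ``allowable'' framework of Yan). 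With these four properties one can run Kabanov's abstract version of Delbaen--Schachermayer~\cite[Lemmas~2.3--2.8, 3.3]{kabanov.1997} directly; the reparametrisation you describe is correct and is exactly how the paper reduces to c\`adl\`ag semimartingales.

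Two concrete errors. First, Assumption~\ref{11.5.2022.1} plays no role in this lemma; it is used only later (Steps~3a/3b of the proof of Theorem~\ref{7.5.2022.2}) to control the cost process near the boundaries of a single excursion. The supermartingale-type control here comes from the num\'eraire~$N$ and the lower bound $\wt{V}^\psi/N\ge -M$, not from any equivalent measure on $\ov{X}$ or $\un{X}$. Second, invoking $(\ref{12.6.2023.2})$ at this stage is premature and unnecessary: the lemma needs only $(\ref{12.6.2023.1})$ for the $L^0$-boundedness of terminal values, while $(\ref{12.6.2023.2})$ is reserved for the later step where pathwise suprema of cost values must be controlled. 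Without the concatenation property you cannot carry out the switching arguments (cf.\ the proof of Lemma~\ref{24.4.2023.1}) that underlie every one of the Delbaen--Schachermayer lemmas you cite, so your ``run the argument'' step is not yet justified.
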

As soon as the lemma is proven, we pass to these forward convex combinations and use that $\wt{d}_{\bbs}(\wt{V}^n,\wt{V})\to 0$ as $n\to\infty$.
By concavity of $\Pi$, the properties of $(\vp^{0,n},\vp^n)_{n\in\bbn}$ remain valid.
\begin{proof}[Proof of Lemma~\ref{2.6.2023.2}]
By the time change~$\tau'$ from Remark~\ref{29.6.2023.1}, that pauses the movement at the stopping times~$(\tau_1^i)_{\{X_{\tau_1^i}=0\}}$, $i\in\bbn$, $\wt{\bbf}$-optional semimartingales~$Y^1,Y^2$ whose right jumps only take place at the times~$(\tau_1^i)_{\{X_{\tau_1^i}=0\}}$, $i\in\bbn$, can be transformed into the processes~$(Y^j)'_t:=Y_{\tau'_{t+}}$, $t\in[0,T]$, $j\in\{0,1\}$, that are c\`adl\`ag semimartingales under the right-continuous filtration~$\mathcal{F}'_t:=\bigcap_{s>t}\wt{\mathcal{F}}_{\tau'_s}$ for $t\in[0,T)$ and $\mathcal{F}'_T:=\wt{\mathcal{F}}_{\tau'_T}$. By construction of $\tau'$, it follows that $d_{\bbs}((Y^1)',(Y^2)')=\wt{d}_{\bbs}(Y^1,Y^2)$  (cf., e.g., the proof of \cite[Theorem~5.55]{he.wang.yan.1992}). This makes the results below for c\`adl\`ag semimartingales directly applicable to our setting.\\

Based on the analogon of Lemma~\ref{24.4.2023.1}, it is shown in Delbaen and Schachermayer~\cite[proofs of Lemmas~4.7 to 4.11]{delbaen1994general} that after passing to forward convex combinations, the convergence holds in the stronger \'Emery topology (when reading one must abstract from the fact that the semimartingales are stochastic integrals). The proofs are reformulated in Kabanov~\cite{kabanov.1997} in a more abstract setting. 
It seems to be easier to adapt the arguments in \cite{kabanov.1997} to our setting, which is what we want to do in the following. We define the following set of $\bbf$-optional semimartingales:
\beao
\mathcal{X}:=\{\wt{V}^\psi : \psi\in (\bP)^\Pi\ \mbox{such that\ } (\Pi(\psi),\psi)\ \mbox{is $M$-admissible}\}.
\eeao
This means that for $Y\in\mathcal{X}$, $Y_T$ is the gain of an $M$-admissible strategy in the original model if on $\{X_T>0\}$ terminal stock positions are evaluated by $(S,S')$. We list key properties of the set~$\mathcal{X}$:
\begin{itemize}
\item[(i)] By the estimate~$V^{S,S'}_T(\psi) \le V^{\rm cost}_T(\psi)$ and condition~(\ref{12.6.2023.1}),
the set~$\{Y_T : Y\in\mathcal{X}\}$ is $L^0$-bounded.
\item[(ii)] For any $Y^1,Y^2\in\mathcal{X}$, $\lambda\in[0,1]$, there exists $Y^3\in\mathcal{X}$ with $Y^3\ge \lambda Y^1 + (1-\lambda)Y^2$
(this holds since the wealth process is concave in the strategy). 
\item[(iii)] Let $Y^1,Y^2\in\mathcal{X}$ and $(Y^1)^r,(Y^2)^r$ defined as in Lemma~\ref{14.7.2024.1}. For any $[0,1]$-valued $\wt{\bbf}$-predictable processes~$H^1,H^2$ with $H^1 H^2=0$ and $G_i^1,G_i^2\in L^0(\mathcal{F}_{\tau_1^i};[0,1])$, $i\in\bbn$, with $G_i^1 G_i^2=0$, the process 
\beao
H^1\mal (Y^1)^r +\sum_{i=1}^\infty G_i^1 1_{\{X_{\tau_1^i}=0\}} 1_{\zu \tau_1^i,T\zu}\Delta^+Y^1_{\tau_1^i} 
+ H^2\mal (Y^2)^r +\sum_{i=1}^\infty G_i^2 1_{\{X_{\tau_1^i}=0\}} 1_{\zu \tau_1^i,T\zu}\Delta^+Y^2_{\tau_1^i}
\eeao
is the dormant wealth process of a self-financing strategy (not necessarily $M$-admissible).  
\item[(iv)] In the dormant market, one can consider the 
num\'eraire~$N_t:= (1+S_{\tau_t-}) 1_{D_t} + (1+S_{\tau_t}) 1_{\Omega\setminus D_t}$, $t\in[0,T]$,
that is the sum of $1$ and the time-changed semimartingale price system~$S$. For an $M$-admissible strategy~$(\psi^0,\psi)$ with $\psi\in (\bP)^\Pi$, we have that $\wt{V}^\psi_t/N_t\ge -M$ on $\{\tau_t<T\}$ for all $t\in[0,T)$ (for stopping at $\tau_t-$ cf. Lemma~\ref{29.1.2023.2} combined with ${\rm NA}^{nf}$).
We refer to the term of an ``allowable strategy'' introduced in Yan~\cite[Definition~2.4]{yan.1998}. 
\end{itemize}
With these four properties of $\mathcal{X}$, one obtains a $\wt{d}_{\bbs}$-Cauchy sequence along the lines of Kabanov~\cite[proofs of Lemmas~2.3 to 2.8 and Lemma~3.3]{kabanov.1997}. Let us only describe the minor adjustments that are needed. 
By considering the dormant market, one switches from one strategy to another strategy only at frictionless points. Here, one can express the difference of frictionless wealth as multiple of $N$ (cf. item~(iv)) and control it as in \cite{kabanov.1997}. 
Then, during an excursion of the spread away from zero, the number of stocks of a concatenated strategy coincides with some $\psi$, with $(\psi^0,\psi)$ admissible, and the position in the bank account coincides with $\psi^0$ shifted by the difference of frictionless wealth before the excursion. This means that the difference of frictionless wealth coming from past trades is invested in the bank account, which allows to estimate the constant~$M'$ with which the concatenated strategy is $M'$-admissible. On the technical level, the arguments in the proof of Lemma~\ref{24.4.2023.1} have to be repeated to concatenate strategies, using that $(\tau_1^i)_{\{X_{\tau_1^i}=0\}}$, $i\in\bbn$, are stopping times and the left limit process~$N_-$ is locally bounded. In that proof it can also be seen how the maximality of $(C^0,C)$ is used.

To prove the second assertion, we use the fact that the space of semimartingales is complete with regard to the \'Emery topology (see \'Emery~\cite[Theorem~1]{Emery} for c\`adl\`ag semimartingales). Along the lines of the proof of \cite[Theorem~1]{Emery}, one can show that there exists a limiting $\wt{\bbf}$-optional semimartingale~$\wt{V}$ that shares with $\wt{V}^\psi$ the property that right jumps only take place at
$(\tau_1^i)_{\{X_{\tau_1^i}=0\}}$, $i\in\bbn$. We stress that $\wt{V}$ does in general not lie in $\mathcal{X}$, which is generated by bounded strategies. 
\end{proof}
\begin{remark}
In a more recent article, Cuchiero and Teichmann~\cite[Theorem~3.3(ii)]{cuchiero.teichmann.2015} show that in the case of a frictionless market,
Lemma~\ref{2.6.2023.2} holds without passing to (further) forward convex combinations. We leave it as an open problem if their arguments can also be adapted to our setting. For the proof of Theorem~\ref{7.5.2022.2}, this question is not crucial, and the proofs of \cite{kabanov.1997} can 
be more easily adapted. 
\end{remark}

\begin{proof}[Proof of Theorem~\ref{7.5.2022.2}]
So far, we made well-known results on frictionless markets accessible to our model by considering a fictitious dormant market that ignores 
the problems that occur by positive bid-ask spreads. We now turn to the construction of the limiting strategy when the spread does not vanish.
The random vector $(C^0,C)$ and the approximating sequence~$(\vp^{0,n},\vp^n)_{n\in\bbn}$ are still from (\ref{25.4.2023.1}), and we have to show that $(C^0,C)$ is the terminal position of an $M$-admissible strategy. We can assume that $(\vp^{0,n},\vp^n)_{n\in\bbn}$ satisfies (\ref{31.12.2022.1}).\\

{\em Step 1:} By (\ref{12.6.2023.2}), the second part of the ${\rm NUPBR}^{ps}$ condition,  
we can assume w.l.o.g. that 
\beam\label{13.6.2023.1}
\sup_{t\in[0,T]}\sup_{n\in\bbn}V^{\rm cost}_t(\vp^n)<\infty. 
\eeam
If this does not already hold for the original sequence of cost value processes, we pass to forward convex combinations.
(\ref{13.6.2023.1}) allows us to define the finite 
process~$A_t:=\sup_{s\in[0,t]}\sup_{n\in\bbn}V^{\rm cost}_s(\vp^n)$ that dominates all cost processes. Putting this together with the $M$-admissibility of $\vp^n$, the later means that $\vp^{0,n} + M +(\vp^n+M)^+ \un{X}-(\vp^n+M)^- \ov{X} \ge 0$,  we can control the size of the strategies. Namely, we get
\beam\label{12.2.2023.1}
|\vp^n|(\ov{X}-\un{X}) \le A + M + M\ov{X},\quad \forall n\in\bbn.
\eeam
Estimate (\ref{12.2.2023.1}) is used for the case that positions are built up during a frictionless interval but the spread jumps away from zero.  For the case that the portfolio is rebalanced under a positive spread we need another estimate.
The finite process~$A$ is pre-locally bounded, i.e., there exists a sequence of stopping times~$(T_m)_{m\in\bbn}$ with $P(T_m=\infty)\to 1$ for $m\to\infty$ and $V^{\rm cost}(\vp^n) 1_{\auf 0,T_m\auf}\le m$ for all $n\in\bbn$. 
Let us show that 
\beam\label{9.9.2022.1}
\left(\vp^{0,n}+(\vp^n)^+ \ov{X}_- -(\vp^n)^-\un{X}_-\right)1_{\auf 0,T_m\zu}\le m,\quad \forall n,m\in\bbn.
\eeam 
Intuitively, this means that trade cannot increase the cost value of a portfolio.
We have to prove that $\vp^{0,n}_s+(\vp^n_s)^+ \ov{X}_s -(\vp^n_s)^-\un{X}_s\le m$ for all $s<t$ implies that 
$\vp^{0,n}_t+(\vp^n_t)^+ \ov{X}_{t-} -(\vp^n_t)^-\un{X}_{t-}\le m$.
On the set $\{\ov{X}_{t-}>\un{X}_{t-}\}$, $\vp^n_s$ converges to $\vp^n_{t-}$, and the variation of $\vp^n$ on $[s,t)$ vanishes
as $s\uparrow t$ by Proposition~\ref{10.12.2022}(a). 
This implies that
\beam\label{14.2.2023.1}
\vp^{0,n}_{t-}+(\vp^n_{t-})^+ \ov{X}_{t-} -(\vp^n_{t-})^-\un{X}_{t-} \le m.
\eeam
In addition, the LHS of (\ref{14.2.2023.1}) dominates $\vp^{0,n}_t+(\vp^n_t)^+ \ov{X}_{t-} -(\vp^n_t)^-\un{X}_{t-}$.
On $\{\ov{X}_{t-}=\un{X}_{t-}\}$ we can use that 
$V^{S,S'}_s(\vp^n) \le \vp^{0,n}_s+(\vp^n_s)^+ \ov{X}_s -(\vp^n_s)^-\un{X}_s \le m$ for all $s<t$ and  
$V^{S,S'}_s(\vp^n)\to V^{S,S'}_{t-}(\vp^n)=\vp^{0,n}_t+(\vp^n_t)^+ \ov{X}_{t-} -(\vp^n_t)^-\un{X}_{t-}$. 

We can and do choose the sequence~$(T_m)_{m\in\bbn}$ from above such that one has, in addition, that
\beao
\ov{X}_-1_{\auf 0_{\{T_m>0\}}\zu\cup\zu 0,T_m\zu}\le m,\quad \forall m\in\bbn.
\eeao
From (\ref{9.9.2022.1}) we subtract the inequality 
\beao
\vp^{0,n} + M +(\vp^n+M)^+ {\rm ess inf}_{\mathcal{F}_-}\un{X}-(\vp^n+M)^- {\rm ess sup}_{\mathcal{F}_-}\ov{X} \ge 0
\eeao
that holds by Lemma~\ref{29.1.2023.2} and obtain the estimate
\beao
\max\{(\vp^n)^+(\ov{X}_- -{\rm ess inf}_{\mathcal{F}_-}\un{X}), (\vp^n)^-({\rm ess sup}_{\mathcal{F}_-}\ov{X} -\un{X}_-)\} \le m + M + M m\quad\mbox{on}\ \zu 0,T_m\zu.
\eeao
We arrive at
\beam\label{29.1.2023.01}
& & |\vp^n| \wh{X} \le m + M + M m\quad\mbox{on}\ \zu 0,T_m\zu,\nonumber\\ 
& & \mbox{where\ }\wh{X}:= \min(\ov{X}_- -{\rm ess inf}_{\mathcal{F}_-}\un{X},{\rm ess sup}_{\mathcal{F}_-}\ov{X} -\un{X}_-).
\eeam
Estimate (\ref{29.1.2023.01}) is crucial to control the position in the risky asset during an excursion of the spread away from zero.\\

{\em Step 2:} Let $\tau:=\tau_1^i$ be the starting time of an excursion (cf. (\ref{24.3.2023.1})). The spead at $\tau$ can be zero or positive.
Let us show that the end of an excursion can be rewritten as 
\beam\label{14.5.2023.01}
\Gamma(\tau) & := & \inf\{t>\tau : \ov{X}_t=\un{X}_t\ \mbox{or}\ \ov{X}_{t-}=\un{X}_{t-}\}\\
& = & \inf\{t>\tau : \ov{X}_t=\un{X}_t\ \mbox{or}\ \ov{X}_{t-}=\un{X}_{t-}\ \mbox{or}\ \ov{X}_{t-}=({\rm ess inf}_{\mathcal{F}_-}\un{X})_t\ \mbox{or}\ ({\rm ess sup}_{\mathcal{F}_-}\ov{X})_t=\un{X}_{t-}\}.\nonumber
\eeam
Let $\tau_1$ be a predictable stopping time with $\ov{X}_{\tau_1-}=({\rm ess inf}_{\mathcal{F}_-}\un{X})_{\tau_1}$ on $\{\tau_1<\infty\}$.
Since $\un{X}_{\tau_1} \ge ({\rm ess inf}_{\mathcal{F}_-}\un{X})_{\tau_1}$, 
the ${\rm NA}^{nf}$ condition implies that $\un{X}_{\tau_1} = \ov{X}_{\tau_1-}$. This means that a long stock position built up at time $\tau_1-$ can be liquidated for sure at time $\tau_1$. Consequently, we must have that $\ov{X}_{\tau_1}=\un{X}_{\tau_1}$ on $\{\tau_1<\infty\}$ since otherwise the sequence~$\psi^n:=n1_{\auf\tau_1\zu}$, $n\in\bbn$, would violate the ${\rm NUPBR}^{ps}$ condition. This means that at $t=\tau_1$ the first condition $\ov{X}_t=\un{X}_t$ is satisfied as well. 

To complete the proof, we define the debut~$\tau_2:=\inf\{t>\tau : \ov{X}_{t-}=({\rm ess inf}_{\mathcal{F}_-}\un{X})_t\}$ that is a 
(not necessarily predictable) stopping time
with $\auf (\tau_2)_{\{\ov{X}_{\tau_2-}=({\rm ess inf}_{\mathcal{F}_-}\un{X})_{\tau_2}\}}\zu\in\mathcal{P}$. We have to show that $P(\tau_2\ge \Gamma(\tau))=1$. Assume by contradiction that there exists an $\eps>0$ such that $P(\tau_2+\eps<\Gamma(\tau))>0$.
By a section theorem for predictable sets (see, e.g., \cite[Theorem~4.8]{he.wang.yan.1992}) applied to the predictable set~$\{\ov{X}_-={\rm ess inf}_{\mathcal{F}_-}\un{X}\}\cap\auf\tau_2,(\tau_2+\eps)\wedge\Gamma(\tau)\zu$, there exists a predictable stopping time~$\tau_3$ 
with $P(\tau_3<\infty, \tau_2+\eps<\Gamma(\tau))>0$, $\tau_2\le\tau_3\le \tau_2+\eps$ and
$\ov{X}_{\tau_3-}=({\rm ess inf}_{\mathcal{F}_-}\un{X})_{\tau_3}$ on $\{\tau_3<\infty, \tau_2+\eps<\Gamma(\tau)\}$. 
Above, we have shown that this implies that
$\ov{X}_{\tau_3} = \un{X}_{\tau_3}$ on $\{\tau_3<\infty, \tau_2+\eps<\Gamma(\tau)\}$ -- a contradiction to the definition of $\Gamma(\tau)$.
By exactly the same arguments we get rid of the condition $({\rm ess sup}_{\mathcal{F}_-}\ov{X})_t=\un{X}_{t-}$ in (\ref{14.5.2023.01}).\\

In the following, we construct a double sequence of stopping times~$(\tau^{1,N},\tau^{2,N})_{N\in\bbn}$ with which one can exhaust the excursion while keeping the spread away from zero. 
We set $\tau^{1,N}:=(\tau+ 1/N 1_{\{\ov{X}_\tau=\un{X}_\tau\}})\wedge \sigma$. Since the stopping time $(\Gamma(\tau))_{\{\ov{X}_{\Gamma(\tau)-} = \un{X}_{\Gamma(\tau)-}\}}$  
is predictable, it possesses an announcing sequence.
Thus, there exists a sequence~$(\tau^{2,N})_{N\in\bbn}$ with $\wt{\sigma}\le \tau^{2,N}\le \Gamma(\tau)$,  
$\tau^{2,N}<\Gamma(\tau)$ on $\{\ov{X}_{\Gamma(\tau)-} = \un{X}_{\Gamma(\tau)-}\}$, and $\tau^{2,N}\to\Gamma(\tau)$ a.s. as $N\to\infty$, where $\wt{\sigma}$ comes from Assumption~\ref{11.5.2022.1}. 
For fixed $N\in\bbn$, the event~$\{\tau^{2,N}<\tau^{1,N}\}$ can have positive probability, but we have
\beam\label{26.2.2023.1}
\zu \tau,\Gamma(\tau)\zu\setminus\auf (\Gamma(\tau))_{\{\ov{X}_{\Gamma(\tau)-}=\un{X}_{\Gamma(\tau)-}\}}\zu
= \cup_{N\in\bbn} \zu \tau^{1,N},\tau^{2,N}\zu.
\eeam
 For all $N\in\bbn$ we define 
\beam\label{26.2.2023.2}
\eta^N & := & \inf_{t\in [\tau^{1,N}_{\{\ov{X}_\tau=\un{X}_\tau\}}]\cup(\tau^{1,N},\tau^{2,N})\cup [\tau^{2,N}_{\{\tau^{2,N}<\Gamma(\tau)\}}]}\min\left\{\ov{X}_{t-} -({\rm ess inf}_{\mathcal{F}_-}\un{X})_t,\right.\nonumber\\
& & \qquad\qquad\qquad\qquad\qquad\qquad\qquad\qquad\quad \left. ({\rm ess sup}_{\mathcal{F}_-}\ov{X})_t-\un{X}_{t-}, \ov{X}_{t-} - \un{X}_{t-}\right\}
\eeam
with the convention that $\inf\emptyset := \infty$. Let us show that 
\beam\label{21.5.2023.1}
\eta^N>0\quad\mbox{a.s.}
\eeam
The ${\rm NA}^{nf}$ condition and a section theorem for predictable sets (see, e.g., \cite[Theorem~4.8]{he.wang.yan.1992})
imply that $\eta^N\ge 0$ a.s. Now, fix some $N\in\bbn$ and define
\beao
B_n:=\left\{\inf_{t\in [\tau^{1,N}_{\{\ov{X}_\tau=\un{X}_\tau\}}]\cup(\tau^{1,N},\tau^{2,N}]}(\ov{X}_{t-} -({\rm ess inf}_{\mathcal{F}_-}\un{X})_t)\le 1/n\right\},\quad n\in\bbn.
\eeao
In contrast to (\ref{26.2.2023.2}), $t=\Gamma(\tau)$ is included in the infimum that makes $B_n$ predictable. 
Let $\eps>0$. Again by a section theorem for predictable sets, there exists a sequence of predictable stopping times~$(\sigma_n)_{n\in\bbn}$ such 
that $P(B_n\cap\{\sigma_n<\infty\})\ge P(B_n)-\eps 2^{-n}$, $({\rm ess inf}_{\mathcal{F}_-}\un{X})_{\sigma_n}\ge 
\ov{X}_{\sigma_n-} - 1/n$, $\tau^{1,N}\le \sigma_n\le \tau^{2,N}$ on $\{\sigma_n<\infty\}$, and $\tau^{1,N}<\sigma_n$ 
on $\{\ov{X}_\tau>\un{X}_\tau\}$. In addition, $\sigma_n$ can be chosen such that it does not exceed the debut of $B_n$ by more than $2^{-n}$.
It follows that $\un{X}_{\sigma_n}\ge \ov{X}_{\sigma_n-}-1/n$ on $\{\sigma_n<\infty\}$ which means that the
strategies~$\psi^n:= n1_{\auf\sigma_n\zu}$, $n\in\bbn$, are $1$-admissible. The ${\rm NUPBR}^{ps}$ condition implies that the sequence $(n(\ov{X}_{\sigma_n}-\ov{X}_{\sigma_n-})1_{\{\sigma_n<\infty\}})_{n\in\bbn}$ and thus $(n(\ov{X}_{\sigma_n}-\un{X}_{\sigma_n})1_{\{\sigma_n<\infty\}})_{n\in\bbn}$ is $L^0$-bounded. The latter implies that 
$\ov{X}_{\sigma_n}-\un{X}_{\sigma_n}$ converges to zero in probability on $B:=\cap_{n\in\bbn}(B_n\cap\{\sigma_n<\infty\})$  as $n\to\infty$. Consequently, there exists a (deterministic) subsequence~$(n_k)_{k\in\bbn}$ such that $\ov{X}_{\sigma_{n_k}}-\un{X}_{\sigma_{n_k}}\to 0$ on $B$ a.s. as $k\to\infty$.
First, we observe that on $\{\tau^{2,N}<\Gamma(\tau)\}$ the bid-ask spread is bounded away from zero and thus $B\subseteq\{\tau^{2,N}=\Gamma(\tau)\}$ a.s. On the other hand, on $\{\tau^{2,N}=\Gamma(\tau)\}$ we have that $\ov{X}_{\Gamma(\tau)-}>\un{X}_{\Gamma(\tau)-}$. Putting together, we obtain that $B\subseteq\{\sigma_{n_k}=\Gamma(\tau)\ \mbox{for all but finitely many\ }k\}$ a.s. Since the stopping times are close to the debuts of $B_{n_k}$ and $\eps>0$ was arbitrary, we arrive at (\ref{21.5.2023.1}) by symmetry.

We conclude this step with a remark.
Following Example~\ref{15.1.2023.3}, it can happen that $\ov{X}_{\Gamma(\tau)-}=({\rm ess inf}_{\mathcal{F}_-}\un{X})_{\Gamma(\tau)}$ but
$\ov{X}_{\Gamma(\tau)-}>\un{X}_{\Gamma(\tau)-}$. In this case the point~$\Gamma(\tau)$ is still considered to be part of the regime with friction.
However, on $\{\ov{X}_{\Gamma(\tau)-}=({\rm ess inf}_{\mathcal{F}_-}\un{X})_{\Gamma(\tau)}\}$ we must have anyway that 
$\ov{X}_{\Gamma(\tau)-}=\un{X}_{\Gamma(\tau)}=\ov{X}_{\Gamma(\tau)}$, and there are no investment opportunities between $\Gamma(\tau)-$ 
and $\Gamma(\tau)$.\\  
 
{\em Step 3a:} In the following, we consider an interval~$\mathcal{I}^c_i$, $i\in\bbn$, whose left 
endpoint~$(\tau_1^i)_{\{X_{\tau_1^i}=0\}}$ is the starting time of an excursion of the spread~$X$ away from zero but at which the spread is still zero. This is the most tricky case in the proof since we do not have an upper bound for the number of stocks of $M$-admissible strategies on $\{X_{\tau_1^i}=0\}$. Assumption~\ref{11.5.2022.1} is needed to show that with $n\to\infty$ the cost value processes~$V^{\rm cost}(\vp^n)$ cannot increase significantly ``closer and closer'' to $(\tau_1^i)_{\{X_{\tau_1^i}=0\}}$.

Let $\tau:=(\tau_1^i)_{\{X_{\tau_1^i}=0\}}$ be accompanied by the measures~$Q^A$, $A\in\mathcal{P}$, from Assumption~\ref{11.5.2022.1}.
In addition, we fix some $m\in\bbn$ and $\eps\in(0,1)$. By Proposition~\ref{7.6.2022}(b), there exists $\gamma\in(0,\eps)$ such that for every $A\in\mathcal{P}$ and every $Q^A$-supermartingale~$Y$ on $[0,T]$ with $Y_0=0$ and $Y\ge -m-M$, the following implication holds:
\beam\label{21.8.2022.2}
Q^A(\sup_{t\in[0,T]}|Y_t|>\gamma)\le\gamma\quad\implies\quad d_{\bbs}(Y,0)\le \eps^2,
\eeam
where $d_{\bbs}$ denotes the \'Emery distance under $P$ (the proposition is applied under the measures~$Q^A$,
but by (\ref{16.8.2022.1}), $d_{\bbs}$ is small if the \'Emery distance under $Q^A$ is small).
Next, there exists $\delta\in(0,\gamma/3)$ such that 
for every $A\in\mathcal{P}$ and every $Q^A$-supermartingale~$Y$ with $Y_0=0$ and $Y\ge -m-M$
\beam\label{21.8.2022.1}
P(Y_T<-3\delta)\le 3\delta\quad\implies\quad P(\sup_{t\in[0,T]}|Y_t|>\gamma)\vee Q^A(\sup_{t\in[0,T]}|Y_t|>\gamma)\le \gamma.
\eeam
Indeed, by (\ref{16.8.2022.2})/(\ref{16.8.2022.1}) we can switch between the measures $P$ and $Q^A$, and 
one has that $E_{Q^A}(Y^-_T)\le a + (m+M) Q^A(Y_T<-a)$ for all $a\in\bbr_+$. Since at every stopping time the expected loss of a supermartingale exceeds the expected gain, and the former is maximal at maturity, implication~(\ref{21.8.2022.1}) follows by considering the stopping times~$\inf\{t\ge 0 : Y_t>\gamma\}$ and $\inf\{t\ge 0 : Y_t<-\gamma\}$. 

By Lemma~\ref{24.4.2023.1} and $\wt{V}^n_\tau=V^{S,S'}_\tau(\vp^n)=V^{\rm cost}_\tau(\vp^n)$ on $\{\tau<\infty\}$ for all $n\in\bbn$, we have that  
$P(\tau<\infty,\ |V^{\rm cost}_\tau(\vp^{n_1}) - V^{\rm cost}_{\tau}(\vp^{n_2})|>\delta)\le \delta$ for all $n_1,n_2$ large enough.
Since the converging sequence~$(\vp^{0,n}_T,\vp^n_T)_{n\in\bbn}$ is maximal, there exists $n^\eps\in \bbn$ such that for all $n_1,n_2\ge n^\eps$ there exists 
{\em no} $M$-admissible $(\psi^0,\psi)$ with $(\psi^0_T,\psi_T)\ge (\vp^{0,n_1}_T\wedge\vp^{0,n_2}_T,\vp^{n_1}_T\wedge\vp^{n_2}_T)$ and $P(\psi^0_T\ge \vp^{0,n_1}_T+\delta)\ge\delta$ (cf. the end of the proof of Lemma~\ref{24.4.2023.1}).
By $|\vp^{n^\eps}|\le a_{n^\eps}$ (cf. (\ref{31.12.2022.1})), we find an $N^\eps\in\bbn$ such that $P(\sigma>\tau,\ \inf_{t\in[\tau,\tau^{1,N^\eps}]}V^{\rm liq}_t(\vp^{n^\eps}) - V^{\rm cost}_{\tau}(\vp^{n^\eps})<\delta)\le \delta$ and $P(\tau<\sigma\le \tau^{1,N^\eps})\le \eps$. Let us show that
\beam\label{6.5.2023.1}
P(\inf_{t\in[\tau,\tau^{1,N^\eps}]} (V^{\rm cost}_t(\vp^n) - V^{\rm liq}_t(\vp^{n^\eps}))<-\delta)\le \delta\quad\mbox{for all\ } n\ge n^\eps.
\eeam 
We suppose otherwise. Then, one can switch from $\vp^{n^\eps}$ to $\vp^n$ at a stopping time $\wt{\tau}_n$ with $P(\wt{\tau}_n<\infty)>\delta$ and 
$\tau\le \wt{\tau}_n\le\tau^{1,N^\eps}$, $V^{\rm cost}_{\wt{\tau}_n}(\vp^n)-V^{\rm liq}_{\wt{\tau}_n}(\vp^{n^\eps})<-\delta$ on $\{\wt{\tau}_n<\infty\}$ (such a stopping time exists by a section theorem for optional sets, see, e.g., \cite[Theorem~4.7]{he.wang.yan.1992}). This generates a superior strategy~$(\psi^0,\psi)$ from above that is a contradiction. Putting together we obtain that
\beam\label{20.5.2022.1}
P(\inf_{t\in[\tau,\tau^{1,N^\eps}]} (V^{\rm cost}_t(\vp^n) - V^{\rm cost}_{\tau}(\vp^n))<-3\delta)\le 3\delta\quad\mbox{for all\ } n\ge n^\eps.
\eeam 
But, (\ref {20.5.2022.1}) implies that
\beam\label{13.8.2022.1}
P(\inf_{t\in[\tau,\tau^{1,N^\eps}]}( (\vp^n)^+ 1_{\zu \tau,\tau^{1,N^\eps}\zu}\mal \ov{X}_t - (\vp^n)^- 1_{\zu\tau,\tau^{1,N^\eps}\zu}\mal \un{X}_t) <-3\delta) 
\le 3\delta.
\eeam
The processes~$1_{\{T_m>\tau\}}((\vp^n)^+ 1_{\zu\tau,\tau^{1,N^\eps}\zu}\mal\ov{X} - (\vp^n)^-1_{\zu \tau,\tau^{1,N^\eps}\zu}\mal \un{X})$,\ $n\in\bbn$, are bounded from below by $-m-M$.
By Assumption~\ref{11.5.2022.1}, the $n$-th process is a supermartingale with respect to the measure~$Q^{A_n}$ with $A_n:=\{\vp^n\ge 0\}$.  Thus, we can derive from (\ref{13.8.2022.1}) and (\ref{21.8.2022.1}) that
\beam\label{13.8.2022.2}
Q^{A_n}(1_{\{T_m>\tau\}}\sup_{t\in[\tau,\tau^{1,N^\eps}]}|(\vp^n)^+ 1_{\zu \tau,\tau^{1,N^\eps}\zu}\mal \ov{X}_t - (\vp^n)^- 1_{\zu\tau,\tau^{1,N^\eps}\zu}\mal \un{X}_t| >\gamma)\le \gamma.
\eeam
From (\ref{21.8.2022.2}), it follows that
\beam\label{21.8.2022.3}
d_{\bbs}(1_{\{T_m>\tau\}}((\vp^n)^+1_{\zu\tau,\tau^{1,N^\eps}\zu}\mal \ov{X} - 
(\vp^n)^-1_{\zu\tau,\tau^{1,N^\eps}\zu}\mal \un{X}), 0)\le \eps^2\quad\mbox{for all\ }n\ge n^\eps
\eeam
and $d_{\bbs}$ taken with respect to $P$. By $P(\tau<\sigma\le\tau^{1,N^\eps})\le\eps$, $(\ov{X},\un{X})$ coincides with $(S,S')$ on $\auf\tau,\tau^{1,N^\eps}\zu$ with high probability
(cf. Note~\ref{5.5.2023.1}). Together with (\ref{20.5.2022.1}), $3\delta\le\eps$, and (\ref{21.8.2022.3}), we conclude that
\beam\label{9.6.2022.1}
P(1_{\{T_m>\tau\}}\sup_{t\in[\tau,\tau^{1,N^\eps}]}|V^{S,S'}_t(\ph^n)-V^{S,S'}_\tau(\ph^n)|>\eps) \le 3\eps\quad\mbox{for all\ }n\ge n^\eps.
\eeam

{\em Step 3b:} Let $\Gamma(\tau)$ be the end time of the excursion. The numbers $\eps$ and $n^\eps$ are still given by Step~3a 
Analogously to $\tau^{1,N^\eps}$, again using (\ref{31.12.2022.1}), we find an $N'$ such that $P(\tau^{2,N'}<\Gamma(\tau),\ \sup_{t\in[\tau^{2,N'},\Gamma(\tau))}V^{\rm cost}_t(\vp^{n^\eps}) - V^{\rm liq}_{\Gamma(\tau)}(\vp^{n^\eps})1_{\{X_{\Gamma(\tau)-}>0\}} -
V^{\rm liq}_{\Gamma(\tau)-}(\vp^{n^\eps})1_{\{X_{\Gamma(\tau)-}=0\}}>\delta)\le \delta$. 
Since Step~3a is a fortiori true if we reduce $\tau^{1,N}$, we can assume that $N'=N^\eps$.
By similar arguments as for (\ref{6.5.2023.1}), we get
\beam\label{7.5.2023.01}
P(\sup_{t\in[\tau^{2,N^\eps},\Gamma(\tau))}(V^{\rm liq}_t(\vp^n) - V^{\rm cost}_t(\vp^{n^\eps})) > \delta)\le \delta\quad\mbox{for all\ } n\ge n^\eps.
\eeam
If (\ref{7.5.2023.01}) did not hold, then one could switch from $\vp^n$ to $\vp^{n^\eps}$ and improve the terminal position. 
Economically, this means that it could be anticipated if the sequence of strategies performed too bad at the foreseeable end of the excursion, and one would switch to $\vp^{n^\eps}$ before that happens. By contrast, at the beginning of the excursion one would switch from $\vp^{n^\eps}$ to $\vp^n$ {\em after} a bad performance 
of $\vp^n$, cf. (\ref{6.5.2023.1}). Consequently, towards the end of the excursion, the liquidation values of the sequence of strategies can be controlled, instead of the cost values  
as at the beginning of the excursion. Putting together, we obtain
\beam\label{5.8.2022.1}
P(V^{\rm liq}_{\Gamma(\tau)}(\vp^n)1_{\{X_{\Gamma(\tau)-}>0\}} + 
V^{\rm liq}_{\Gamma(\tau)-}(\vp^n)1_{\{X_{\Gamma(\tau)-}=0\}}- \sup_{t\in[\tau^{2,N^\eps},\Gamma(\tau))}V^{\rm liq}_t(\vp^n) < - 3\delta)\le 3\delta
\eeam 
for all $n\ge n^\eps$. 
Since the cost term~$C$ is nondecreasing, (\ref{5.8.2022.1}) implies that
\beam\label{27.8.2022.1}
& & P((\vp^n)^+ 1_{\zu\tau^{2,N^\eps},\Gamma(\tau)\zu\setminus\auf(\Gamma(\tau))_{\{X_{\Gamma(\tau)-}=0\}}\zu}\mal \un{X}_T\nonumber\\ 
& & \quad - (\vp^n)^- 1_{\zu\tau^{2,N^\eps},\Gamma(\tau)\zu\setminus\auf (\Gamma(\tau))_{\{X_{\Gamma(\tau)-}=0\}}\zu}\mal \ov{X}_T 
< - 3\delta)\le 3\delta\qquad\mbox{for all\ }n\ge n^\eps.
\eeam
The processes
\beao
1_{\{T_m>\tau^{2,N^\eps}\}}((\vp^n)^+ 1_{\zu\tau^{2,N^\eps},\Gamma(\tau)\zu\setminus\auf(\Gamma(\tau))_{\{X_{\Gamma(\tau)-}=0\}}\zu}\mal \un{X} - (\vp^n)^- 1_{\zu\tau^{2,N^\eps},\Gamma(\tau)\zu\setminus\auf (\Gamma(\tau))_{\{X_{\Gamma(\tau)-}=0\}}\zu}\mal \ov{X}),
\eeao
$n\in\bbn$, are bounded from below by $-m-M$,
and the $n$-th process is a supermartingale with respect to the measure~$\wt{Q}^{A_n}$ with $A_n:=\{\vp^n\ge 0\}$.  Thus, we can derive from (\ref{27.8.2022.1}) and (\ref{21.8.2022.1}) that
\beam\label{4.8.2022.2}
& & \wt{Q}^{A_n}(1_{\{T_m>\tau^{2,N^\eps}\}}\sup_{t\in[\tau^{2,N^\eps},\Gamma(\tau)]} 
|(\vp^n)^+ 1_{\zu\tau^{2,N^\eps},\Gamma(\tau)\zu\setminus\auf(\Gamma(\tau))_{\{X_{\Gamma(\tau)-}=0\}}\zu}\mal \un{X}_t \nonumber\\
& & \qquad\qquad\qquad -(\vp^n)^- 1_{\zu\tau^{2,N^\eps},\Gamma(\tau)\zu\setminus\auf(\Gamma(\tau))_{\{X_{\Gamma(\tau)-}=0\}}\zu}\mal \ov{X}_t| 
>  \gamma)\le \gamma\quad\mbox{for all\ } n\ge n^\eps.
\eeam
From  (\ref{4.8.2022.2}) and (\ref{21.8.2022.2}), it follows that           
\beam\label{28.8.2022.1}
& & d_{\bbs}(1_{\{T_m>\tau^{2,N^\eps}\}}((\vp^n)^+1_{\zu\tau^{2,N^\eps},\Gamma(\tau)\zu\setminus\auf(\Gamma(\tau))_{\{X_{\Gamma(\tau)-}=0\}}\zu}\mal \un{X} - \nonumber\\
& & \qquad\qquad\qquad\quad (\vp^n)^-1_{\zu\tau^{2,N^\eps},\Gamma(\tau)\zu\setminus\auf(\Gamma(\tau))_{\{X_{\Gamma(\tau)-}=0\}}\zu}\mal \ov{X}), 0)\le \eps^2\quad\mbox{for all\ } n\ge n^\eps
\eeam
and $d_{\bbs}$ taken with respect to $P$.
Since $\tau^{2,N^\eps}\ge\wt{\sigma}$, we have that $(\un{X},\ov{X})=(S,S')$ on \linebreak $\auf\tau^{2,N^\eps},\Gamma(\tau)\zu\setminus\auf (\Gamma(\tau))_{\{X_{\Gamma(\tau)-}=0\}}\zu$, cf. Note~\ref{5.5.2023.1}. Together with (\ref{5.8.2022.1}) and (\ref{28.8.2022.1}) we arrive at 
\beam\label{5.9.2022.1}
& & P(1_{\{T_m>\tau^{2,N^\eps}\}}\sup_{t\in[\tau^{2,N^\eps},\Gamma(\tau))} |V^{S,S'}_t(\vp^n) - V^{S,S'}_{\tau^{2,N^\eps}}(\vp^n)|
\vee (1_{\{ X_{\Gamma(\tau)-}>0\}}|V^{\rm liq}_{\Gamma(\tau)}(\vp^n) - V^{\rm liq}_{\tau^{2,N^\eps}}(\vp^n)|)\nonumber\\
& &  \qquad > \eps) \le  2\epsilon\quad\mbox{for all $n\ge n^\eps$.}
\eeam

{\em Step 4:} Now, we fix $m,N\in\bbn$ and show how a limiting strategy can be constructed on the stochastic interval~$\zu \tau^{1,N},\tau^{2,N}\wedge T_m\zu$ defined in (\ref{26.2.2023.1}).
By (\ref{29.1.2023.01})/(\ref{26.2.2023.2}), we have 
\beam\label{28.5.2022.3}
|\vp^n|1_{\auf\tau^{1,N}\wedge T_m,\tau^{2,N}\wedge T_m\zu} \le (m+M+Mm)/\eta^N=:Y_{m,N}\quad \forall n\in\bbn,
\eeam
and by (\ref{21.5.2023.1}), $Y_{m,N}$ is a finite random variable (not necessarily bounded). 
%
%
First we observe that by the semimartingale property of $S$ and $S'$ the set
\beam\label{28.5.2022.1}
\left\{\psi^+ 1_{\zu \tau^{1,N},\tau^{2,N}\zu}\mal S_T - \psi^- 1_{\zu\tau^{1,N},\tau^{2,N}\zu}\mal S'_T \ :\ \psi\ \mbox{is a predictable process with $|\psi|\le Y_{m,N}$}\right\}
\eeam
is $L^0$-bounded (namely, for a given probability of error, $Y_{m,N}$ can be estimated by a constant). This implies that the set
\beam\label{28.5.2022.2}
{\rm conv}(C^{S,S'}(\vp^n, [\tau^{1,N},\tau^{2,N}\wedge T_m]); n\in\bbn)\quad\mbox{is also $L^0$-bounded.}
\eeam
Note that since $L^0$ is not locally convex, the  $L^0$-boundedness of $(C^{S,S'}(\vp^n, [\tau^{1,N},\tau^{2,N}]))_{n\in\bbn}$ would be potentially weaker.
However, we can consider a convex combination of strategies $\vp^n,\vp^{n+1},\ldots,\vp^{n+k}$ but executing them through different trading accounts. This means that we do not benefit from the subadditivity of $C^{S,S'}$. Since the resulting strategies are still $M$-admissible,
%
%
(\ref{28.5.2022.2}) follows from (\ref{28.5.2022.3}) and (\ref{28.5.2022.1}).

The processes~$\un{X}$, $\ov{X}$, $S$, and $S'$ are c\`adl\`ag, and the paths of $\ov{X}-\un{X}$ are bounded away from zero on $[\tau^{1,N},\tau^{2,N})$. The latter holds since $\tau^{2,N}<\Gamma(\tau)$ on $\{\ov{X}_{\Gamma(\tau)-}=\un{X}_{\Gamma(\tau)-}\}$. Consequently, we have that
$\ov{X}-S\ge (\ov{X}_{\tau^{1,N}}-\un{X}_{\tau^{1,N}})/3>0$ or $S-\un{X}\ge (\ov{X}_{\tau^{1,N}}-\un{X}_{\tau^{1,N}})/3>0$ on an interval with positive random length, and after finitely many analogous estimates we arrive at $\tau^{2,n}$. The same holds for $\ov{X}-S'$ and $S'-\un{X}$.
By (\ref{28.5.2022.2}) and  \cite[Proposition~3.3]{kuehn.molitor.2022}, this implies that $\vp^n$, $n\in\bbn$, are processes of finite variation on $\auf\tau^{1,N},\tau^{2,N}\zu$. Furthermore, $|(\vp^n)^\uparrow_t - (\vp^n)^\uparrow_s  - ((\vp^n)^\downarrow_t - (\vp^n)^\downarrow_s)| \le 2Y_{m,N}$ for all 
$\tau^{1,N}\le s\le t\le\tau^{2,N}\wedge T_m$ and $n\in\bbn$ by (\ref{28.5.2022.3}), we obtain that
${\rm conv}((\vp^n)^\uparrow_{\tau^{2,N}\wedge T_m} - (\vp^n)^\uparrow_{\tau^{1,N}\wedge T_m}; n\in\bbn)$ and ${\rm conv}((\vp^n)^\downarrow_{\tau^{2,N}\wedge T_m} - (\vp^n)^\downarrow_{\tau^{1,N}\wedge T_m}; n\in\bbn)$ are $L^0$-bounded, too.

Now, we can proceed as in Schachermayer~\cite[proof of Theorem~3.4]{schacher.2014}. It is a stochastic version of Helly's classic theorem that shows the existence of a converging subsequence of monotone functions on the real line. 
We repeat only the results that are needed in the present paper.
%
%
On $\auf\tau^{1,N},\tau^{2,N}\wedge T_m\zu$ there exists a predictable process $\vp$ such that after passing to forward convex combinations, $\vp^n\to\vp$ up to evanescence.
If one applies the same construction 
for a larger pair~$(N,m)$ (and thus on a larger subinterval of the excursion), the strategy coincides with $\vp$ on the smaller subinterval
up to evanescence. By $P(T_m=\infty)\to 1$ as $m\to\infty$ and (\ref{26.2.2023.1}), we arrive at a predictable process~$\vp$ with
\beam\label{28.8.2022.2}
\vp^n\to \vp\quad\mbox{on}\ \zu \tau, \Gamma(\tau)\zu\setminus\auf(\Gamma(\tau))_{\{X_{\Gamma(\tau)-}=0\}}\zu\quad \mbox{up to evanescence},\quad n\to\infty,
\eeam
after passing to joint forward convex combinations using a diagonalization argument.\\

{\em Step 5:} In this step, we want to show that $\vp^+ 1_{\zu \tau, \Gamma(\tau)\zu\setminus\auf(\Gamma(\tau))_{\{X_{\Gamma(\tau)-}=0\}}\zu}\in L(S)$ and \linebreak $\vp^- 1_{\zu \tau, \Gamma(\tau)\zu\setminus\auf(\Gamma(\tau))_{\{X_{\Gamma(\tau)-}=0\}}\zu}\in L(S')$, where the semimartingales~$S$ and $S'$ are defined in Note~\ref{5.5.2023.1}. 
Let $\eps>0$. We choose $m\in\bbn$ such that $P(T_m<\infty)\le \eps^2$. The stopping times~$\tau^{1,N^\eps}$, $\tau^{2,N^\eps}$ are from Step~2.  They actually also depend on $m$. Beforehand, we observe that the $d_{\bbs}$-distance of two semimartingales is bounded from above by the probability  that their paths do not coincide. Then, by (\ref{21.8.2022.3}) and the triangle inequality of the metric~$d_{\bbs}$, we obtain
\beao
d_{\bbs}((\vp^{n_1})^+1_{\zu \tau,\tau^{1,N^\eps}\zu}\mal \ov{X}
- (\vp^{n_1})^-1_{\zu \tau,\tau^{1,N^\eps}\zu}\mal \un{X}, 
(\vp^{n_2})^+1_{\zu \tau,\tau^{1,N^\eps}\zu}\mal \ov{X}
- (\vp^{n_2})^-1_{\zu \tau,\tau^{1,N^\eps}\zu}\mal \un{X})
 \le 3\eps^2
\eeao
for all $n_1,n_2\ge n^\eps$. By (\ref{28.8.2022.1}) we have the analogue estimate at the end of the excursion. 
Since $\vp^n 1_{\auf \tau^{1,N^\eps}\wedge T_m,\tau^{2,N^\eps}\wedge T_m\zu} \le Y_{m,N^\eps}$ 
for all $n\in\bbn$, all strategies are bounded by $y\in\bbr_+$ on $\zu \tau^{1,N^\eps},\tau^{2,N^\eps}\zu$ outside the event
$\{T_m<\infty\}\cup\{Y_{m,N^\eps}>y\}$ that does not depend on $n$ and has smaller probability than $2\eps^2$ for $y$ large enough (with a 
corresponding bound of the effect on $d_{\bbs}$). Consequently, we can argue with 
the dominated convergence theorem for stochastic integrals (cf., e.g. \cite[Theorem~12.4.10]{cohen.elliott.2015}) and the pointwise convergence~(\ref{28.8.2022.2}) to deduce that
\beao
& & d_{\bbs}((\vp^{n_1})^+1_{\zu \tau^{1,N^\eps},\tau^{2,N^\eps}\zu}\mal S - (\vp^{n_1})^-1_{\zu \tau^{1,N^\eps},\tau^{2,N^\eps}\zu}\mal S',\\
& & \quad (\vp^{n_2})^+ 1_{\zu \tau^{1,N^\eps},\tau^{2,N^\eps}\zu}\mal S - (\vp^{n_2})^- 1_{\zu \tau^{1,N^\eps},\tau^{2,N^\eps}\zu}\mal S') \le 3\eps^2
\eeao
for $n_1,n_2$ large enough. Since $\eps>0$ was arbitrary and $(\vp^n)^+(\vp^n)^-=0$, we conclude that the sequences
\beao 
((\vp^n)^+ 1_{\zu \tau,\Gamma(\tau)\zu\setminus\auf(\Gamma(\tau))_{\{X_{\Gamma(\tau)-}=0\}}\zu}\mal S)_{n\in\bbn},\quad ((\vp^n)^- 1_{\zu \tau,\Gamma(\tau)\zu\setminus\auf(\Gamma(\tau))_{\{X_{\Gamma(\tau)-}=0\}}\zu}\mal S')_{n\in\bbn}
 \eeao
and a fortiori the sequences
\beao
(((\vp^n)^+\wedge \vp^+)1_{\zu \tau,\Gamma(\tau)\zu\setminus\auf(\Gamma(\tau))_{\{X_{\Gamma(\tau)-}=0\}}\zu}\mal S)_{n\in\bbn},\ 
(((\vp^n)^-\wedge \vp^-)1_{\zu \tau,\Gamma(\tau)\zu\setminus\auf(\Gamma(\tau))_{\{X_{\Gamma(\tau)-}=0\}}\zu}\mal S')_{n\in\bbn}
\eeao
are $d_{\bbs}$-Cauchy. Therefore, together with (\ref{28.8.2022.2}) and $(\vp^n)^+\wedge\vp^+\le \vp^+$,  $(\vp^n)^-\wedge\vp^-\le \vp^-$, we 
are in the position to apply Chou, Meyer, and Stricker~\cite{Chou} (see also \cite[Note~4.4]{kuehn.molitor.2022}) and arrive at
\beam\label{30.8.2022.1}
\vp^+ 1_{\zu\tau,\Gamma(\tau)\zu\setminus\auf(\Gamma(\tau))_{\{X_{\Gamma(\tau)-}=0\}}\zu}\in L(S)\quad\mbox{and}\quad
\vp^- 1_{\zu\tau,\Gamma(\tau)\zu\setminus\auf(\Gamma(\tau))_{\{X_{\Gamma(\tau)-}=0\}}\zu}\in L(S'). 
\eeam

{\em Step 6:} Let us show that $\vp 1_{\zu \tau,\Gamma(\tau)\zu\setminus\auf(\Gamma(\tau))_{\{X_{\Gamma(\tau)-}=0\}}\zu}\in L(\un{X},\ov{X})$
(this states that $\vp$, which is not yet globally defined, satisfies the ``local'' properties~(\ref{27.9.2022.1}), (\ref{27.9.2022.2}), and 
(\ref{23.8.2020.1}) of Definition~\ref{def:DefinitionL} on the interval~$\mathcal{I}^c_i$). As a candidate for the ``optimal'' sequence~$(\psi^N)_{N\in\bbn}\subseteq \bf{b}\mathcal{P}$ we take 
\beao
\psi^N:={\rm median}(-y_N,\vp,y_N) 1_{\zu\tau^{1,N},\tau^{2,N}\wedge T_{m_N}\zu}\in\bP,
\eeao
where $m_N\in\bbn$ is large enough such that 
$P(T_{m_N}<\infty)\le 1/N$ and $y_N\in\bbr_+$ is large enough such that $P(Y_{m_N,N}>y_N)\le 1/N$. 

Let $\eps>0$ and $m$ large enough such that $P(T_m<\infty)\le \eps$. By (\ref{21.8.2022.3}) and (\ref{9.6.2022.1}) we have 
$P(T_m=\infty,\ C^{S,S'}_{\tau^{1,N^\eps}}(\vp^n) - C^{S,S'}_{\tau^{1,N}}(\vp^n) > 2\eps) \le 4\eps$ for all $n\ge n_\eps$ and $N\ge N^\eps$. By the Fatou-type estimate in 
\cite[Proposition~3.11]{kuehn.molitor.2022}, the cost term of the limiting strategy cannot be higher in the sense that
\beam\label{29.5.2023.1}
P(T_m=\infty,\ Y_{m,N}\le y_N,\ C^{S,S'}_{\tau^{1,N^\eps}}(\vp) - C^{S,S'}_{\tau^{1,N}}(\vp) > 3\eps) \le 4\eps\quad\mbox{for all}\ N\ge N^\eps.
\eeam
By (\ref{30.8.2022.1}) and again by the dominated convergence theorem for stochastic integrals, we find $N'\ge N^\eps$ such that $d_{\bbs}(\vp^+1_{\zu \tau,\tau^{1,N'}\zu}\mal S - \vp^-1_{\zu \tau,\tau^{1,N^\eps}\zu}\mal S',0)\le \eps^2$.
Adding up the increments we get that
\beam\label{9.6.2022.2}
P(T_m=\infty,\ Y_{m,N}\le y_N,\ \sup_{t\in[\tau^{1,N},\tau^{1,N'}]}|V^{S,S'}_t(\ph)-V^{S,S'}_{\tau^{1,N}}(\ph)|>4\eps)\le 5\eps \quad\mbox{for all}\ N\ge N'.
\eeam 
By (\ref{28.8.2022.1}) and (\ref{5.9.2022.1}), the analogue estimate holds for the end of the excursion. Since $S=\ov{X}$, $S'=\un{X}$ on $\zu \tau,\sigma\zu$, the strategy~$\psi^N$ does not produce trading costs at time~$\tau^{1,N}$, and we have that $(V^{S,S'}(\psi^N))_{N\in\bbn}$ is a up-Cauchy sequence. Since trading gains and trading costs converge separately, condition~(\ref{27.9.2022.2}) follows from (\ref{27.9.2022.1}) by the arguments in the proof of Proposition~\ref{15.1.2023.4}.\\

Now let $(\wt{\psi}^N)_{N\in\bbn}\subseteq (\bP)^\Pi$ be a competing sequence with $(\wt{\psi}^N)^+\le \vp^+$, $(\wt{\psi}^N)^-\le \vp^-$, and 
$\wt{\psi}^N\to\vp$ pointwise. Since, by Step~5, $\vp^+$, $\vp^-$ are integrable with respect to the semimartingale price systems, it follows again from the dominated convergence theorem for stochastic integrals that $d_{\bbs}(
(\wt{\psi}^N)^+ \mal S - (\wt{\psi}^N)^- \mal S', \vp^+ \mal S - \vp^- \mal S')\to 0$ as $N\to \infty$ and thus
\beam\label{29.5.2023.2}
d_{\bbs}((\wt{\psi}^N)^+ \mal S - (\wt{\psi}^N)^- \mal S', (\psi^N)^+ \mal S - (\psi^N)^- \mal S')\to 0,\quad N\to\infty.
\eeam
On the other hand, again by \cite[Proposition~3.11]{kuehn.molitor.2022}, the cost term of the competing sequence can be estimated from below by  
\beam\label{29.5.2023.3}
\liminf_{N\to\infty} C^{S,S'}(\wt{\psi}^N) \ge C^{S,S'}(\psi^{N'}) - C^{S,S'}_{\tau^{1,N'}}(\psi^{N'})\quad \mbox{for every}\ N'\in\bbn.  
\eeam
By (\ref{29.5.2023.1}), $C^{S,S'}_{\tau^{1,N'}}(\psi^{N'})\to 0$ in probability as $N'\to\infty$. Thus, putting 
(\ref{29.5.2023.2}) and (\ref{29.5.2023.3}) together yields (\ref{23.8.2020.1}), and we arrive at
\beam\label{26.1.2023.1}
\vp 1_{\zu \tau,\Gamma(\tau)\zu\setminus\auf(\Gamma(\tau))_{\{X_{\Gamma(\tau)-}=0\}}\zu}\in L(\un{X},\ov{X})\quad\mbox{for}\ 
\tau=(\tau_1^i)_{\{X_{\tau_1^i}=0\}},\ i\in\bbn.
\eeam
We define the wealth process starting in zero at time $(\tau_1^i)_{\{X_{\tau_1^i}=0\}}$ as 
\beam\label{8.6.2023.1}
V^{1,i} := \uplim_{N\to\infty}V^{S,S'}(\psi^N)\quad\mbox{on}\ \mathcal{I}^c_i=\zu (\tau_1^i)_{\{X_{\tau_1^i}=0\}},\Gamma(\tau_1^i)\zu\setminus\auf (\Gamma(\tau_1^i))_{\{X_{\Gamma(\tau_1^i)-}=0\}}\zu,\quad i\in\bbn
\eeam
(it corresponds to $1_{\mathcal{I}^c_i}\mal V$, but the global wealth process~$V$ in Definition~\ref{def:DefinitionL} is not yet defined).\\ 

{\em Step 7:} By (\ref{12.2.2023.1}), we 
have that for fixed $i\in\bbn$, $(\vp^n_{\Lambda(\sigma_1^i)}1_{\{X_{\Lambda(\sigma_1^i)}>0\}})_{n\in\bbn}$ is $L^0$-bounded. After passing once 
again to forward convex combinations, it converges a.s. to some $\mathcal{F}_{\Lambda(\sigma_1^i)}$-measurable finite random variable~$\psi^i$. By a diagonalization argument, one finds a joint sequence for all $i\in\bbn$ (cf. the paragraph at the end of the proof). 
With this, one can argue as in Step~4, but starting directly at the beginning of the excursion. Step~3a is not needed, and the arguments in Steps~5 and 6 only become simpler. Consequently, we obtain a predictable process~$\vp$ that satisfies
\beam\label{30.5.2023.3}
(\vp-\psi^i) 1_{\zu \tau,\Gamma(\tau)\zu\setminus\auf(\Gamma(\tau))_{\{X_{\Gamma(\tau)-}=0\}}\zu}\in L(\un{X},\ov{X})\quad\mbox{where\ }\tau
:=(\Lambda(\sigma_1^i))_{\{X_{\Lambda(\sigma_1^i)}>0\}},\ i\in\bbn.  
\eeam
On $\zu \tau,\Gamma(\tau)\zu\setminus\auf(\Gamma(\tau))_{\{X_{\Gamma(\tau)-}=0\}}\zu$ the limiting wealth process~$V^{2,i}$ is defined as in (\ref{8.6.2023.1}). Observe that the convergence of the strategies at the starting time of the excursion on $\{X_\tau>0\}$ follows from an analysis of the frictionless intervals whereas at the end the behavior can be explained by considering only the excursion itself.\\ 

{\em Step 8:} Let $\mathcal{I}^f_i:=\auf (\sigma^i_1)_{\{X_{\sigma^i_1-}=0\}}\zu \cup \zu \sigma_1^i,\Lambda(\sigma_1^i)\zu$ be a frictionless interval. It is $(\mathcal{F}_t)_{t\in[0,T]}$-predictable and thus a fortiori $\wt{\bbf}$-predictable, and we have that 
\beam\label{2.6.2023.1}
& & 1_{\mathcal{I}^f_i}\mal (\wt{V}^n)^r = (\vp^n)^+ 1_{\mathcal{I}^f_i}\mal S - (\vp^n)^- 1_{\mathcal{I}^f_i}\mal S'\nonumber\\
& &  + 1_{(\cdot\ge \Lambda(\sigma_1^i))}1_{\{X_{\Lambda(\sigma_1^i)}>0\}}(\Delta\wt{V}^n_{\Lambda(\sigma_1^i)} 
- (\vp^n_{\Lambda(\sigma_1^i)})^+\Delta S_{\Lambda(\sigma_1^i)} + (\vp^n_{\Lambda(\sigma_1^i)})^-\Delta S'_{\Lambda(\sigma_1^i)}), i\in\bbn,
\eeam
where $\wt{V}^n$ is defined in (\ref{11.8.2022.1}) and $(\wt{V}^n)^r$ is its ``c\`adl\`ag part'' in the sense of Lemma~\ref{14.7.2024.1}.
By Lemma~\ref{2.6.2023.2}, $(1_{\mathcal{I}^f_i}\mal (\wt{V}^n)^r)_{n\in\bbn}$ is $\wt{d}_\bbs$-Cauchy and $(\Delta\wt{V}^n_{\Lambda(\sigma_1^i)})_{n\in\bbn}$ is Cauchy with regard to the convergence in probability. Together with the convergence 
of $(\vp^n_{\Lambda(\sigma_1^i)}1_{\{X_{\Lambda(\sigma_1^i)}>0\}})_{n\in\bbn}$ established in Step~7,  
we conclude with (\ref{2.6.2023.1}) that $((\vp^n)^+1_{I^f_i}\mal S - (\vp^n)^-1_{I^f_i}\mal S' )_{n\in\bbn}$ is $\wt{d}_\bbs$-Cauchy
(and since $S$ and $S'$ are c\`adl\`ag $(\mathcal{F}_t)_{t\in[0,T]}$-semimartingales, the sequence is also $d_\bbs$-Cauchy). This means that although at 
the end of the frictionless interval the spread can be positive, we have established a corresponding Cauchy sequence in a purely frictionless market.
By M\'emin's theorem (see \cite[Theorem~V.4]{memin.1980}), there exists a predictable process $\vp$ such that 
$\vp^+ 1_{\mathcal{I}^f_i}\in L(S)$, $\vp^-1_{\mathcal{I}^f_i}\in L(S')$, and 
\beam\label{24.3.2023.2}
d_{\bbs}((\vp^n)^+1_{\mathcal{I}^f_i}\mal S - (\vp^n)^- 1_{\mathcal{I}^f_i}\mal S',
\vp^+1_{\mathcal{I}^f_i}\mal S - \vp^- 1_{\mathcal{I}^f_i}\mal S')\to 0\quad n\to\infty.
\eeam 
In fact, it only follows the existence of a limiting strategy in $L((S,S'))$, the set of two-dimensional predictable processes which are integrable with respect to $(S,S')$. But, by $(\vp^n)^+(\vp^n)^-=0$, the proof (that argues with $\mathcal{M}^2\otimes\mathcal{A}$) shows the stronger assertion above. 
We can and do assume that 
\beam\label{3.6.2023.1}
\vp=\lim_{n\to\infty}\vp^n\quad\mbox{on the set}\ \{\lim_{n\to\infty}\vp^n\ \mbox{exists in\ }\bbr\}\cap\mathcal{I}^f_i\in\mathcal{P}.
\eeam
To see this, define $\wt{\vp}:=\lim_{n\to\infty}\vp^n$ if this limit exists in $\bbr$ and $\wt{\vp}:=\vp$ otherwise. Then, using that $\{\wt{\vp}\not=\vp\}=\cup_{k\in\bbn}\{|\vp^n|\le k\ \forall n\in\bbn\}\cap\{\wt{\vp}\not=\vp\}$ it follows that
the process $1_{\{\wt{\vp}^+\not=\vp^+\}}\mal S - 1_{\{\wt{\vp}^-\not=\vp^-\}}\mal S'$ must be evanescent by (\ref{24.3.2023.2}) and the dominated convergence theorem for stochastic integrals.\\

{\em Step 9:} Putting together the partial constructions of the previous steps, we have a candidate for a global limiting strategy. Summing up:
\begin{align}
\vp:=\begin{cases}
\mbox{according to (\ref{24.3.2023.2})}  & \mbox{on}\ \{X_-=0\}\\
\mbox{according to (\ref{26.1.2023.1})}& \mbox{on}\ \zu(\tau_1^i)_{\{X_{\tau_1^i}=0\}},\Gamma(\tau_1^i)\zu\setminus\auf(\Gamma(\tau_1^i))_{\{X_{\Gamma(\tau_1^i)-}=0\}}\zu,\ i\in\bbn\\
\mbox{according to (\ref{30.5.2023.3})}& \mbox{on}\ \zu(\Lambda(\sigma_1^i))_{\{X_{\Lambda(\sigma_1^i)}>0\}},\Gamma(\Lambda(\tau_1^i))\zu\setminus\auf(\Gamma(\Lambda(\tau_1^i)))_{\{X_{\Gamma(\Lambda(\tau_1^i))-}=0\}}\zu,\ i\in\bbn.
 \end{cases}
\end{align}
We note that by (\ref{3.6.2023.1}), one has that $\psi^i = \vp_{\Lambda(\sigma_1^i)} 1_{\{X_{\Lambda(\tau_1^i)}>0\}}$ a.s. for all $i\in\bbn$ in (\ref{30.5.2023.3}).

Using the process $\wt{V}$ from Lemma~\ref{2.6.2023.2} that is the semimartingale limit in the coarser $\wt{\bbf}$-model and the processes from
(\ref{8.6.2023.1}), we define the $(\mathcal{F}_t)_{t\in[0,T]}$-adapted process $V$ by
\beao
& & V_t := \wt{V}_{\tau_1^i}+ V^{1,i}_t\quad\mbox{for\ }t\in(\tau_1^i,\Gamma(\tau_1^i)),\ X_{\tau_1^i}=0\\
& & V_t:= \wt{V}_{\Lambda(\sigma_1^i)-} + \vp_{\Lambda(\sigma_1^i)}^+\Delta S_{\Lambda(\sigma_1^i)} - \vp_{\Lambda(\sigma_1^i)}^-\Delta S'_{\Lambda(\sigma_1^i)} + V^{2,i}_t\ \mbox{for\ }t\in[\Lambda(\sigma_1^i),\Gamma(\Lambda(\sigma_1^i))), X_{\Lambda(\sigma_1^i)}>0\\
& & V_t:=\wt{V}_t\quad\mbox{otherwise}.
\eeao
By Steps~6 and 7 we have that $V$ satisfies (\ref{27.9.2022.1}), (\ref{27.9.2022.2}), and the corresponding ``optimal'' sequences satisfy (\ref{23.8.2020.1}). Let $\wt{V}^r$ be the ``c\`adl\`ag part'' of $\wt{V}$ as defined in Lemma~\ref{14.7.2024.1}. Since $1_{\mathcal{I}^c_i}\mal V = \Delta^+ \wt{V}_{\tau_1^i} 1_{\{X_{\tau_1^i}=0\}} 1_{\zu \tau_1^i,T\zu}$ and $1_{\mathcal{I}^{fc}_i}\mal V = 1_{\mathcal{I}^{fc}_i}\mal \wt{V}^r$ both on $\{X=0\}$, and $\mathcal{I}^{fc}_i$ is $\wt{\bbf}$-predictable, it follows from the continuity of the integral with respect to $\wt{V}$ (see Lemma~\ref{14.7.2024.1} and its proof) that $V$ satisfies (\ref{3.6.2023.2}). We arrive at $\vp\in L(\un{X},\ov{X})$.\\

{\em Step 10:} Finally, we observe that by (\ref{28.8.2022.2}) and $\vp^n_{\Lambda(\sigma_1^i)}\to \vp_{\Lambda(\sigma_1^i)}$\ a.s. on $\{X_{\Lambda(\sigma_1^i)}>0\}$, Lemma~\ref{28.2.2023.3} is applicable, and thus $(\Pi(\vp),\vp)$ is $M$-admissible. This completes the proof.\\

To construct $\vp$, it was necessary to pass to forward convex combinations of the sequence~$(\vp^n)_{n\in\bbn}$ introduced in (\ref{25.4.2023.1})
at several places in the proof. Let us look at the whole picture. 

Once we pass to forward convex combinations in Lemma~\ref{2.6.2023.2}.
Once again we do it for the cost value processes. Then, we pass to these
combinations for each excursion (Step~4) and each end time of a frictionless interval (Step~7). For the latter two, one again applies a diagonalization argument to obtain a joint sequence of forward convex combinations. 
\end{proof}

We reformulate the implication~$(i)\implies(ii)$ of \cite[Theorem~4.2]{delbaen1994general} to make it directly applicable to our two-dimensional setting: 
\begin{theorem}\label{9.6.2023.1}
If a cone $\mathcal{C}_0\subseteq L^0(\Omega,\mathcal{F},P;\bbr^2)$ is Fatou closed (in the sense of Theorem~\ref{7.5.2022.2}), then 
$\mathcal{C}:=\mathcal{C}_0\cap L^\infty(\Omega,\mathcal{F},P;\bbr^2)$ is $\sigma(L^\infty,L^1)$-closed with respect to the measure space $\wt{\Omega}:=\Omega\times \{0,1\}$, $\wt{\mathcal{F}}:=\mathcal{F}\otimes 2^{\{0,1\}}$,
$\mu:=P\otimes(\delta_0+\delta_1)$, where $\delta_0$, $\delta_1$ denote Dirac measures.
\end{theorem}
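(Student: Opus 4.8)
The plan is to reduce the two-dimensional closedness statement to the classical one-dimensional result of Delbaen and Schachermayer, Theorem~4.2 in \cite{delbaen1994general}, by viewing an $\bbr^2$-valued random variable on $(\Omega,\F,P)$ as an $\bbr$-valued random variable on the enlarged space $(\wt\Omega,\wt\F,\mu)$. Concretely, to a random vector $(C^0,C)\in L^0(\Omega,\F,P;\bbr^2)$ we associate the scalar random variable $\iota(C^0,C)\colon\wt\Omega\to\bbr$ defined by $\iota(C^0,C)(\omega,0):=C^0(\omega)$ and $\iota(C^0,C)(\omega,1):=C(\omega)$. The map $\iota$ is a linear bijection between $L^0(\Omega,\F,P;\bbr^2)$ and $L^0(\wt\Omega,\wt\F,\mu;\bbr)$; it maps $L^\infty(\Omega,\F,P;\bbr^2)$ onto $L^\infty(\wt\Omega,\wt\F,\mu)$; it maps the nonnegative orthant cone $L^0(\Omega,\F,P;\bbr^2_+)$ onto $L^0_+(\wt\Omega,\wt\F,\mu)$; and, crucially, $(C^{0,n},C^n)\to(C^0,C)$ $P$-a.s. together with a uniform lower bound $C^{0,n},C^n\ge -M$ corresponds exactly to $\iota(C^{0,n},C^n)\to\iota(C^0,C)$ $\mu$-a.s.\ with $\iota(C^{0,n},C^n)\ge -M$ $\mu$-a.s. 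Under $\iota$, the pointwise (componentwise) partial order on $\bbr^2$ becomes the $\mu$-a.s.\ order on scalar functions. Hence Fatou-closedness of $\mathcal C_0$ in the sense of Theorem~\ref{7.5.2022.2} translates verbatim into: $\iota(\mathcal C_0)$ is a cone in $L^0(\wt\Omega,\wt\F,\mu)$ that is Fatou-closed in the one-dimensional sense of \cite[Definition after Theorem~4.1]{delbaen1994general} (for every $M$, every sequence in $\iota(\mathcal C_0)$ bounded below by $-M$ and converging $\mu$-a.s.\ has its limit dominated by an element of $\iota(\mathcal C_0)$).

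First I would verify that $\iota(\mathcal C_0)$ is genuinely a cone: $\mathcal C_0$ is a cone by hypothesis (it is $\{(\vp^0_T,\vp_T)\}-L^0_+$, and the set of terminal values of $\mathcal A$ is stable under multiplication by positive scalars since $\lambda(\vp^0,\vp)\in\mathcal A$ whenever $(\vp^0,\vp)\in\mathcal A$, and $L^0_+$ is a cone), and $\iota$ is linear, so $\iota(\mathcal C_0)$ is a cone. Next I would check the two technical preconditions that Theorem~4.2 of \cite{delbaen1994general} imposes on the scalar cone: that it contains $-L^0_+(\wt\Omega,\wt\F,\mu)$ and that $\iota(\mathcal C_0)\cap L^0_+(\wt\Omega,\wt\F,\mu)=\{0\}$ or whatever normalization the cited theorem uses. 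The first holds because $\mathcal C_0\supseteq -L^0(\Omega,\F,P;\bbr^2_+)$ by construction and $\iota$ maps this onto $-L^0_+(\wt\Omega,\wt\F,\mu)$. For the no-arbitrage-type condition needed in \cite[Theorem~4.2]{delbaen1994general}, I would invoke ${\rm NA}^{nf}$ together with Lemma~\ref{7.5.2022.1}: if $\iota(C^0,C)\ge 0$ $\mu$-a.s.\ with $(C^0,C)\in\mathcal C_0$, then $C^0\ge 0$ and $C\ge 0$ $P$-a.s., and writing $(C^0,C)=(\vp^0_T,\vp_T)-(G^0,G)$ with $(\vp^0,\vp)\in\mathcal A$ and $(G^0,G)\in L^0_+$ gives $\vp^0_T,\vp_T\ge 0$, so by ${\rm NA}^{nf}$ we get $\vp^0_T=\vp_T=0$ a.s., hence $(C^0,C)=-(G^0,G)\le 0$, forcing $(C^0,C)=0$ (since also $\ge 0$) and therefore $\iota(\mathcal C_0)\cap L^0_+=\{0\}$.

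Now I would simply apply \cite[Theorem~4.2, $(i)\Rightarrow(ii)$]{delbaen1994general} on the probability space $(\wt\Omega,\wt\F,\wt\mu)$ where $\wt\mu:=\frac12\mu=P\otimes\frac12(\delta_0+\delta_1)$ is a probability measure (normalization by $\frac12$ changes neither the a.s.\ notions nor $L^\infty$ nor the weak-$*$ topology, so I would either rescale at the outset or note that the cited theorem only uses the measure through its null sets, $L^1$, $L^\infty$ pairing). The theorem yields that $\iota(\mathcal C_0)\cap L^\infty(\wt\Omega,\wt\F,\mu)$ is $\sigma(L^\infty,L^1)$-closed. Pulling back through $\iota$ — which is a homeomorphism for the respective weak-$*$ topologies, since $\iota$ and $\iota^{-1}$ are bounded linear maps intertwining the dual pairings $\langle(C^0,C),(Y^0,Y)\rangle=E[C^0Y^0+CY]$ on the one side and $\int_{\wt\Omega}\iota(C^0,C)\,\iota(Y^0,Y)\,d\mu$ on the other — gives that $\mathcal C=\mathcal C_0\cap L^\infty(\Omega,\F,P;\bbr^2)$ is $\sigma(L^\infty,L^1)$-closed, which is the assertion. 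The main obstacle, and the only place where care is genuinely required, is the bookkeeping in this last paragraph: one must confirm that the dual pairing on $L^\infty(\wt\Omega,\wt\F,\mu)$ versus $L^1(\wt\Omega,\wt\F,\mu)$ corresponds precisely, under $\iota$, to the pairing of $L^\infty(\Omega;\bbr^2)$ with $L^1(\Omega;\bbr^2)$ (identifying $L^1(\wt\Omega,\wt\F,\mu)$ with $L^1(\Omega;\bbr^2)$ via the same recipe, and noting the factor coming from $\delta_0+\delta_1$ versus a normalized measure), so that ``weak-$*$ closed'' transfers without distortion; everything else is routine identification of spaces. I expect the proof in the paper to say essentially this in a single short paragraph, treating the measure-space identification as self-evident.
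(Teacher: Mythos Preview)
Your approach is exactly what the paper intends: the paper does not give a separate proof but simply states the theorem as a reformulation of \cite[Theorem~4.2, $(i)\Rightarrow(ii)$]{delbaen1994general}, and your identification map~$\iota$ is precisely the device that makes this reformulation work. The verification that $\iota$ carries Fatou closedness, the cone property, the $L^\infty$-spaces, and the dual pairing over to the enlarged space is correct, and your remark about normalising $\mu$ to a probability measure is the right way to handle the only bookkeeping point.

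One correction, however: the no-arbitrage check $\iota(\mathcal{C}_0)\cap L^0_+=\{0\}$ is \emph{not} a hypothesis of \cite[Theorem~4.2]{delbaen1994general}. The implication $(i)\Rightarrow(ii)$ there is a pure Krein--\v{S}mulian argument (weak-$*$ closedness of a convex set follows from closedness of its intersections with norm balls, and on bounded sets weak-$*$ closedness for convex sets reduces to closedness under a.s.\ convergence). No-arbitrage enters only at the next step, the Kreps--Yan theorem (Theorem~\ref{10.6.2023.1} here). Your invocation of ${\rm NA}^{nf}$ and Lemma~\ref{7.5.2022.1} is therefore misplaced and, more importantly, inappropriate for the \emph{abstract} statement of Theorem~\ref{9.6.2023.1}, which is formulated for an arbitrary Fatou-closed cone~$\mathcal{C}_0$, not for the specific one arising from admissible strategies. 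The same applies, to a lesser extent, to your appeal to ``by construction'' when arguing $\mathcal{C}_0\supseteq -L^0(\bbr^2_+)$. What is genuinely needed (and implicitly assumed in both the paper and in \cite{delbaen1994general}) is convexity of $\mathcal{C}_0$; you should state this rather than the no-arbitrage condition.
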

\begin{theorem}[$2$-dimensional version of Kreps-Yan]\label{10.6.2023.1}
Let $\mathcal{C}$ be a $\sigma(L^\infty,L^1)$-closed convex cone in $L^\infty(\Omega,\mathcal{F},P;\bbr^2)$ containing $L^\infty(\Omega,\mathcal{F},P;\bbr^2_-)$ and such that 
$\mathcal{C}\cap L^\infty(\Omega,\mathcal{F},P;\bbr^2_+)=\{0\}$. Then, there exists $(G^0,G)\in L^1(\Omega,\mathcal{F},P;(\bbr_+\setminus\{0\})^2)$ with $E(G^0)=1$ such that
$E(C^0G^0 + C G)\le 0$ for all $(C^0,C)\in \mathcal{C}$. 
\end{theorem}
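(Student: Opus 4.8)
The plan is to run the classical Kreps--Yan separation-and-exhaustion argument in the duality pairing $\langle (C^0,C),(G^0,G)\rangle := E(C^0 G^0 + C G)$, under which $L^1(\Omega,\mathcal{F},P;\bbr^2)$ is the dual of $\bigl(L^\infty(\Omega,\mathcal{F},P;\bbr^2),\sigma(L^\infty,L^1)\bigr)$; equivalently, via the identification of $L^\infty(\Omega;\bbr^2)$ with $L^\infty(\wt{\Omega},\wt{\mathcal{F}},\mu)$ used in Theorem~\ref{9.6.2023.1} one may run everything on the finite measure space $(\wt{\Omega},\wt{\mathcal{F}},\mu)$ and quote the scalar Kreps--Yan theorem verbatim. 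I would carry it out directly in the vector form.

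First, fix $A\in\mathcal{F}$ with $P(A)>0$. Then $(1_A,0)\in L^\infty(\Omega;\bbr^2_+)\setminus\{0\}$, so by hypothesis $(1_A,0)\notin\mathcal{C}$. Since $\mathcal{C}$ is a $\sigma(L^\infty,L^1)$-closed convex set in the locally convex space $\bigl(L^\infty(\Omega;\bbr^2),\sigma(L^\infty,L^1)\bigr)$ and $\{(1_A,0)\}$ is compact convex, the Hahn--Banach separation theorem produces $(H^0_A,H_A)\in L^1(\Omega;\bbr^2)$ and $\alpha\in\bbr$ with $E(H^0_A 1_A) > \alpha \ge E(C^0 H^0_A + C H_A)$ for all $(C^0,C)\in\mathcal{C}$. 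Because $0\in\mathcal{C}$ we get $\alpha\ge 0$, and because $\mathcal{C}$ is a cone the functional must in fact be nonpositive on $\mathcal{C}$ (otherwise scaling some $g\in\mathcal{C}$ by $\lambda\to\infty$ violates the bound); in particular $E(H^0_A 1_A)>0$. Testing against $(-1_B,0),(0,-1_B)\in L^\infty(\Omega;\bbr^2_-)\subseteq\mathcal{C}$ for arbitrary $B\in\mathcal{F}$ forces $H^0_A\ge 0$ and $H_A\ge 0$ a.s. After multiplying by a positive constant we may assume $E(H^0_A)+E(H_A)=1$. The same argument applied to $(0,1_A)$ yields a pair detecting $A$ in the second coordinate.

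Now run the exhaustion. Let $\mathcal{H}$ be the set of $(H^0,H)\in L^1(\Omega;\bbr^2_+)$ with $E(H^0)+E(H)\le 1$ and $E(C^0 H^0 + C H)\le 0$ for all $(C^0,C)\in\mathcal{C}$; it is convex and stable under countable convex combinations since $\mathcal{C}$ is a cone and $\|\cdot\|_{L^1}$ is subadditive. Set $c:=\sup\{P(H^0>0)+P(H>0):(H^0,H)\in\mathcal{H}\}$, pick $(H^{0,n},H^n)\in\mathcal{H}$ with $P(H^{0,n}>0)+P(H^n>0)\to c$, and put $(G^0_\ast,G_\ast):=\sum_{n\ge 1}2^{-n}(H^{0,n},H^n)\in\mathcal{H}$, which realizes $P(G^0_\ast>0)+P(G_\ast>0)=c$. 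If $P(G^0_\ast=0)>0$, apply the previous paragraph to $A:=\{G^0_\ast=0\}$ and add a small multiple of the resulting pair to $(G^0_\ast,G_\ast)$: after renormalizing one stays in $\mathcal{H}$ but strictly increases $P(\{G^0>0\})$, contradicting maximality; hence $G^0_\ast>0$ a.s., and symmetrically $G_\ast>0$ a.s. Finally set $(G^0,G):=(G^0_\ast,G_\ast)/E(G^0_\ast)$, legitimate since $E(G^0_\ast)>0$; homogeneity of the defining inequality gives $E(C^0 G^0 + C G)\le 0$ for all $(C^0,C)\in\mathcal{C}$, and $(G^0,G)\in L^1(\Omega;(\bbr_+\setminus\{0\})^2)$ with $E(G^0)=1$.

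The only delicate point — the nearest thing to an obstacle — is making the separating functionals honestly nonpositive on all of $\mathcal{C}$ rather than merely bounded above, which is exactly where the cone property together with $0\in\mathcal{C}$ enters, alongside the observation that $L^\infty(\Omega;\bbr^2_-)\subseteq\mathcal{C}$ forces both coordinates of each separating density to be nonnegative. Everything else is the standard Halmos--Savage-type maximality argument, and the only structural requirement it uses is finiteness of the underlying measure (here $P$, or $\mu$ of total mass $2$ after the identification of Theorem~\ref{9.6.2023.1}).
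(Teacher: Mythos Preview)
Your proof is correct and follows the standard Kreps--Yan separation-and-exhaustion argument. The paper does not give a detailed proof either: it simply observes that identifying $L^\infty(\Omega;\bbr^2)$ with $L^\infty(\wt{\Omega},\wt{\mathcal{F}},\mu)$ via the doubled measure space of Theorem~\ref{9.6.2023.1} reduces the statement to the scalar Kreps--Yan theorem (the proof of ``$\Rightarrow$'' in \cite[Theorem~5.2.2]{delbaen.schachermayer.2006}), with the only cosmetic point being that one normalizes $G^0$ alone rather than the sum. You explicitly note this reduction and then unroll the scalar argument directly in the two-coordinate setting; the two presentations are equivalent.
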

Of course, Theorem~\ref{10.6.2023.1} is completely standard. One may again consider the measure space $\wt{\Omega}:=\Omega\times \{0,1\}$, $\wt{\mathcal{F}}:=\mathcal{F}\otimes 2^{\{0,1\}}$,
$\mu:=P\otimes(\delta_0+\delta_1)$ and apply the one-dimensional proof of ``$\Rightarrow$'' in \cite[Theorem~5.2.2]{delbaen.schachermayer.2006} (noting that only $G^0$ is normalized). 
%
%
\begin{proof}[Proof of Theorem~\ref{25.7.2022.1}]
(i) By Theorems~\ref{7.5.2022.2}, \ref{9.6.2023.1}, and \ref{10.6.2023.1}, there exists $(G^0,G)\in L^1(\Omega,\mathcal{F},P;(\bbr_+\setminus\{0\})^2)$ with $E(\vp^0_T G^0 + \vp_T G)\le 0$ for all 
$(\vp^0,\vp)\in \mathcal{A}$. For the convenience of the reader, we repeat and adjust the arguments in Schachermayer~\cite[Definition~4.1 and Proposition~4.2]{schacher.2014}. One defines c\`adl\`ag $P$-martingales by $Z^0_t = E(G^0 | \mathcal{F}_t)$, $Z_t = E(G | \mathcal{F}_t)$, and sets $dQ/dP:=Z^0_T>$ a.s., $S:=Z/Z^0$. By construction, $S$ is a (true) $Q$-martingale. Let us show that $\un{X}\le S\le \ov{X}$. Assume by contradiction that
there exists a stopping time~$\tau$ with $P(\tau<\infty)>0$ and $Z_\tau/Z^0_\tau>\ov{X}_\tau$ on $\{\tau<\infty\}$. We consider the strategy
$(\vp^0,\vp):=(-(\ov{X}_\tau\wedge 1), 1\wedge (1/\ov{X}_\tau))1_{\zu \tau, T\zu}$ that is bounded in both components and satisfies
\beao
E(\vp^0_T Z^0_T + \vp_T Z_T) = E((\vp^0_{\tau+} Z^0_\tau + \vp_{\tau+} Z_\tau)1_{\{\tau<\infty\}})>0,
\eeao
a contradiction. Analogously, for $Z_\tau/Z^0_\tau<\un{X}_\tau$, one considers the bounded strategy
$(\vp^0,\vp):=(\un{X}_\tau\wedge 1, -(1\wedge (1/\un{X}_\tau)))1_{\zu \tau, T\zu}$.\\

(ii) Let $Q\sim P$, $S$ be a $Q$-martingale, and $\un{S}=\ov{S}=S$. The martingale property implies that $\un{X}=\ov{X}=S$.
Thus, Assumptions~\ref{20.8.2022} and \ref{11.5.2022.1} are automatically satisfied. It remains to show that the model $(S,S)$ satisfies ${\rm NA}^{nf}$ and ${\rm NUPBR}^{ps}$. The $1$-admissibility condition boils down to $\vp^0+\vp S \ge -(1+S)$.
By Kallsen~\cite[Lemma~3.3 and Proposition~3.1]{kallsen.2004}, $V=\vp\mal S$ has to be a $Q$-supermartingale for any admissible strategy. This means  that for every $(\vp^0,\vp)\in\mathcal{A}$ either $P(\vp^0_T+\vp_T S_T=0)=1$ or $P(\vp^0_T+\vp_T S_T<0)>0$. Together with the condition $P(S_T>0)=1$, cf. (\ref{11.6.2023.1}), we have ${\rm NA}^{nf}$.

For $(\vp^0,\vp)\in\mathcal{A}^1_0$ and $a\in\bbr_+$, we define $\tau_a:=\inf\{t\ge 0 : V_t(\vp)>a \}$. Since $S$ is a $Q$-martingale and $V$ is a $Q$-supermartingale, one has
\beao
a Q(\tau_a<\infty)\le E_Q(V_{\tau_a\wedge T}(\vp)^+)\le E_Q(V_{\tau_a\wedge T}(\vp)^-) \le 1+E_Q(S_{\tau_a\wedge T})= 1+S_0.
\eeao
Since the RHS does not depend on $\vp$, the market satisfies (\ref{12.6.2023.1}). 
Since we have a frictionless market satisfying NFLVR with num\'eraire~$1+S$, we can apply \cite[Proposition 3.2 and Lemma~4.5]{delbaen1994general}   
and any ``maximal'' sequence of wealth processes is $d_{up}$-Cauchy. This implies condition~(\ref{12.6.2023.2}), and we arrive at ${\rm NUPBR}^{ps}$.
\end{proof}	

\begin{appendix}
\section{Appendix}\label{5.6.2023.3}

\begin{proposition}\label{23.7.2022.1}
There exists a c\`adl\`ag process~$\un{X}$ with 
$\un{X}_t={\rm ess inf}_{\mathcal{F}_t}\sup_{u\in[t,T]}\un{S}_u$\ a.s. for all $t\in[0,T]$, where
\beam\label{3.6.2023.3}
{\rm ess inf}_{\mathcal{F}_t}\sup_{u\in[t,T]}\un{S}_u := {\rm ess sup}\{Z:\Omega\to \ov{\bbr}:\ Z\ \mbox{is}\ \mathcal{F}_t\mbox{-measurable}\ \mbox{and}\ Z\le \sup_{u\in[t,T]}\un{S}_u\ \mbox{a.s.}\}.
\eeam
\end{proposition}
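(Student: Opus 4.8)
The plan is to define $\un X_t$ pointwise for each $t$ via the essential-supremum construction, to verify that this definition is consistent, and then to produce a c\`adl\`ag modification; the last step is where the whole difficulty lies.

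\emph{Pointwise definition.} For fixed $t\in[0,T]$ the family $\mathcal Z_t:=\{Z\in L^0(\Omega,\mathcal F_t,P;[-\infty,\infty]):Z\le\sup_{u\in[t,T]}\un S_u\text{ a.s.}\}$ is non-empty (the constant $0$ lies in it since $\un S\ge 0$) and stable under finite maxima, so by the standard essential-supremum lemma there is a countable subfamily $(Z^{t,k})_{k\in\bbn}\subseteq\mathcal Z_t$ whose increasing maxima converge a.s.\ to an $\mathcal F_t$-measurable random variable $\un X_t$, namely the largest element of $\mathcal Z_t$; this is precisely $(\ref{3.6.2023.3})$. Since $\un S$ is c\`adl\`ag on the compact interval $[0,T]$, the random variable $\Lambda:=\sup_{u\in[0,T]}\un S_u$ is a.s.\ finite, and hence $\un S_t\le\un X_t\le\Lambda<\infty$.

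\emph{Consistency and auxiliary identities.} Using the tower rule ${\rm ess\,inf}_{\mathcal F_s}\,{\rm ess\,inf}_{\mathcal F_t}={\rm ess\,inf}_{\mathcal F_s}$ for $s\le t$, the splitting $\sup_{u\in[s,T]}\un S_u=\big(\sup_{u\in[s,t)}\un S_u\big)\vee\big(\sup_{u\in[t,T]}\un S_u\big)$, the $\mathcal F_{t-}$- (hence $\mathcal F_t$-) measurability of $\sup_{u\in[s,t)}\un S_u$, and the identity ${\rm ess\,inf}_{\mathcal F_t}(a\vee W)=a\vee{\rm ess\,inf}_{\mathcal F_t}W$ for $\mathcal F_t$-measurable $a$, one derives
\[
\un X_s={\rm ess\,inf}_{\mathcal F_s}\Big(\sup_{u\in[s,t)}\un S_u\vee\un X_t\Big),\qquad s\le t,
\]
which on a finite grid is the backward recursion of \cite{sass.smaga.2014}; this re-proves the coincidence claimed after Definition~\ref{5.4.2023.1} and shows in particular that $\un X_t$ is consistently defined. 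The same two rules also yield $\un X_t=\un S_t\vee\hat N_t$ and $N_t=M_t\vee\hat N_t$, where $\hat N_t:={\rm ess\,inf}_{\mathcal F_t}\sup_{u\in(t,T]}\un S_u$, $M_t:=\sup_{u\in[0,t]}\un S_u$ is the c\`adl\`ag adapted running maximum and $N_t:={\rm ess\,inf}_{\mathcal F_t}\Lambda$; hence $\un S_t\le\un X_t\le N_t$ with $\un X_t=\hat N_t=N_t$ on $\{\hat N_t\ge M_t\}$.

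\emph{Regularization --- the crux.} The process $N$ is non-decreasing in $t$ (as $\Lambda$ is fixed and $\mathcal F_s\subseteq\mathcal F_t$) and right-continuous in probability, because for $t_n\downarrow t_0$ one has ${\rm ess\,inf}_{\mathcal F_{t_n}}\Lambda\downarrow{\rm ess\,inf}_{\bigcap_n\mathcal F_{t_n}}\Lambda={\rm ess\,inf}_{\mathcal F_{t_0}}\Lambda$ a.s., the first limit being the (elementary) continuity of the conditional essential infimum along a decreasing sequence of $\sigma$-algebras and the second equality the usual conditions. Thus $N$ has a non-decreasing c\`adl\`ag adapted modification $\wt N$, and on $\{\wt N_t>\wt M_t\}$ we have $\un X_t=\wt N_t$, so $\un X$ is automatically c\`adl\`ag there. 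The hard part is the set $\{\wt N=\wt M\}$, on which $\un X_t=\un S_t\vee\hat N_t$ with $\hat N_t\le M_t$ is not recovered from $\wt N,\wt M$, on which $\un X$ is in general neither a sub- nor a supermartingale, and on which the error $(\sup_{u\in[s,t)}\un S_u-\un X_t)^+$ in the backward recursion does not vanish under grid refinement. I would work along a countable dense set $D$ and bound the oscillation of $q\mapsto\un X_q$: an up-crossing of $\un X$ over a band $[a,b]$ at times $q_1<q_2<q_3$ in $D$ forces, through the backward recursion and a conditioning argument, the c\`adl\`ag path $\un S$ (at $\omega$ or at some point of the same $\mathcal F_{q_3}$-atom) to rise above $b$ after $q_1$ and then fall and stay below $b$ past $q_3$, i.e.\ to perform an up-crossing of $[a,b]$; since $\un S$ has a.s.\ finitely many up-crossings of each band on $[0,T]$, $q\mapsto\un X_q$ has a.s.\ finite left and right limits along $D$. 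Combined with $\limsup_{q\downarrow t}\un X_q\le\un X_t$ a.s.\ (immediate from $\un X_q\le{\rm ess\,inf}_{\mathcal F_q}\sup_{u\in[t,T]}\un S_u\downarrow\un X_t$ as $q\downarrow t$) and the matching lower bound $\liminf_{q\downarrow t}\un X_q\ge\un X_t$ a.s.\ (obtained together with the oscillation estimate), the process $t\mapsto\lim_{q\downarrow t,\,q\in D}\un X_q$ is a c\`adl\`ag adapted modification of $(\un X_t)_{t\in[0,T]}$, proving the statement; the process $\ov X$ in Definition~\ref{5.4.2023.1} is then obtained by applying everything to $-\ov S$, with $-\ov S_t$ in the role of $0$. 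The one genuine obstacle, to be executed carefully, is making this up-crossing comparison between $\un X$ and the path of $\un S$ precise on the degenerate set and uniformly in the conditioning.
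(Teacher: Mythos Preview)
Your proof is incomplete, and you yourself flag the gap: the up-crossing comparison on $\{\wt N=\wt M\}$ is only sketched, and the sketch is problematic. The value $\un X_t(\omega)$ depends on the behaviour of $\un S$ across an entire $\mathcal F_t$-measurable set, not on the single path $\un S_\cdot(\omega)$, so ``at $\omega$ or at some point of the same $\mathcal F_{q_3}$-atom'' does not lead to a pathwise bound on the number of up-crossings of $q\mapsto\un X_q(\omega)$; in a non-atomic filtration there is no atom to speak of, and even in the atomic case different up-crossings of $\un X(\omega)$ could be ``explained'' by different $\omega'$.

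What makes this frustrating is that you already have the ingredient that closes the argument cleanly and uniformly, with no case distinction. From your backward recursion $\un X_s={\rm ess\,inf}_{\mathcal F_s}\big(\sup_{u\in[s,t)}\un S_u\vee\un X_t\big)$ and the trivial bound $\un X_t\ge\un S_t$ one gets, for $s<t$,
\[
\un X_s\ \le\ \sup_{u\in[s,t)}\un S_u\vee\un X_t\ \le\ \un X_t+\Big(\sup_{u\in[s,t)}\un S_u-\un X_t\Big)^+\ \le\ \un X_t+\sup_{u\in[s,t]}\un S_u-\un S_t.
\]
This single inequality is exactly what the paper uses: it holds a.s.\ simultaneously for all pairs in a countable dense $D\ni T$, and on that full-measure event it bounds $\un X_{q_1}-\un X_{q_2}$ for $q_1<q_2$ by the oscillation of the c\`adl\`ag path $\un S$ on $[q_1,q_2]$. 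That immediately gives existence of right and left limits of $q\mapsto\un X_q$ along $D$ (for right limits, the error $\sup_{u\in[q_1,q_2]}\un S_u-\un S_{q_2}\to 0$ as $q_1,q_2\downarrow t$ by right-continuity; for left limits, it tends to $0$ as $q_1,q_2\uparrow t$ by existence of $\un S_{t-}$). The identification of the right limit with $\un X_t$ then uses your own observation ${\rm ess\,inf}_{\mathcal F_{q_n}}\!\downarrow{\rm ess\,inf}_{\cap_n\mathcal F_{q_n}}={\rm ess\,inf}_{\mathcal F_t}$. So the whole auxiliary apparatus $(M,N,\hat N)$, the case split, and the up-crossing heuristic can be dropped.
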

\begin{proof}
Let $D$ be a dense countable subset of $[0,T]$ with $T\in D$. Let $\wt{X}_t$,\ $t\in D$, be versions of ${\rm ess inf}_{\mathcal{F}_t}\sup_{u\in[t,T]}\un{S}_u$ (for existence and uniqueness up to null sets cf., e.g., \cite[Definition~1.12 and Theorem~1.13]{he.wang.yan.1992}) and
\beao
A:=\{\wt{X}_{t_1} \le \wt{X}_{t_2} + \sup_{u\in[t_1,t_2]}\un{S}_u - \un{S}_{t_2}\quad \mbox{for all\ }t_1,t_2\in D\ \mbox{with}\ t_1<t_2\}.
\eeao
For every $\mathcal{F}_{t_1}$-measurable random variable~$Z$ with $Z\le \sup_{u\in[t_1,T]}\un{S}_u$, the random variable
$Z':=Z + \un{S}_{t_2} - \sup_{u\in[t_1,t_2]}\un{S}_u$ is an $\mathcal{F}_{t_2}$-measurable lower bound of $\sup_{u\in[t_2,T]}\un{S}_u$. This implies that $P(A)=1$, and 
by the usual conditions we can define the adapted process~$\un{X}$ (up to evanescence) by
\beao
\un{X}_t(\omega):=\left\{
\begin{array}{ll} 
\lim_{q\in D, q>t, q\to t} \wt{X}_q(\omega) & \textrm{for}\ \omega\in A\ \mbox{and}\ t<T\\
\un{S}_t(\omega) & \textrm{otherwise}.
\end{array}\right.
\eeao
Namely, for $\omega\in A$, the above limit exists by the right-continuity of $\un{S}$. 
The existence of finite left limits carries over from $\un{S}$ to $\un{X}$, and the process~$\un{X}$ is c\`adl\`ag. It remains to show that for every $t\in[0,T)$, $\un{X}_t$ is a version of ${\rm ess inf}_{\mathcal{F}_t}\sup_{u\in[t,T]}\un{S}_u=:\wh{X}_t$. The estimate $P(\wh{X}_t\le \un{X}_t)=1$ follows from
the same arguments which lead to $P(A)=1$ combined with the right-continuity of $\un{S}$. For the opposite estimate, we follow an argument from Larsson~\cite[Lemma 2.8(iv)]{larsson}: for a sequence~$(q_n)_{n\in\bbn}\subseteq D$ with $q_n>t$ and $q_n\to t$,
one has $\inf_{n\in\bbn}{\rm ess inf}_{\mathcal{F}_{q_n}}\sup_{u\in[t,T]}\un{S}_u=\wh{X}_t$\ a.s., since $\cap_{n\in\bbn} \mathcal{F}_{q_n} = \mathcal{F}_t$. 
Together with $\wt{X}_{q_n}\le {\rm ess inf}_{\mathcal{F}_{q_n}}\sup_{u\in[t,T]}\un{S}_u$, we arrive at $P(\wh{X}_t = \un{X}_t)=1$.
\end{proof}

\begin{proposition}\label{13.1.2023.1}
Let $X$ be a c\`adl\`ag process. Then, there exists a unique (up to evanescence) predictable process~$Y$ with 
$Y_{\tau}={\rm ess inf}_{\mathcal{F}_{\tau-}}X_\tau$\ a.s. for all predictable stopping times~$\tau$ (where ${\rm ess inf}_{\mathcal{F}_{\tau-}}\ldots$ is defined analogously to (\ref{3.6.2023.3}) with the $\sigma$-algebra $\mathcal{F}_{\tau-}$).
\end{proposition}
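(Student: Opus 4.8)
The plan is to separate uniqueness from existence. Uniqueness is immediate from the predictable section theorem: if $Y^1$ and $Y^2$ are two predictable processes with the stated property, then $\{Y^1\neq Y^2\}$ is a predictable set containing the graph of no predictable stopping time $\tau$ with $P(\tau<\infty)>0$, hence is evanescent. So I would spend the real effort on existence. I would not try to mimic the construction of Proposition~\ref{23.7.2022.1}: taking essential infima along a countable dense set and passing to left limits does \emph{not} produce the right object here, precisely because, as Example~\ref{15.1.2023.3} shows, ${\rm ess\,inf}_{\mathcal{F}_{\tau-}}X_\tau$ cannot be recovered from left limits of $X$.

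Instead, I would realise $Y$ as a decreasing limit of predictable projections via a Laplace/entropic approximation of the conditional essential infimum. First I would record the elementary identity that for any sub-$\sigma$-algebra $\mathcal{G}\subseteq\mathcal{F}$ and any random variable $\zeta$ one has ${\rm ess\,inf}_{\mathcal{G}}\,\zeta=\lim_{n\to\infty}\bigl(-\tfrac1n\log E[e^{-n\zeta}\mid\mathcal{G}]\bigr)$ a.s., with the convergence monotone nonincreasing in $n$. Then, writing ${}^p(\cdot)$ for the predictable projection (which exists for the nonnegative, progressively measurable processes $e^{-nX}$, $n\in\bbn$, since the filtration satisfies the usual conditions; the unbounded case is handled by monotone convergence), I would set $Y^n:=-\tfrac1n\log{}^p(e^{-nX})$ after replacing ${}^p(e^{-nX})$ by a strictly positive version — legitimate because $E[e^{-nX_\tau}\mid\mathcal{F}_{\tau-}]>0$ a.s., so the predictable set $\{{}^p(e^{-nX})=0\}$ carries no graph of a predictable stopping time and is evanescent. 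Each $Y^n$ is then an $[-\infty,\infty)$-valued predictable process with $Y^n_\tau=-\tfrac1n\log E[e^{-nX_\tau}\mid\mathcal{F}_{\tau-}]$ a.s.\ on $\{\tau<\infty\}$ for every predictable stopping time $\tau$.

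The next step is monotonicity: writing $e^{-mX}=(e^{-nX})^{m/n}$ for $m\le n$ and applying conditional Jensen for the concave map $x\mapsto x^{m/n}$ at predictable stopping times, then upgrading via the predictable section theorem, I obtain ${}^p(e^{-mX})\le\bigl({}^p(e^{-nX})\bigr)^{m/n}$ up to evanescence, i.e.\ $Y^m\ge Y^n$. Hence $Y:=\inf_{n\in\bbn}Y^n$ is again predictable and $[-\infty,\infty)$-valued, and $Y=\lim_n Y^n$. Evaluating at an arbitrary predictable stopping time $\tau$ and invoking the displayed identity with $\zeta=X_\tau$ and $\mathcal{G}=\mathcal{F}_{\tau-}$ then gives $Y_\tau={\rm ess\,inf}_{\mathcal{F}_{\tau-}}X_\tau$ a.s., which is the claim; this also shows $Y$ dominates $X$ at predictable times and is dominated by any predictable process with that property.

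I expect the only genuinely delicate point to be the proof of the displayed identity, and within it the upper bound $\limsup_n\bigl(-\tfrac1n\log E[e^{-n\zeta}\mid\mathcal{G}]\bigr)\le{\rm ess\,inf}_{\mathcal{G}}\,\zeta$: here I would lower-bound $E[e^{-n\zeta}\mid\mathcal{G}]$ on the event $\{\zeta<({\rm ess\,inf}_{\mathcal{G}}\zeta+\varepsilon)\wedge K\}$, whose $\mathcal{G}$-conditional probability is strictly positive a.s.\ by the very definition of the conditional essential infimum, and then do the bookkeeping to let $\varepsilon\downarrow0$ and $K\uparrow\infty$ so as to cover the cases ${\rm ess\,inf}_{\mathcal{G}}\zeta=-\infty$ and $\zeta$ unbounded; the lower bound $-\tfrac1n\log E[e^{-n\zeta}\mid\mathcal G]\ge{\rm ess\,inf}_{\mathcal G}\zeta$ is trivial from $\zeta\ge{\rm ess\,inf}_{\mathcal G}\zeta$ a.s. A secondary, recurring technicality is that each inequality between predictable processes is first derived pointwise at predictable stopping times and must then be lifted to "up to evanescence" by a predictable section, but that is routine.
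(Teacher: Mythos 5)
Your uniqueness argument via the predictable section theorem is exactly the paper's. For existence, however, you take a genuinely different route. The paper exploits the jump structure of a c\`adl\`ag process: $\{\Delta X\neq 0\}$ is a thin set, its accessible part is exhausted by a sequence of predictable stopping times $(\sigma_k)_{k\in\bbn}$, and $Y$ is defined as $X_-$ off $\bigcup_k\auf\sigma_k\zu$ and as ${\rm ess\,inf}_{\mathcal{F}_{\sigma_k-}}X_{\sigma_k}$ on $\auf\sigma_k\zu$; the section theorem forces any predictable $\tau$ to a.s.\ avoid the totally inaccessible remainder of $\{\Delta X\neq 0\}$, where $X_{\tau-}=X_\tau$, and the ``local property'' of the conditional essential infimum does the rest. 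Your construction replaces this jump-set decomposition by the entropic identity ${\rm ess\,inf}_{\mathcal G}\zeta=\lim_n\bigl(-\tfrac1n\log E[e^{-n\zeta}\mid\mathcal G]\bigr)$ applied to predictable projections, obtaining predictability of $Y$ directly from that of ${}^p(e^{-nX})$. Both are valid; the paper's argument is lighter on analytic machinery (thin-set decomposition plus section theorem), whereas yours trades that for the Laplace limit but is more formula-driven and comes with built-in monotone convergence.

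One bookkeeping slip to repair: the event $A:=\{\zeta<({\rm ess\,inf}_{\mathcal G}\zeta+\eps)\wedge K\}$ you use for the upper bound is empty on $\{{\rm ess\,inf}_{\mathcal G}\zeta=-\infty\}$ and has zero $\mathcal G$-conditional probability on $\{{\rm ess\,inf}_{\mathcal G}\zeta\ge K\}$, so $P(A\mid\mathcal G)>0$ a.s.\ fails as stated. The truncation should be from below, not above: take $A_{K,\eps}:=\{\zeta<({\rm ess\,inf}_{\mathcal G}\zeta\vee(-K))+\eps\}$, which has strictly positive $\mathcal G$-conditional probability a.s.\ by the maximality defining the conditional essential infimum; then $E[e^{-n\zeta}\mid\mathcal G]\ge e^{-n(({\rm ess\,inf}_{\mathcal G}\zeta\vee(-K))+\eps)}P(A_{K,\eps}\mid\mathcal G)$, and letting $n\to\infty$, $\eps\downarrow0$, $K\uparrow\infty$ gives the desired $\limsup$ bound including the $-\infty$ case. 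The remaining pieces (conditional Jensen for monotonicity, evanescence of $\{{}^p(e^{-nX})=0\}$, evaluation at predictable stopping times) are fine; as a side remark, ``$Y$ dominates $X$ at predictable times'' should read ``$Y$ is dominated by $X$ at predictable times.''
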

\begin{proof}
The set $\{\Delta X\not= 0\}$ is thin, i.e., there exists a sequence of stopping times with $\{\Delta X\not= 0\} = \bigcup_{n\in\bbn}\auf T_n\zu$ (cf., e.g., \cite[Theorem~3.32]{he.wang.yan.1992}). 
For each $n\in\bbn$, let $(\sigma_{n,m})_{m\in\bbn}$ be a maximal sequence of predictable stopping times that access the accessible part of the graph of $T_n$ 
(in the sense of \cite[proof of Theorem~4.20]{he.wang.yan.1992}). 
This means that $\auf T_n\zu\setminus\bigcup_{m\in\bbn} \auf \sigma_{n,m}\zu$ cannot be overlapped by the graph of a predictable stopping time with
positive probability. We pass to a single sequence~$(\sigma_k)_{k\in\bbn}$ and obtain the same property for $\{\Delta X\not= 0\}\setminus\bigcup_{k\in\bbn}\auf \sigma_k\zu$. We can and do choose the sequence such that $P(\sigma_{k_1}=\sigma_{k_2}<\infty)=0$ for all $k_1\not=k_2$. Let us define the process
\beam\label{15.1.2023.2}
Y:=\left\{
\begin{array}{ll} 
X_- & \mbox{on}\ (\Omega\times[0,T])\setminus\bigcup_{k\in\bbn}\auf \sigma_k \zu\\
{\rm ess inf}_{\mathcal{F}_{\sigma_k-}}X_{\sigma_k} & \mbox{on}\ \auf \sigma_k\zu\ \mbox{for some\ }k\in\bbn
\end{array}\right.
\eeam
that is obviously predictable. Now, let $\tau$ be a predictable stopping time. We have that 
\beam\label{15.1.2023.1}
Y_\tau = X_{\tau-} 1_{\{\tau\not=\sigma_k\ \forall k\in\bbn\}} 
+ \sum_{k\in\bbn}{\rm ess inf}_{\mathcal{F}_{\sigma_k-}}X_{\sigma_k}1_{\{\tau=\sigma_k\}}\quad\mbox{a.s.}
\eeam 
In (\ref{15.1.2023.1}), $X_{\tau-}$ can be replaced by $X_\tau$ or ${\rm ess inf}_{\mathcal{F}_{\tau-}}X_\tau$ since the sequence~$(\sigma_k)_{k\in\bbn}$ is maximal which implies that $X$ does not jump on this set with positive probability. On $\{\tau=\sigma_k\}\in\mathcal{P}$ 
we have that ${\rm ess inf}_{\mathcal{F}_{\sigma_k-}}X_{\sigma_k}
= {\rm ess inf}_{\mathcal{F}_{\tau-}}X_\tau$ a.s. (we leave it to the reader to check this ``local property'' of ${\rm ess inf}$). 
Together, we obtain that $Y$ satisfies the properties of the proposition. Almost-sure-uniqueness along predictable stopping times follows in the same way. Then, an application of a section theorem for predictable sets (see, e.g., \cite[Theorem~4.8]{he.wang.yan.1992}) 
shows uniqueness of the predictable process up to evanescence. 
\end{proof}

\begin{proposition}\label{7.6.2022}
(a)\ Let $\eps_1,\eps_2,\eps_3>0$ and $Y$ be a supermartingale starting at zero with $-1\le Y \le \eps_1$
and $P(Y_T < -\eps_2)\le \eps_3$. Then, the Doob-Meyer decomposition $Y=M-A$ (i.e., $M$ is a martingale, and $A$ is a nondecreasing, predictable process with $M_0=A_0=0$) satisfies $E(A^2_T)\le (\eps_1+1)(\eps_2+\eps_3)$ and 
$E(M^2_T)\le \eps_1^2 + \eps_2 + \eps_2^2 + 2\eps_3 + 3\eps_1\eps_2 + 3\eps_1\eps_3$, i.e., the second moments are of order $\max(\eps_1,\eps_2,\eps_3)$.\\
(b) There exists a $C\in\bbr_+$ such that for all $\eps>0$ and all supermartingales $\wt{Y}$ starting at zero with $-1\le \wt{Y}$
and $P(\sup_{t\in[0,T]}|\wt{Y}_t| > \eps)\le \eps$, one has $d_{\bbs}(\wt{Y},0)\le C \sqrt{\eps}$.   
\end{proposition}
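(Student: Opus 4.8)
The plan is to combine the Doob--Meyer decomposition with the classical ``energy'' estimate for predictable increasing processes. Since the c\`adl\`ag supermartingale $Y$ is bounded, it is of class (D) and admits the decomposition $Y=M-A$ as stated. Taking expectations in $M_T=Y_T+A_T$ gives $E(A_T)=-E(Y_T)\le E(Y_T^-)$, and because $Y_T\ge-1$ one has $Y_T^-\le\eps_2$ on $\{Y_T\ge-\eps_2\}$ and $Y_T^-\le 1$ on the complement (of probability $\le\eps_3$), so $E(A_T)\le\eps_2+\eps_3$. For the second moment I would introduce the potential $R_t:=E(A_T-A_t\,|\,\F_t)=Y_t-E(Y_T\,|\,\F_t)\ge 0$ and the martingale $N_t:=E(A_T\,|\,\F_t)$, so $R=N-A$. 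Integration by parts for the finite-variation process $A$ gives $A_T^2=2\int_0^T A_{t-}\,dA_t+\sum_{t\le T}(\Delta A_t)^2$, and the dual predictable projection (using that $A$ is predictable, so $E(\int_0^T N_{t-}\,dA_t)=E(N_TA_T)=E(A_T^2)$) turns this into
\[
E(A_T^2)=2E\Big(\int_0^T R_{t-}\,dA_t\Big)-E\Big(\sum_{t\le T}(\Delta A_t)^2\Big)\le 2E\Big(\int_0^T R_{t-}\,dA_t\Big);
\]
integrability of $A_-$ is handled by localising along $\inf\{t:A_t\ge n\}$. Since $R_{t-}=Y_{t-}-E(Y_T\,|\,\F_{t-})\le\eps_1+E(Y_T^-\,|\,\F_{t-})$, a second use of the dual predictable projection rewrites $E(\int_0^T E(Y_T^-\,|\,\F_{t-})\,dA_t)$ as $E(Y_T^-A_T)$; bounding $Y_T^-$ as above (and using Cauchy--Schwarz on $E(A_T 1_{\{Y_T<-\eps_2\}})$) together with $E(A_T)\le\eps_2+\eps_3$ gives the bound of order $(\eps_1+1)(\eps_2+\eps_3)$ for $E(A_T^2)$. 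Finally $M_T=Y_T+A_T$ yields $E(M_T^2)=E(Y_T^2)+2E(Y_TA_T)+E(A_T^2)$, where $E(Y_T^2)\le\eps_1^2+\eps_2^2+\eps_3$ (split according to whether $Y_T\ge-\eps_2$), $E(Y_TA_T)\le\eps_1E(A_T)\le\eps_1(\eps_2+\eps_3)$, and the third term is the one just bounded; adding up yields the stated estimate for $E(M_T^2)$.

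\textbf{Part (b).} I may assume $\eps<1$, since otherwise $d_{\bbs}(\wt Y,0)\le 1\le\sqrt\eps$. Set $\sigma:=\inf\{t\in[0,T]:|\wt Y_t|>\eps\}$ (with $\inf\emptyset=\infty$); then $P(\sigma\le T)\le P(\sup_{t\in[0,T]}|\wt Y_t|>\eps)\le\eps$ and $|\wt Y_t|\le\eps$ for $t<\sigma$. The stopped process $\wt Y^\sigma$ is a supermartingale, hence so is $\wt Z:=\wt Y^\sigma\wedge\eps$ (minimum of a supermartingale and a constant), and $\wt Z$ satisfies $\wt Z_0=0$, $-1\le\wt Z\le\eps$, and $\wt Z_T\ge-\eps$ on $\{\sigma=\infty\}$, so $P(\wt Z_T<-\eps)\le\eps$. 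Applying part (a) to $\wt Z$ with $\eps_1=\eps_2=\eps_3=\eps$, its Doob--Meyer decomposition $\wt Z=\wt M-\wt A$ obeys $E(\wt A_T)=O(\eps)$ and $E(\wt M_T^2)=O(\eps)$. By the triangle inequality for $d_{\bbs}$ it suffices to bound $d_{\bbs}(\wt M,0)$ and $d_{\bbs}(\wt A,0)$ separately. Since $\wt A$ is increasing with $\wt A_0=0$, $|H\mal\wt A_t|\le\wt A_T$ whenever $\|H\|_\infty\le 1$, so $d_{\bbs}(\wt A,0)\le E(\wt A_T\wedge 1)\le E(\wt A_T)=O(\eps)$; and for any such $H$, $H\mal\wt M$ is a square-integrable martingale with $E((H\mal\wt M_T)^2)=E(\int_0^T H_t^2\,d[\wt M]_t)\le E([\wt M]_T)=E(\wt M_T^2)$, whence Doob's $L^2$-inequality gives $E(\sup_t|H\mal\wt M_t|)\le 2\sqrt{E(\wt M_T^2)}$ and therefore $d_{\bbs}(\wt M,0)=O(\sqrt\eps)$. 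Thus $d_{\bbs}(\wt Z,0)=O(\sqrt\eps)$.

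It remains to pass from $\wt Z$ back to $\wt Y$. Write $\wt Y-\wt Z=(\wt Y-\wt Y^\sigma)+(\wt Y^\sigma-\eps)^+$; both summands, as well as their stochastic integrals against any $H$ with $\|H\|_\infty\le 1$, vanish on $[0,\sigma]$ (before $\sigma$ one has $\wt Y^\sigma=\wt Y\le\eps$) and hence vanish identically on $\{\sigma=\infty\}$. Consequently $E(\sup_t|H\mal(\wt Y-\wt Y^\sigma)_t|\wedge 1)$ and $E(\sup_t|H\mal(\wt Y^\sigma-\eps)^+_t|\wedge 1)$ are each at most $P(\sigma\le T)\le\eps$, so $d_{\bbs}(\wt Y,\wt Z)\le 2\eps$. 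Combining with the previous paragraph via the triangle inequality, $d_{\bbs}(\wt Y,0)\le d_{\bbs}(\wt Y,\wt Z)+d_{\bbs}(\wt Z,0)=O(\sqrt\eps)$, which is the assertion with a universal constant $C$.

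I expect the main obstacle in (a) to be the clean derivation of the energy identity --- justifying the dual predictable projection identities and the localisation needed for $A_-$ --- and, if the precise constants are wanted, a somewhat tedious accounting of the elementary estimates; in (b) the only delicate point is the choice of the truncation $\wt Z=\wt Y^\sigma\wedge\eps$, which must be a supermartingale to which part (a) applies while the discarded pieces $\wt Y-\wt Y^\sigma$ and $(\wt Y^\sigma-\eps)^+$ live on the $\eps$-small set $\{\sigma\le T\}$.
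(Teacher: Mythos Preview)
Your proposal is correct and follows essentially the same strategy as the paper --- Meyer's energy estimate for $E(A_T^2)$ in part~(a), then in part~(b) truncating $\wt Y$ to obtain a bounded supermartingale, applying part~(a), and controlling the discarded piece via $P(\sigma\le T)\le\eps$. The differences are only in the execution. In part~(a) the paper first proves the bound in discrete time (where the energy identity is a two-line telescoping sum) and then passes to continuous time via the Beiglb\"ock--Schachermayer--Veliyev construction, whereas you work directly in continuous time via integration by parts and the predictable projection; both are standard, but note that your detour through $E(Y_T^-\mid\F_{t-})$ and Cauchy--Schwarz is unnecessary: since $Y_{t-}\le\eps_1$ and $E(Y_T\mid\F_{t-})\ge -1$, one has $R_{t-}\le\eps_1+1$ directly, giving $E(A_T^2)\le 2(\eps_1+1)E(A_T)$ without further work (the paper obtains the constant $(\eps_1+1)$ rather than $2(\eps_1+1)$ by a slightly sharper discrete telescoping, but the order is what matters). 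In part~(b) the paper pre-stops at $\xi:=\inf\{t:\wt Y_t>\eps\}$ and sets $Y_t:=\wt Y_{t}1_{\{t<\xi\}}+\wt Y_{\xi-}1_{\{t\ge\xi\}}$, using that $\Delta\wt Y_\xi\ge 0$ so that subtracting the nondecreasing process $\Delta\wt Y_\xi 1_{[\xi,T]}$ from the supermartingale $\wt Y^\xi$ preserves the supermartingale property; your choice $\wt Z=\wt Y^\sigma\wedge\eps$ achieves the same bounded-supermartingale target by a different (and equally valid) truncation. The final passage from $\wt Z$ to $\wt Y$ is handled identically in both: the difference lives on $\{\sigma\le T\}$, which has probability $\le\eps$, and $d_{\bbs}$ of any semimartingale vanishing off that event is at most $\eps$.
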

\begin{proof}
{\em Step 1:} We show assertion~(a) for the discrete time Doob(-Meyer) decomposition along a finite deterministic grid, w.l.o.g. $0,1,\ldots,T-1,T$ with $T\in\bbn$. 
For the first moment of $A_T$, we have $E(A_T)=E(-Y_T) \le \eps_2 + \eps_3$. Following the proof of Meyer~\cite[Theorem II.45]{meyer.1972}, we use this for an estimate of the second moment: 
\beao
E(A^2_T) & \le & 2 E(\sum_{s=1}^T\sum_{t=s}^T (A_s-A_{s-1})(A_t-A_{t-1})) = 2 \sum_{s=1}^T E((A_s-A_{s-1})(A_T-A_{s-1}))\\
 & = & 2 \sum_{s=1}^T E((A_s-A_{s-1})E(A_T-A_{s-1} \ |\ \mathcal{F}_{s-1}))\\
 & = & \sum_{s=1}^T E((A_s-A_{s-1})E(Y_T-Y_{s-1} \ |\ \mathcal{F}_{s-1}))\\
 & \le & \sum_{s=1}^T E((A_s-A_{s-1}) (\eps_1+1)) = (\eps_1+1)E(A_T)\le (\eps_1+1)(\eps_2+\eps_3),
 \eeao
where for the second equation it is used that $A_s-A_{s-1}$ is $\mathcal{F}_{s-1}$-measurable.
This yields $E(M^2_T)\le E((Y^-_T)^2) + E((A_T+\eps_1)^2) \le \eps_2^2 + \eps_3 + (\eps_1+1)(\eps_2+\eps_3) +2(\eps_2 + \eps_3)\eps_1 + \eps_1^2$.
 
{\em Step 2:} The continuous time extension follows by an inspection of the proof of Beiglb\"ock, Schachermayer, and Veliyev~\cite[Theorem~1.1]{beiglbock.2012} in which the mesh of the grid in Step~1 tends to zero. 
The arguments are easier than in the original proof since we know from the estimate in Step~1 that the sequence of terminal values of the discrete time martingales is $L^2(P)$-bounded.
Thus, it is sufficient to apply the Koml\'os theorem for Hilbert spaces, and one obtains $L^2$-convergence to the terminal value of the continuous time martingale part
(cf. \cite[Equation (6)]{beiglbock.2012}).
This implies $L^2$-convergence of the terminal values of the drift parts, and the estimates from Step~1 also hold for the continuous time Doob-Meyer decomposition. This yields (a).

{\em Step 3:} Let  $\xi:=\inf\{t>0 : \wt{Y}_t > \eps\}$. 
Consider the (pre-)stopped process~$Y_t:= \wt{Y}_t 1_{(t<\xi)} 
+ \wt{Y}_{\xi-} 1_{(t\ge \xi)}$. Since $\wt{Y}$ is a supermartingale and 
$\Delta \wt{Y}_{\xi}\ge 0$, $Y$ has to be a supermartingale as well. 
The process~$Y$ is bounded from above by $\eps$ and we have that $P(Y_T<-\eps)\le P(\sup_{t\in[0,T]}|\wt{Y}_t| > \eps)\le \eps$. 
This allows us to apply part~(a) with $\eps_1=\eps_2=\eps_3=\eps$ to $Y$, and we obtain for its Doob-Meyer decomposition that
$E(M^2_T)\le 3\eps + 8\eps^2$ and $E(A^2_T)\le 2\eps+2\eps^2$. By the corollary to Protter~\cite[Theorem~24 of Chapter~IV]{protter2005stochastic}
we have $\sup_{H\in\bP, ||H||_\infty\le 1}\sqrt{E(\sup_{t\in[0,T]}|H\mal Y_t|^2)} \le 3\sqrt{E(M^2_T)}+3\sqrt{E(A^2_T)}$, which implies that
$d_{\bbs}(Y,0)\le \sqrt{27\eps + 72\eps^2} + \sqrt{18\eps + 18\eps^2}$. 
Since the paths of $\wt{Y}$ and $Y$
coincide at least with probability~$1-\eps$, we have that $d_{\bbs}(\wt{Y},Y)\le\eps$, and (b) follows with the triangle inequality of $d_{\bbs}$.
\end{proof}

\end{appendix}

\end{document}